\def\dOi{10(3:4)2014}
\begin{document}
\newcommand{\paragrafik}[1]{\vspace{0.2cm} \noindent {\bf #1.}}

\newcommand{\biseq}{\sim}
\newcommand{\Franosp}{Fra{\"i}ss{\'e}}
\newcommand{\Fra}{\Franosp{} }
\newcommand{\fra}{\Franosp{} }

\newcommand{\D}{\mathbb{D}}
\renewcommand{\P}{\mathcal{P}}
\newcommand{\N}{\mathbb{N}}
\newcommand{\Q}{\mathbb{Q}}
\newcommand{\Z}{\mathbb{Z}}
\newcommand{\permgrp}{\mathop{\mathrm{Sym}}}
\newcommand{\Set}{\textbf{Set}}
\newcommand{\GSet}{G\textbf{-Set}}
\newcommand{\GNom}{G\textbf{-Nom}}
\newcommand{\GSetI}{G\textbf{-Set\textsuperscript{1}}}
\newcommand{\GNomI}{G\textbf{-Nom\textsuperscript{1}}}

\renewcommand{\choose}[2]{
  \bigl(\begin{smallmatrix}
  #1 \\ #2
  \end{smallmatrix} \bigr)}
  
\newcommand{\sem}[1]{[\![#1]\!]}
\newcommand{\cossem}[1]{\sem{#1}^{\mathsf{c}}}
\newcommand{\supgsem}[1]{\sem{#1}^{\mathsf{e}}}
\newcommand{\supssem}[1]{\sem{#1}^{\mathsf{ec}}}
\newcommand{\fsem}[1]{\sem{#1}}

\newcommand{\absclass}[2]{[#2]_{#1}}

\newcommand{\struct}{\mathfrak A}
\newcommand{\structA}{\mathfrak A}
\newcommand{\structB}{\mathfrak B}
\newcommand{\structC}{\mathfrak C}
\newcommand{\structD}{\mathfrak D}
\newcommand{\structM}{\mathfrak{M}}
\newcommand{\structN}{\mathfrak{N}}
\newcommand{\KK}{\mathcal{K}}
\newcommand{\U}{{\mathfrak U}_{\KK}}
\newcommand{\GU}{G_{\KK}}
\newcommand{\DU}{{\mathbb D}_\KK}
\newcommand{\autgrp}{\mathop{\mathrm{Aut}}}
\newcommand{\isoeq}{\equiv_{S,T}}
\newcommand{\KRepr}{\KK\textbf{-Repr}}

\newcommand{\emptyword}{\varepsilon}
\newcommand{\GDFA}{$G$-DFA}
\newcommand{\GUDFA}{$\GU$-DFA}
\newcommand{\GNFA}{$G$-NFA}

\newcommand{\removed}[1]{}


\title[Automata theory in nominal sets]{Automata theory in nominal sets\rsuper*}

\author[M.~Boja\'nczyk]{Miko\l{}aj Boja\'nczyk}	
\address{University of Warsaw}	
\email{\{bojan, klin, sl\}@mimuw.edu.pl}  

\author[B.~Klin]{Bartek Klin}	
\address{\vspace{-18 pt}}	

\author[S.~Lasota]{S{\l}awomir Lasota}	
\address{\vspace{-18 pt}}	



\keywords{Nominal sets, automata}
\titlecomment{{\lsuper*}An extended abstract of this paper appeared in the proceedings of LICS'11~\cite{BKL11}.}
\titlecomment{This work is supported by the Polish National Science Centre (NCN) grant 2012/07/B/ST6/01497 (the last two authors)
and by the ERC Starting Grant \emph{Sosna} (the first author).}


\begin{abstract}
  \noindent
  We study languages over infinite alphabets equipped with some structure that can be tested by recognizing automata. We develop a framework for studying such alphabets and the ensuing automata theory, where the key role is played by an automorphism group of the alphabet. In the process, we generalize nominal sets due to Gabbay and Pitts.
  \end{abstract}

\maketitle

\tableofcontents

\section{Introduction}
We study languages and automata over infinite alphabets. Each alphabet comes with some structure that can be accessed by recognizing devices such as automata. Examples of such structures include:
\begin{itemize}
	\item {\it Equality.} There is an infinite set $\D$ whose elements are called \emph{data values}. Words are elements of $\D^*$, or in some cases $(\Sigma \times \D)^*$, for some finite set $\Sigma$. There is no structure on the data values except for equality. A typical language is 
		\begin{align*}
			\set{d_1 \cdots d_n \in \D^* : d_{i+1} \neq  d_{i} \mbox{ for all $i \in \set{1,\ldots,n-1}$}}.
		\end{align*}
		
	\item {\it Total order.} The set of data values is equipped with a total order.  A typical language is
		\begin{align*}
			\set{d_1 \cdots d_n \in \D^* : d_{i+1} > d_{i} \mbox{ for all $i \in \set{1,\ldots,n-1}$}}.
		\end{align*}		
	\end{itemize}
One could also consider data values equipped with a graph structure (where, e.g., the language of finite paths can be considered), a partial order etc. 

Note that the above descriptions do not determine the data values uniquely. 
One of the themes in this paper is the use of ``universal'' alphabets to obtain well-behaved notions of automata.

A device can only access data values through the given structure (e.g.~the equality or order relation).  For instance, in the case of data values with equality,  an automaton that accepts a two-letter word $de$ with $d\neq e$, will also necessarily accept the word $de'$ for any $e' \neq d$. 

The notion of structure on an alphabet is naturally captured by the group of its automorphisms. For example, in the case of unordered data values, the group consists of all bijections on $\D$. In the case of totally ordered data values, it is the group of all monotone bijections on $\D$.

In general, we work with a set of data values $\D$, together with a group $G$ of bijections of $\D$, which need not be the group of all bijections of $\D$. Such a pair $(\D,G)$ is called a \emph{data symmetry}. We then study sets $X$ which are acted upon by the group $G$.  A key example is the set $X=\D^*$, where $G$ acts separately on each letter. As far as languages are concerned, we work with languages $L \subseteq \D^*$ that are closed under actions of the group $G$.

\subsection{Contribution}
We now outline the main contributions of this paper.

\paragrafik{Nominal sets for arbitrary symmetries}
When working with a data symmetry $(\D,G)$ and a set $X$ with an action of $G$, we pay attention to the interplay between the canonical action of $G$ on $\D$ and the action of $G$ on $X$. An example of this interplay is the definition of a nominal set. A set $X$ is called  nominal wrt.~the symmetry if for every $x \in X$ there exists a \emph{finite} set of data values $C \subseteq \D$,  called a \emph{support of $x$}, such  that every $\pi \in G$ satisfies
\begin{align*}
	\forall{c \in C}.\ \pi(c)=c  \qquad \Rightarrow \qquad x\cdot\pi=x.
\end{align*}
The left side of this implication uses the canonical action of $\pi$ on $\D$, and the right side uses an action of $\pi$ on $X$. The intuition is that $x$ depends only on data values from $C$.

An example of a nominal set is $\D^*$, regardless of $G$: a support of a word  can be chosen as the set of letters that appear in the word. In the case of data values with equality, where $\D$ is a countably infinite set and $G$ is the group of all bijections on $\D$, the theory of nominal sets was developed by Gabbay and Pitts~\cite{GP02,pitts-book}. One of the contributions of this paper is a concept of nominal sets in different symmetries. 

\paragrafik{Automata theory in arbitrary symmetries} We study the theory of automata in various symmetries. For basic definitions of automata and languages, we transfer classical definitions to the world of nominal sets.
A crucial aspect here is an appropriate choice of the notion of 'finiteness'. As far as nominal sets are
considered, the appropriate notion is \emph{orbit finiteness}. Thus the abstract definitions we work with are just the classical definitions,
in which the requirement of finiteness (of alphabet, state space, etc.) is relaxed to orbit finiteness. 


It turns out that, in the cases of unordered and ordered data values,  the abstract definitions are expressively equivalent with existing definitions of finite memory automata~\cite{FK94,DL09} and register automata over totally ordered data~\cite{BLP10,FHL10}. 
Some minor adjustments to  finite memory automata  are needed; in fact, they help to make the automaton model robust. For instance, independently of the data symmetry, our models admit minimization of deterministic automata.
As one of our contributions, we provide an infinite-alphabet counterpart of the Myhill-Nerode theorem, thus concluding previous work on this theme~\cite{FK03,BLP10}.

\paragrafik{Effective representation} Our framework can be applied far beyond the theory of deterministic automata. 
We introduce a method of representing orbit finite nominal sets, together with relations and functions on them. We prove that an effective representation is possible in any symmetry of a certain form. 
As a result we obtain a toolkit which may be used to define and study nominal
nondeterministic or alternating automata, context-free grammars, pushdown automata, Petri nets, Turing machines or
many other natural models of computation.


\subsection{Background}
We briefly overview some related work on nominal sets in the context of automata theory.

\paragrafik{Nominal sets}
The theory of nominal sets originates from the work of Fraenkel in 1922, further developed by Mostowski in the 1930s. At that time, nominal sets were used to prove independence of the axiom of choice and other axioms. In Computer Science, they have been rediscovered by Gabbay and Pitts in~\cite{GP02}, as an elegant formalism for modeling name binding. Since then, nominal sets have become a lively topic in semantics; see~\cite{pitts-book} for a recent comprehensive study. They were also independently rediscovered by the concurrency community, as a basis for syntax-free models of name-passing process calculi, see~\cite{P99,MP05}.

\paragrafik{Automata for infinite alphabets}
Languages over infinite alphabets are a lively topic in the automata community. Two principal sources of motivation are XML and verification.  An XML document is often  modeled as a  tree with labels from the (infinite) set of all Unicode strings that can appear as attribute values. In software verification, the infinite alphabet can refer to pointers or function parameters.

Many automata models have been developed for infinite alphabets, including: finite memory automata~\cite{FK94}, automata for ordered data values~\cite{BLP10}, two-way automata and automata with pebbles~\cite{DBLP:conf/mfcs/NevenSV01}, alternating register automata~\cite{DL09}, data automata~\cite{DBLP:conf/lics/BojanczykMSSD06}, etc. See~\cite{DBLP:conf/csl/Segoufin06} for a survey.
There is no consensus as to which one is the ``real'' analogue of regular languages in the case of infinite alphabets. This question is a topic of debate, see e.g.~\cite{DBLP:conf/mfcs/NevenSV01} or~\cite{DBLP:journals/tcs/BjorklundS10}.

\paragrafik{Nominal sets and HD-automata} 
Nominal sets, studied until now in the case of unordered data values, 
are a convenient tool for capturing name generation and binding.
They were introduced by Gabbay and Pitts~\cite{GP02} as a  mathematical model of name-binding and $\alpha$-conversion. 

A fruitful line of research starting from~\cite{P99} (see also~\cite{MP05} for an overview) uses a category equivalent to nominal sets for defining history-dependent (HD) automata, a syntax-free model of process calculi that create and pass names, like $\pi$-calculus. These are closely related to the notions of automata studied here. In fact, our representation of nominal sets, and consequently our notions of automata, are inspired by, and generalize, similar results for Gabbay-Pitts nominal sets as developed in~\cite{gadducci-etal,staton-thesis}. An initial connection between HD-automata and finite memory automata was made in~\cite{ciancia-tuosto}.


\paragrafik{Data monoids} One of us used group actions in formal language theory for infinite alphabets in~\cite{datamonoids}, which is the closest relation to our current work. That paper already includes: a group action of bijections of data values on languages, a central role of  finite supports, Myhill-Nerode congruence in the monoid setting. However, the main focus of~\cite{datamonoids} is the development of a monoid theory, including Green's relations and an effective characterization of first-order definable word languages. The present paper has a more fundamental approach. In particular we study: the connection with the literature on nominal sets,  different kinds of alphabets,  algorithms and methods of representing sets.

\subsection{Structure of the paper}

The remainder of this paper is divided in two parts. The first part, comprising
Sections~\ref{sec:gsets}--\ref{sec:others}, is about nominal sets in an arbitrary 
data symmetry and the basics of automata theory developed in orbit finite nominal sets, in place of 
finite classical sets. 
In the last two sections we briefly venture beyond orbit regular languages: we define context-free 
nominal languages and pushdown automata, prove them equivalent, and discuss possible 
further work and other models of computation that can be expressed in nominal sets.
The second part of the paper, spanning Sections~\ref{sec:gset-repr} to~\ref{sec:Fraisseautomata}, introduces finite representations
for orbit finite nominal sets, with an application to deterministic automata. 

One can also view this paper as an interleaving of two main threads. The first one comprises 
Sections~\ref{sec:gsets}, \ref{sec:nominals} and \ref{sec:gset-repr}-\ref{sec:fraisse}. 
In this thread, we study nominal sets for arbitrary symmetries and
prove finite representation theorems for orbit finite nominal sets, without reference to automata theory except as a source of examples. Under progressively stronger assumptions on the symmetries involved, we are able to obtain more concrete representations, culminating in the notion of a well-behaved \Fra symmetry in Section~\ref{sec:fraisse}.

The second thread is the development of rudiments of automata theory in nominal sets, which is done is 
Sections~\ref{sec:gautomata}, \ref{sec:nominalgautomata}-\ref{sec:others} and~\ref{sec:Fraisseautomata}. There,  we
define the notion of nondeterministic finite automaton in nominal sets,
prove the Myhill-Nerode theorem for deterministic automata,
relate our notion to finite memory automata of Kaminski and Francez~\cite{FK94,DL09},
and finally apply finite representation theorem for orbit finite automata in Section~\ref{sec:Fraisseautomata}. 

This paper is an extended and revised version of~\cite{BKL11}. We are grateful to Thomas Colcombet for suggesting that we use \Fra limits, and to Tomasz Wysocki for noticing Lemma 10.8(3).

\part{Nominal sets and automata}

\section{Group actions and data symmetries} \label{sec:gsets}

\paragrafik{Group actions}
A (right) action of a group $G$ on a set $X$ is a function $\cdot\,: X\times G\to X$,
written infix, subject to axioms
\[
	x\cdot e = x \qquad\qquad x\cdot(\pi\sigma) = (x\cdot\pi)\cdot\sigma
\]
for $x\in X$ and $\pi,\sigma\in G$, where $e$ is the neutral element of $G$. A set equipped with such an action is called a {\em $G$-set}.

\begin{exa}
Any set $X$ is a $G$-set with a trivial action defined by $x\cdot\pi=x$. The set $G$ can be seen as a $G$-set either with the composition action ($\pi\cdot\sigma=\pi\sigma$) or with the conjugacy action ($\pi\cdot\sigma=\sigma^{-1}\pi\sigma$). For any $G$-sets $X,Y$, the Cartesian product $X\times Y$ and the disjoint union $X+Y$ are $G$-sets with actions defined point-wise and by cases, respectively. 
\end{exa}

For further examples, we introduce the following:

\begin{defi}\label{def:symmetry}
A {\em data symmetry} $(\D,G)$ is a set $\D$ of {\em data}, together with a subgroup $G\leq\permgrp(\D)$ of the symmetric group on $\D$, i.e., the group of all bijections of $\D$.
\end{defi}

\begin{exa}\label{ex:symmetries}
We give names to a few important symmetries:
\begin{itemize}
\item the {\em classical symmetry}, where $\D=\emptyset$ and $G$ is the trivial group,
\item the {\em equality symmetry}, where $\D$ is a countably infinite set, say the natural numbers, and $G=\permgrp(\D)$ is the group of all bijections of $\D$,
\item the {\em total order symmetry}, where $\D=\Q$ is the set of rational numbers, and $G$ is the group of monotone bijections,
\item the {\em integer symmetry}, where $\D=\Z$ is the set of integers, and $G$ is the group of translations $i\mapsto i+c$, isomorphic to the additive group of integers. We shall use this symmetry as a source of pathological counterexamples.
\end{itemize}
\end{exa}

\begin{exa}\label{ex:gsets}
For any data symmetry $(\D,G)$, a simple example of a $G$-set is the set  $\D$ itself, with the action defined by $d\cdot\pi = \pi(d)$. The action of $G$ on $\D$  extends pointwise to actions of $G$ on tuples $\D^n$, words $\D^*$, infinite words $\D^\omega$, or sets $\P(\D)$.

Other interesting $G$-sets include 
\begin{align*}
	\D^{(n)} &= \{(d_1,\ldots,d_n) : d_i\neq d_j \mbox{ for } i\neq j\}, \\
	\choose{\D}{n} &= \{C\subseteq \D : |C|=n\},
\end{align*}
with G-actions inherited from $\D$. For a subset $C \subseteq \D$, there is a $G$-set
\[
\D^{(C)} = \set { \pi|C \ : \ \pi \in G }.
\]
In other words, this is the set of all injective functions from $C$ to $\D$ that extend to some permutation from $G$.
The action is by composition:
\[
(\pi | C) \cdot \rho = (\pi \rho) | C.
\]

For the total order symmetry, one may also consider e.g.
\[
	\D^{(<n)} = \{(d_1,\ldots,d_n) : d_i< d_{i+1} \mbox{ for } 1\leq i<n \},
\]
and for the integer symmetry,
\[
	\Z_n = \{0,1,\ldots,n-1\}
\]
with action $k\cdot m = (k+m)\mod n$.
\end{exa}

\paragrafik{Orbits}
For any $x$ in a $G$-set $X$, the set 
\begin{align*}
	x\cdot G = \{x\cdot\pi\mid \pi\in G\}\subseteq X
\end{align*}
 is called the {\em orbit} of $x$. Any $G$-set is partitioned into orbits in a unique way. We will mostly be interested in {\em orbit finite} sets, i.e., those that have a finite number of orbits. In the world of $G$-sets these play the role of finite sets.
 
In group-theoretic literature, $G$-sets with only one orbit are called {\em transitive}. We prefer to call them simply {\em single-orbit} sets.

\begin{exa} \label{ex:orbits}
In the equality symmetry, elements of the powerset $\P(\D)$ are in the same orbit if and only if they have the same cardinality. As a result, $\P(\D)$ is not orbit finite.

In the equality symmetry, the set $\D^2$ has two orbits: 
\[
	\set{(d,d) : d\in\D} \qquad \set{(d,e) : d\neq e\in \D}
\]
In the total order symmetry, $\D^2$ has three orbits:
\[
	\set{(d,d) : d\in\D} \qquad \set{(d,e) : d<e\in \D} \qquad \set{(d,e) : e<d\in \D}
\]
In the integer symmetry, $\D^2$ is not orbit finite. Indeed, for any $y\in\D$, the set
\[
	\set{(x,x+y) : x\in\D}
\]
is a separate orbit.

In any symmetry, the set $\D^C$ has one orbit.
\end{exa}

\paragrafik{Equivariant relations and functions}
Suppose that $X$ is a $G$-set. A subset $Y \subseteq X$ is called \emph{equivariant} if it is preserved under group actions, i.e.~$Y \cdot \pi = Y$ holds for every $\pi \in G$. In other words, $Y$ is a union of orbits in $X$. This definition extends to the notion of an \emph{equivariant relation} $R \subseteq X \times Y$, by using the action of $G$ on the Cartesian product, or to relations of greater arity, by using the point-wise action of $G$.  
In the special case when $R \subseteq X \times Y$ is a function $f$, this definition says that 
\[
	 f(x\cdot\pi) = f(x)\cdot\pi \qquad \mbox{ for } x\in X,\ \pi\in G,
\]
where the action on the left is taken in $X$ and on the right in $Y$. The identity function on any $G$-set is equivariant, and the composition of two equivariant functions is again equivariant, therefore for any group $G$,  $G$-sets and equivariant functions form a category, called  $\GSet$.

If a singleton subset $\set{x}$ of a $G$-set is equivariant, we say that $x$ is an equivariant element of the $G$-set. In other words, an equivariant element is one that is preserved under the action of every element from $G$. Again in other words, an equivariant element is one that has a singleton orbit under the action of $G$.

\begin{exa}
In the equality symmetry, the only equivariant function from $\D$ to $\D$ is the identity; there are exactly two equivariant functions from $\D^2$ to $\D$ (the projections), and exactly one from $\D$ to $\D^2$ (the diagonal function $d\mapsto (d,d)$). Also, the mapping $(d,e)\mapsto\{d,e\}$ is the only equivariant function from $\D^{(2)}$ to $\choose{\D}{2}$. 

There is no equivariant function from $\choose{\D}{2}$ to $\D^{(2)}$. To see this, first note that if an equivariant function maps $\{d,e\}$ to $(b,c)$ then $b,c\in\{d,e\}$. Indeed if, say, $b\not\in\{d,e\}$ then the permutation $(b\ b')$ that swaps $b$ with a fresh $b'$, leaves $\{d,e\}$ intact in $\choose{\D}{2}$ but changes $(b,c)$ into $(b',c)$ in $\D^{(2)}$. Now, for any $d, e \in \D$, assume that a function maps $\{d,e\}$ to $(d,e)$ (the case of $(d,d)$ is similar). The uniquely induced equivariant relation
\[
\set{ (\set{d,e} \cdot \pi, (d,e) \cdot \pi) : \pi \in G}
\]
is not a function, since the permutation $(d\ e)$ that swaps $d$ and $e$ leaves $\{d,e\}$ intact in $\choose{\D}{2}$, but changes $(d,e)$ into $(e,d)$ in $\D^{(2)}$.
\end{exa}

\paragrafik{Languages} The classical notion of a language directly generalizes to the world of $G$-sets. An \emph{alphabet} is any orbit finite $G$-set $A$. Examples of alphabets in the symmetries mentioned so far include the set of data values $\D$, any finite set $\Sigma$, or a product $\Sigma \times \D$ where $\Sigma$ is finite.  When $A$ is an alphabet, the set of strings $A^*$ is treated as a $G$-set, with the point-wise action of $G$. A  $G$-language is any equivariant subset  $L \subseteq A^*$.

\begin{exa}
In the examples below assume $A = \D$. In the equality symmetry, exemplary $G$-languages are:
\[
\bigcup_{d \in \D} d \cdot \D^* \cdot d
\qquad \qquad
\bigcup_{d, e\in \D} (d \, e)^*
\qquad \qquad
\set{d_1 \ldots d_n : n \geq 0, d_i \neq d_j \text{ for } i \neq j}
\]
or palindromes over $\D$.
In the total order symmetry, all monotonic words
\[
\set{d_1 \ldots d_n : n \geq 0, d_1 < \ldots < d_n }
\]
is a $G$-language.
\end{exa}

\newcommand{\orbits}[2]{\mathit{orbits}_{#1}(#2)}

\section{G-automata} \label{sec:gautomata}

The notion of $G$-automaton, to be introduced now, is an obvious generalization of classical automata to $G$-sets.
The definition is exactly like the classical one, except that 
\begin{itemize}
\item the notion of finiteness is relaxed: \emph{orbit finite} sets are considered instead of \emph{finite} ones, and 
\item the components of the automaton, such as the initial and accepting states, or the transition relation, are required to be equivariant.
\end{itemize}
Our main observation in this section is that the Myhill-Nerode theorem may be lifted to the general setting of $G$-automata.
This is the first step in the program that we develop later in Sections~\ref{sec:nominalgautomata}--\ref{sec:others}.

For the rest of this section we fix some data symmetry $(\D,G)$.

	\begin{defi}\label{def:nondetGautomata}
		A \emph{nondeterministic  $G$-automaton} consists of 
		\begin{itemize}
			\item an orbit finite $G$-set $A$, called the input alphabet,
			\item a $G$-set $Q$, the set of states,
			\item equivariant subsets $I,F \subseteq Q$ of initial and accepting states,
			\item an equivariant transition relation
			\begin{align*}
				\delta \subseteq Q \times A \times Q.
			\end{align*}
		\end{itemize}
		We say that the automaton is orbit finite if the set of states $Q$ is so.
	\end{defi}
To define acceptance, we extend the single-step transition relation $\delta$ to the multi-step relation
\[
\delta^* \subseteq Q \times A^* \times Q
\]
in the usual way. A word $w \in A^*$ is accepted by an automaton if 
$(q_I, w, q_F) \in \delta^*$ for some initial state $q_I$ and accepting state $q_F$.
Note that $\delta^*$ is equivariant, similarly as $I$ and $F$, and thus the set of words accepted by a $G$-automaton is a $G$-language.

\subsection{Deterministic $G$-automata} \label{sec:detgautomata}

From now on, unless stated otherwise, we only consider \emph{deterministic $G$-automata}, the special case of a nondeterministic ones
where the transition relation is a function
\begin{align*}
	\delta:  Q \times A \to Q,
\end{align*}
and where the set of initial states is a singleton $\set{q_I}$.
A deterministic $G$-automaton is called \emph{reachable} if every state is equal to  $\delta^*(q_I, w)$ for some $w \in A^*$.

\begin{exa}
\label{ex:lang} In this example assume the equality symmetry $G = \permgrp(\D)$.
	We describe a deterministic $G$-automaton recognizing the language
	\begin{align*}
		\set{def : f \in \set{d,e}}.
	\end{align*}
	 Its states are $\bot,\top$, as well as tuples of data values of size at most two:
	\begin{align*}
		Q = \set{\top,\bot,\epsilon}\cup \D \cup \D^2.
	\end{align*}
	The state space $Q$ has six orbits: three singleton orbits
	\begin{align*}
		\set{\bot},\ \set{\top},\ \set{\epsilon},
	\end{align*}
	and three infinite orbits
	\begin{align*}
		\set{d : d \in \D},\ \set{(d,d) : d \in D},\ \set{(d,e) : d \neq e \in \D}.
	\end{align*}
	with the obvious pointwise action of $G$ as in Example~\ref{ex:gsets}.
	
The idea is that the automaton, when reading the first two letters of its input, simply stores them in its state. Then, after the third letter, it has state $\top$ or $\bot$ depending on whether its input belongs to $L$ or not. Formally, the transition function $\delta:Q\times\D\to Q$ is defined by cases:
\begin{align*}
	\delta(\epsilon,d) &= d \\
	\delta(d,e) &= (d,e) \\
	\delta((d,e),f) &= \left\{\begin{array}{rl}
					\top & \mbox{if } f\in\{d,e\} \\
					\bot & \mbox{otherwise}
				  \end{array}\right. \\
	\delta(\top,d) &= \delta(\bot,d) = \bot
\end{align*}
This function is easily seen to be equivariant. The only accepting state is $\top$, and $\epsilon$ is the initial one.
	
\end{exa}

\begin{exa}
\label{ex:syntaut}
	Consider the same group $G$ and the same language as in the previous example.  We describe a different automaton for the language. Its states are $\bot,\top$, as well as nonempty {\em sets} of data values of size at most two:
\[
	Y = \set{\top,\bot,\epsilon}\cup \D \cup \choose{\D}{1} \cup 
            \choose{\D}{2}.
\]	
(In the above, $\choose {\D} k$ refers to subsets of $\D$ that have size exactly $k$.)
	  One can give an equivariant transition function on these states by analogy to the above example, so that the resulting automaton recognizes the same language. The idea is that a state  $d \in \D$ represents a word $d$ of one letter, and a state  $\set d \in \choose{\D} 1$  represents a word $dd$ of two letters, where the letters happen to be equal. Compared to the automaton from the previous example, the change is that instead of the orbit 
	\[
	O_1 = \set{(d,e) : d \neq e \in \D}
	\]
	we have an orbit
	\[
	O_2 = \set{\set{d,e}: d \neq e \in \D}.
	\]
	In particular, both automata have six orbits of states. However, the new automaton is smaller in the following sense: there is an equivariant surjection from $O_1$ to $O_2$, but there is no equivariant function from $O_2$ to $O_1$. Intuitively, the new automaton is more abstract in that it ignores the order of the two data stored in memory.
\end{exa}

\paragrafik{Categorical perspective}
Viewing an element of $Q$ as a function from a singleton set $1=\{\star\}$ to $Q$ and a subset of $Q$ as a function from $Q$ to a two-element set $2$, one can depict an automaton using  a diagram:
\begin{equation}\label{dgm:automaton}
\vcenter{\xymatrix @R=1pc {
& 1\ar[d]^{\iota} \\
Q\times A\ar[r]_{\delta} & Q\ar[r]_{\alpha} & 2.
}}
\end{equation}
In the categorical approach to automata theory (see e.g.~\cite{Adamek:1990:AAC:575450} and references therein), it is standard to define various kinds of sequential automata by instantiating this diagram in suitable categories. In this paper, we study the case of the category  $\GSet$; this amounts to interpreting all objects in~\eqref{dgm:automaton} as $G$-sets and arrows as equivariant functions. We consider the trivial $G$-action on the sets $1$ and $2$. This means that the initial state is a singleton orbit, and the set of accepting states is a union of orbits.

Just as $Q$ and $A$ are typically assumed to be finite sets in the classical case, we will typically require them 
to be orbit finite. 
This again follows from abstract categorical principles, as orbit finite $G$-sets are exactly finitely presentable objects in $\GSet$, just as finite sets are finitely presentable in the category $\Set$ of sets and functions (see e.g.~\cite{AR94} for information on locally finitely presentable categories).


We note, however, that the Cartesian product of two orbit finite $G$-sets is not always orbit finite.
A counterexample, in the integer symmetry, has been provided in Example~\ref{ex:orbits}.
In particular, even if both $A$ and $Q$ are orbit finite, the domain $Q \times A$ of the transition function of a $G$-automaton is not always orbit finite.
This inconvenience will be avoided when we restrict to \Fra symmetries in Section~\ref{sec:fraisse}.

\subsection{Myhill-Nerode Theorem}
The Myhill-Nerode equivalence relation makes sense for any alphabet $A$, including infinite alphabets. That is, we consider two words $w,w' \in A^*$ to be equivalent with respect to a language $L \subseteq A^*$, denoted by $w \equiv_L w'$, if 
\begin{align*}
 	wv \in L \iff w'v \in L \qquad \mbox{for every $v \in A^*$}.
\end{align*}
\begin{lem} \label{lem:MNequivariant}
If $L$ is equivariant then $\equiv_L$ is equivariant too.
\end{lem}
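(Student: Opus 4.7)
The plan is a direct unfolding of definitions, exploiting two facts: concatenation on $A^*$ is $G$-equivariant (since the action on $A^*$ is pointwise, so $(wv)\cdot\pi = (w\cdot\pi)(v\cdot\pi)$ for all $w,v\in A^*$ and $\pi\in G$), and equivariance of $L$ means $u\in L \iff u\cdot\pi\in L$ for every $u\in A^*$ and $\pi\in G$.

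To prove $\equiv_L$ is equivariant, I would fix $\pi\in G$ and $w,w'\in A^*$ with $w\equiv_L w'$, and show $w\cdot\pi \equiv_L w'\cdot\pi$. Given an arbitrary suffix $v\in A^*$, I want
\[
  (w\cdot\pi)\,v \in L \iff (w'\cdot\pi)\,v \in L.
\]
The key trick is to apply $\pi^{-1}$ to both sides. Using equivariance of $L$ together with the fact that $((w\cdot\pi)v)\cdot\pi^{-1} = w\,(v\cdot\pi^{-1})$, I get
\[
  (w\cdot\pi)\,v \in L \iff w\,(v\cdot\pi^{-1}) \in L,
\]
and symmetrically for $w'$. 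Now instantiate the hypothesis $w\equiv_L w'$ at the suffix $v\cdot\pi^{-1}\in A^*$ to conclude $w\,(v\cdot\pi^{-1})\in L \iff w'\,(v\cdot\pi^{-1})\in L$, and chain the three equivalences.

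There is no real obstacle: the argument is a one-line calculation once one notes that the $G$-action commutes with concatenation and that equivariance of $L$ lets us translate any membership question by an arbitrary group element. If anything requires a little care, it is merely to record explicitly that the action of $G$ on $A^*$ distributes over concatenation, so that reindexing the quantifier $\forall v\in A^*$ by $v\mapsto v\cdot\pi^{-1}$ (a bijection of $A^*$) is legitimate.
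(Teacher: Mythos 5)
Your proof is correct and follows essentially the same route as the paper's: act by $\pi^{-1}$, use equivariance of $L$ and of concatenation to reduce $(w\cdot\pi)v\in L$ to $w(v\cdot\pi^{-1})\in L$, and then instantiate the hypothesis $w\equiv_L w'$ at the suffix $v\cdot\pi^{-1}$. Nothing is missing.
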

\begin{proof}
We need to show:
\begin{equation}
	\label{eq:mn-class-to-class}
	w \equiv_L w' \quad\mbox{implies}\quad w \cdot \pi \equiv_L w' \cdot \pi.
\end{equation}
Indeed, to prove the above observation, suppose that $w \equiv_L w'$. By unraveling the definition of $\equiv_L$, we need to show that, for all $v \in A^*$, the following equivalence holds.
\begin{align*}
	(w \cdot \pi) v \in L \iff (w' \cdot \pi) v \in L
\end{align*}
By acting on both sides by $\pi^{-1}$, this is equivalent to 
\begin{align*}
	((w \cdot \pi) v)\cdot \pi^{-1} \in L\cdot \pi^{-1} \iff ((w' \cdot \pi) v)\cdot \pi^{-1} \in L\cdot \pi^{-1} 
\end{align*}
By equivariance of $L$, this is equivalent to
\begin{align*}
	((w \cdot \pi) v)\cdot \pi^{-1} \in L \iff ((w' \cdot \pi) v)\cdot \pi^{-1} \in L
\end{align*}
By equivariance of concatenation in $A^*$, this is equivalent to 
\begin{align*}
	w (v\cdot \pi^{-1}) \in L \iff w' (v\cdot \pi^{-1}) \in L
\end{align*}
The above is implied by $w \equiv_L w'$, which completes the proof of~\eqref{eq:mn-class-to-class}. 
\end{proof}
Below we will use the property that the quotient of a $G$-set by an equivariant equivalence relation has a
natural structure of $G$-set:
\begin{lem} \label{lem:equivariantquotient}
Let $X$ be a $G$-set and let $R \subseteq X \times X$ be an equivalence relation that is equivariant.
Then the quotient $X / R$ is a $G$-set, under the action
\[
	[x]_R \cdot \pi = [x \cdot \pi]_R
\]
of $G$, and the abstraction mapping 
\[
x \mapsto [x]_R \ : \ X \to X/R
\]
is an equivariant function.
\end{lem}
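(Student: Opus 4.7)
The plan is to verify the three pieces: well-definedness of the proposed action, the group action axioms, and equivariance of the abstraction map. The only non-routine step is well-definedness, which is where the equivariance of $R$ enters essentially; the remaining items are formal consequences of the definition.

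First I would check that the formula $[x]_R \cdot \pi = [x \cdot \pi]_R$ does not depend on the choice of representative. Suppose $[x]_R = [y]_R$, i.e.\ $(x,y) \in R$. Since $R \subseteq X \times X$ is equivariant under the pointwise action of $G$, we have $(x,y) \cdot \pi = (x \cdot \pi, y \cdot \pi) \in R$, so $[x \cdot \pi]_R = [y \cdot \pi]_R$. This is the only place where equivariance of $R$ is used, and it is the main (though very short) obstacle.

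Next I would check the two group action axioms on $X/R$. For the neutral element: $[x]_R \cdot e = [x \cdot e]_R = [x]_R$ using the axiom $x \cdot e = x$ on $X$. For composition:
\[
([x]_R \cdot \pi) \cdot \sigma \;=\; [x \cdot \pi]_R \cdot \sigma \;=\; [(x \cdot \pi) \cdot \sigma]_R \;=\; [x \cdot (\pi\sigma)]_R \;=\; [x]_R \cdot (\pi\sigma),
\]
using the corresponding axiom on $X$. So $X/R$ is a $G$-set.

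Finally, equivariance of the abstraction map $q : x \mapsto [x]_R$ is immediate: by definition of the new action, $q(x \cdot \pi) = [x \cdot \pi]_R = [x]_R \cdot \pi = q(x) \cdot \pi$, which is exactly the condition for $q$ to be equivariant.
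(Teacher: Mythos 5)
Your proof is correct and follows exactly the route the paper has in mind — the paper merely states that well-definedness and equivariance are "routinely checked" relying on equivariance of $R$, and you have filled in precisely those routine checks. Nothing is missing and nothing differs in approach.
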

\begin{proof}
Relying on equivariance of $R$, both well-definedness of the action of $G$, as well as
equivariance of the abstraction mapping, are routinely checked.
\end{proof}
As usual, the equivalence $\equiv_L$ is a congruence with respect to appending new letters, i.e.~if $w \equiv_L w'$ then $wa \equiv_L w'a$ holds for every letter $a \in A$.
Thus one can define a transition function on equivalence classes
\begin{align*}
	\delta_L : \quad A^*/\equiv_L \ \ \times \ \  A \quad \to \quad  A^*/\equiv_L
\end{align*}
such that:
\begin{equation}\label{eq:synt-delta-def}
	\delta_L ( [w]_{\equiv_L}, a )=  [w\, a]_{\equiv_L}.
\end{equation}

 If $A$ is a $G$-set and $L$ is a $G$-language then $\equiv_L$ is an equivariant relation on $A^*$. We call it  the \emph{syntactic congruence of $L$}. 

Suppose that $A$ is orbit finite and  $L\subseteq A^*$ is a $G$-language. We define the \emph{syntactic automaton} of $L$ as follows: its states are equivalence classes of $A^*$ under Myhill-Nerode equivalence $\equiv_L$, the transition function is $\delta_L$, its initial state is the equivalence class of the empty word $\emptyword$, and accepting states are equivalence classes of the words in $L$.  

\begin{lem}\label{lemma:syntaut}
	The  syntactic automaton of a $G$-language is a reachable deterministic $G$-automaton.
\end{lem}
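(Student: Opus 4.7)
The plan is to check in sequence that each ingredient of the syntactic automaton meets the requirements in Definition~\ref{def:nondetGautomata}, and that reachability is immediate from the construction. Since the input alphabet $A$ is orbit finite by assumption, the only non-trivial structural claim is that the state set $A^*/\equiv_L$ is a $G$-set and that all remaining components ($\delta_L$, the initial state, and the accepting set) are equivariant. The determinism and the single-orbit character of the initial state will fall out of the construction.

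First I would observe that, by Lemma~\ref{lem:MNequivariant}, the syntactic congruence $\equiv_L$ is an equivariant equivalence on the $G$-set $A^*$, so Lemma~\ref{lem:equivariantquotient} supplies $A^*/\equiv_L$ with a canonical $G$-set structure in which $[w]_{\equiv_L}\cdot\pi=[w\cdot\pi]_{\equiv_L}$. The initial state is $\{[\emptyword]_{\equiv_L}\}$; it is a singleton, and it is equivariant because $\emptyword\cdot\pi=\emptyword$ for every $\pi\in G$, so $[\emptyword]_{\equiv_L}$ is an equivariant element. The accepting set is $F=\{[w]_{\equiv_L}: w\in L\}$; equivariance of $F$ follows from the equivariance of $L$, since $w\in L$ and $\pi\in G$ give $w\cdot\pi\in L$, hence $[w]_{\equiv_L}\cdot\pi=[w\cdot\pi]_{\equiv_L}\in F$.

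Next I would verify that $\delta_L$ is well-defined and equivariant. Well-definedness is the standard congruence argument: if $w\equiv_L w'$ then, for every $v\in A^*$, $wav\in L\iff w'av\in L$, so $wa\equiv_L w'a$; this justifies formula~\eqref{eq:synt-delta-def}. For equivariance, I would compute, using the pointwise action on $A^*/\equiv_L\times A$ and the $G$-set structure from Lemma~\ref{lem:equivariantquotient},
\[
\delta_L([w]_{\equiv_L}\cdot\pi,\ a\cdot\pi)=\delta_L([w\cdot\pi]_{\equiv_L},a\cdot\pi)=[(w\cdot\pi)(a\cdot\pi)]_{\equiv_L}=[(wa)\cdot\pi]_{\equiv_L}=\delta_L([w]_{\equiv_L},a)\cdot\pi.
\]

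Finally, reachability is tautological: by an easy induction on the length of $v\in A^*$ using~\eqref{eq:synt-delta-def}, one gets $\delta_L^{*}([\emptyword]_{\equiv_L},v)=[v]_{\equiv_L}$, so every state of the syntactic automaton is reached by some word. I do not anticipate a real obstacle here; the one place that needs some care is to invoke Lemma~\ref{lem:equivariantquotient} explicitly to justify the $G$-action on equivalence classes before writing down the computations for $\delta_L$ and $F$, since the equivariance of all other components is phrased in terms of this action.
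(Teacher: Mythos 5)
Your proposal is correct and follows essentially the same route as the paper: invoke Lemma~\ref{lem:MNequivariant} and Lemma~\ref{lem:equivariantquotient} to endow $A^*/\equiv_L$ with a $G$-action, then check equivariance of the initial state, the accepting set, and $\delta_L$ by the same direct computations. The only differences are cosmetic — you spell out the well-definedness of $\delta_L$ and the reachability induction, which the paper handles in the surrounding text or leaves implicit.
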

\proof
Note that we do not claim the syntactic automaton to be orbit finite.

By Lemma~\ref{lem:MNequivariant} the congruence $\equiv_L$ is equivariant, and thus 
Lemma~\ref{lem:equivariantquotient} applies. Thus we can
define an action of $G$ on equivalence classes of $\equiv_L$ by
\begin{equation}
	\label{eq:mn-definition}
	[w]_{\equiv_L} \cdot \pi = [w \cdot \pi]_{\equiv_L}.
\end{equation}	
So far, we have defined the structure of a $G$-set on the state space of the syntactic automaton. 
%
To complete the proof of the lemma, we need to show that the various components of the syntactic automaton are equivariant.
It is easy to see that  the initial state is a singleton orbit:
\begin{align*}
	[\epsilon]_{\equiv_L}\cdot \pi  \stackrel  {\eqref{eq:mn-definition}}= [\epsilon \cdot \pi]_{\equiv_L} = [\epsilon]_{\equiv_L}.
\end{align*}
By equivariance of $L$, the set of final states is also equivariant:
\begin{align*}
	[w]_{\equiv_L} \in F \iff w \in L \iff w \cdot \pi \in L \iff [w \cdot \pi]_{\equiv_L} \in F \iff [w]_{\equiv_L}  \cdot \pi \in F.
\end{align*}
Finally, the transition function in the syntactic automaton is equivariant:
\begin{align*}
	\delta_L([w]_{\equiv_L},a)\cdot \pi \stackrel
              {\eqref{eq:synt-delta-def}}= [w\cdot a]_{\equiv_L}\cdot
              \pi \stackrel  {\eqref{eq:mn-definition}}= [(w \cdot
                \pi)\cdot (a \cdot \pi)]_{\equiv_L} \stackrel
                  {\eqref{eq:synt-delta-def}}= \delta_L([w]_{\equiv_L}
                  \cdot \pi,a \cdot \pi).\rlap{\hbox to 30 pt{\hfill\qEd}}
\end{align*}

	For the language in Example~\ref{ex:lang}, the syntactic automaton is the one in Example~\ref{ex:syntaut}, and not the one in Example~\ref{ex:lang}.

\paragrafik{Homomorphisms of automata} 
%
Suppose that we have two deterministic $G$-automata
\begin{align*}
	\Aa = (Q,A,q_I,F,\delta) \qquad \Aa' = (Q',A,q'_I,F',\delta')
\end{align*}
over the same input alphabet $A$. An equivariant function
\begin{align*}
	f : Q \to Q'
\end{align*}
is called an automaton homomorphism if it maps $q_I$ to $q'_I$, maps $F$ to $F'$:
\[
	q\in F \mbox{ iff } f(q)\in F' \qquad \mbox{for every }q\in Q,
\]
and commutes with the transition functions $\delta$ and $\delta'$:
\begin{align*}
	f(\delta(q,a))=\delta'(f(q),a) \qquad \mbox{for every }q \in Q \mbox{ and }a \in A.
\end{align*}
It is easy to see that two automata related by a homomorphism recognize the same language.
If there is a surjective homomorphism from $\Aa$ to $\Aa'$ then we call $\Aa'$ a homomorphic image of $\Aa$.

\paragrafik{Myhill-Nerode theorem}
In Theorem~\ref{thm:MN} below we state an abstract counterpart of the Myhill-Nerode theorem for infinite alphabets.
The proof relies on Lemma~\ref{lemma:syntaut} and on the following fact:
\begin{lem}\label{lemma:homoimage}
	Let $L$ be a $G$-language. The syntactic automaton of $L$ is a homomorphic image of any reachable deterministic $G$-automaton that recognizes $L$.
\end{lem}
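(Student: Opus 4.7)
The plan is to define, for any reachable deterministic $G$-automaton $\Aa = (Q, A, q_I, F, \delta)$ recognizing $L$, the obvious candidate homomorphism $f : Q \to A^*/{\equiv_L}$ by
\[
	f(q) = [w]_{\equiv_L} \quad\text{whenever}\quad q = \delta^*(q_I, w),
\]
and then to verify in turn that $f$ is well-defined, surjective, an automaton homomorphism, and equivariant. Reachability of $\Aa$ guarantees that the defining equation covers every $q \in Q$, and that the induced map is surjective onto $A^*/{\equiv_L}$ (since $[w]_{\equiv_L}$ is the image of $\delta^*(q_I, w)$).

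For well-definedness, I would argue that if $\delta^*(q_I, w) = \delta^*(q_I, w') = q$, then for every $v \in A^*$ one has $wv \in L$ iff $\delta^*(q_I, wv) = \delta^*(q, v) \in F$ iff $\delta^*(q_I, w'v) \in F$ iff $w'v \in L$, hence $w \equiv_L w'$. The homomorphism conditions are then immediate from the definitions: $f(q_I) = [\varepsilon]_{\equiv_L}$ because $q_I = \delta^*(q_I, \varepsilon)$; the accepting-state condition $q \in F \iff f(q) \in F_L$ follows since $q = \delta^*(q_I, w)$ makes both sides equivalent to $w \in L$; and compatibility with transitions reduces via \eqref{eq:synt-delta-def} to the identity $\delta(\delta^*(q_I, w), a) = \delta^*(q_I, wa)$.

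For equivariance, I would use that $q_I$ is an equivariant element (it is a singleton orbit by definition of a deterministic $G$-automaton) and that $\delta$, hence $\delta^*$, is equivariant, to conclude $q \cdot \pi = \delta^*(q_I, w) \cdot \pi = \delta^*(q_I, w \cdot \pi)$. Applying $f$ and using the action on classes described in Lemma~\ref{lem:equivariantquotient} gives $f(q \cdot \pi) = [w \cdot \pi]_{\equiv_L} = [w]_{\equiv_L} \cdot \pi = f(q) \cdot \pi$.

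The only real subtlety is well-definedness, which is the step where one exploits both determinism (so that $\delta^*(q_I, w)$ is a single state) and the fact that $\Aa$ actually recognizes $L$. All other verifications are direct unpackings of the definitions, so I expect the proof to be short and essentially a chain of routine equalities once the candidate $f$ is written down.
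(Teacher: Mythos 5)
Your proposal is correct and follows essentially the same route as the paper: both define the map $\delta^*(q_I,w)\mapsto[w]_{\equiv_L}$, use reachability for totality/surjectivity, and check well-definedness, equivariance, and the homomorphism conditions in the same way. You merely spell out the well-definedness and equivariance steps that the paper leaves as one-line assertions.
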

\begin{proof}
Consider a reachable deterministic $G$-automaton that recognizes $L$, over the alphabet $A$, 
with initial state $q_I$ and transition function $\delta$.
We claim that the mapping 
\[
\delta^*(q_I, w) \ \longmapsto \ [w]_{\equiv_L}, \quad \text{ for } w \in A^*,
\]
is a homomorphism. 
It is total as the automaton is reachable, and well defined as 
$\delta^*(q_I, w) = \delta^*(q_I, v)$ implies $w \equiv_L v$.
The mapping is easily shown equivariant using Lemmas~\ref{lem:MNequivariant} and~\ref{lem:equivariantquotient}.
It commutes with the transition functions by the very definition of the syntactic automaton.
The initial state $q_I$ is mapped to the initial one $[\emptyword]_{\equiv_L}$.
Finally, the accepting states are mapped to accepting states, as
$\delta^*(q_I, w)$ or $[w]_{\equiv_L}$ is accepting exactly when $w \in L$.
\end{proof}
\begin{thm}[Myhill-Nerode theorem for $G$-sets] \label{thm:MN}
Let $A$ be an orbit finite $G$-set, and let  $L \subseteq A^*$  be a $G$-language. The following conditions are equivalent:
	\begin{enumerate}
		\item the set of equivalence classes of Myhill-Nerode equivalence $\equiv_L$ is orbit finite;
		\item $L$ is recognized by a deterministic orbit finite $G$-automaton.
	\end{enumerate}
\end{thm}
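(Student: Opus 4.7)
The plan is to prove the two directions separately, leveraging the two lemmas just stated.

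For the implication $(1) \Rightarrow (2)$, I would simply exhibit the syntactic automaton as a witness. By Lemma~\ref{lemma:syntaut}, the syntactic automaton of $L$ is a reachable deterministic $G$-automaton, and I only need to verify that it actually recognizes $L$, which follows by a routine induction on the length of input words using the definition~\eqref{eq:synt-delta-def} of $\delta_L$: $\delta_L^*([\emptyword]_{\equiv_L}, w) = [w]_{\equiv_L}$, and this class is accepting iff $w \in L$. By assumption (1), its state space $A^*/\equiv_L$ is orbit finite, so this automaton meets all the requirements of item (2).

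For the converse $(2) \Rightarrow (1)$, suppose $\Aa = (Q,A,q_I,F,\delta)$ is a deterministic orbit finite $G$-automaton recognizing $L$. First I would pass to its reachable fragment: the set $Q_r = \{\delta^*(q_I,w) : w \in A^*\}$ is equivariant (since $q_I$ is equivariant and $\delta$ is equivariant), and it is still orbit finite because $Q_r \subseteq Q$ is a union of orbits of $Q$. The restriction of $\Aa$ to $Q_r$ is a reachable deterministic $G$-automaton recognizing $L$. Now Lemma~\ref{lemma:homoimage} yields a surjective equivariant homomorphism from this reachable automaton onto the syntactic automaton; in particular its state space $A^*/\equiv_L$ is an equivariant surjective image of an orbit finite $G$-set.

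The key general fact I would invoke to close the argument is that equivariant surjections preserve orbit finiteness: if $f : X \to Y$ is equivariant and surjective, then for every orbit $O \subseteq X$ the image $f(O)$ is contained in a single orbit of $Y$ (because $f(x \cdot \pi) = f(x) \cdot \pi$), and by surjectivity every orbit of $Y$ arises this way, so the number of orbits of $Y$ is bounded by the number of orbits of $X$. Applying this to the surjection from $Q_r$ onto $A^*/\equiv_L$ finishes (1).

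I do not foresee a substantial obstacle: both lemmas have already done the conceptually delicate work (verifying equivariance of $\equiv_L$, well-definedness of the quotient $G$-set structure, and the universal property of the syntactic automaton). The only new ingredient is the orbit-counting observation for equivariant surjections, which is essentially immediate from the definition of orbit. The main point to stress in writing up is passing to the reachable subautomaton before invoking Lemma~\ref{lemma:homoimage}, since the latter is only stated for reachable automata.
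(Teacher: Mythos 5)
Your proposal is correct and follows essentially the same route as the paper: (1)$\Rightarrow$(2) via Lemma~\ref{lemma:syntaut}, and (2)$\Rightarrow$(1) by passing to the reachable subautomaton and applying Lemma~\ref{lemma:homoimage}. The paper's own proof is terser and leaves implicit the two points you spell out (that the reachable part is an equivariant, hence orbit finite, subautomaton, and that equivariant surjections cannot increase the number of orbits), but these are exactly the intended justifications.
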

\begin{proof}
The implication (1)$\implies$(2) follows by Lemma~\ref{lemma:syntaut}. For the opposite implication, 
we observe that if $L$ is recognized by a deterministic $G$-automaton $\Aa$ then without
loss of generality one may assume that $\Aa$ is reachable, and then use
Lemma~\ref{lemma:homoimage}.
\end{proof}

\section{Nominal $G$-sets}\label{sec:nominals}

The notion of $G$-automaton presented in Section~\ref{sec:gautomata} is quite abstract. When working with a model of computation, one expects it to have some kind of concrete presentation, e.g., in terms of control states and memory. Such a presentation makes it easier to understand what the automaton does, and is necessary to design algorithms that work with automata, e.g., minimization algorithms. Although we have defined some particular automata by finite means (e.g.~Example~\ref{ex:lang}), it is not clear how an arbitrary automaton can be presented.

One of the goals of this paper is to give a concrete presentation for orbit finite $G$-sets, equivariant functions and algebraic structures such as automata. This, however, cannot be done in full generality even for the equality symmetry (see Example~\ref{ex:symmetries}), for rather fundamental reasons:

 \begin{fact}\label{fact:uncountable}
For a countably infinite $\D$ and $G=\permgrp(\D)$, there are uncountably many non-isomorphic single-orbit $G$-sets.
\end{fact}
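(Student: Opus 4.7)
The plan is to exhibit $2^{\aleph_0}$ pairwise non-isomorphic single-orbit $G$-sets, parametrized by subsets $A\subseteq\set{2,3,4,\ldots}$; each witness is the $G$-orbit of a distinguished partition of $\D$ inside the natural $G$-set of all partitions of $\D$.

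For each $A$, I would fix a partition $P_A$ of $\D$ containing exactly one block of cardinality $n$ for every $n\in A$ together with $\aleph_0$ singleton blocks; such a partition exists since the total number of elements used is $\sum_{n\in A}n+\aleph_0=\aleph_0=|\D|$. The group $G=\permgrp(\D)$ acts on the set of all partitions of $\D$ by $\pi\cdot P=\set{\pi(B) : B\in P}$, and I set $X_A$ to be the $G$-orbit of $P_A$ under this action, which is single-orbit by construction.

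To show $X_A\not\cong X_{A'}$ for $A\neq A'$, I would invoke the standard classification of single-orbit $G$-sets up to isomorphism by conjugacy classes of point stabilizers. The multiset of block cardinalities is a manifest $G$-invariant of the orbit of a partition, and its non-singleton part is exactly $A$, so the orbits of $P_A$ and $P_{A'}$ are distinct. To promote this into non-isomorphism of the abstract $G$-sets $X_A$ and $X_{A'}$, one needs to know that a partition $P$ is determined by its setwise stabilizer $H\leq G$ --- so that conjugate stabilizers correspond to partitions in a common $G$-orbit. I would establish this by observing that the $H$-orbits on $\D$ group together precisely the elements lying in blocks of a common size, and within each such orbit the blocks reappear as the maximal subsets $B\subseteq\D$ for which $H$ contains every permutation of $B$ that fixes $\D\setminus B$ pointwise.

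Since the power set of $\set{2,3,4,\ldots}$ is uncountable, this yields uncountably many pairwise non-isomorphic single-orbit $G$-sets. The only step that warrants any real care is the reconstruction of a partition from its setwise stabilizer; everything else is elementary group-action bookkeeping.
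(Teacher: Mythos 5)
Your construction is correct in outline, but it takes a genuinely different route from the paper. The paper never mentions partitions: it builds uncountably many pairwise non-conjugate subgroups $H_I\leq G$ directly (Proposition~\ref{prop:uncountable}), generated by disjointly supported permutations of prime order $p$ for $p$ ranging over a set $I$ of primes, distinguishes them by the conjugacy invariant ``which prime orders occur among elements of the subgroup'', and then converts subgroups into single-orbit $G$-sets via the coset-space correspondence of Propositions~\ref{prop:subrepr} and~\ref{prop:conjugate-repr} --- the same orbit-stabilizer classification you invoke. You go in the opposite direction: you exhibit the single-orbit sets concretely as orbits of partitions and then read a distinguishing invariant off the stabilizer. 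Both arguments rest on the same classification lemma; yours is more geometric, the paper's more algebraic, and neither buys a substantially shorter proof.

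One of your intermediate claims is stated too strongly and is in fact false: a partition is \emph{not} determined by its setwise stabilizer. The all-singletons partition and the one-block partition both have stabilizer $G$; closer to home, $P_A$ and the partition obtained from $P_A$ by merging all its singleton blocks into one infinite block $U$ have literally the same stabilizer $H$, namely the group of all permutations fixing each non-singleton block setwise and fixing $U$ setwise. Your own reconstruction recipe betrays this: inside the $H$-orbit $U$, the maximal subset $B$ such that $H$ contains every permutation of $B$ fixing $\D\setminus B$ pointwise is all of $U$, not the individual singletons. Fortunately this does not damage the argument, because you never need to recover $P_A$ itself --- only the set $A$. And $A$ is already determined by the multiset of cardinalities of the $H$-orbits on $\D$ (one orbit of size $n$ for each $n\in A$, plus the infinite orbit $U$), which is manifestly invariant under conjugation of $H$ since the orbits of $\pi H\pi^{-1}$ are the $\pi$-images of the orbits of $H$. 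So you can drop the block-reconstruction step entirely; what must replace the false lemma is this weaker, true statement that conjugate stabilizers have equinumerous orbit decompositions of $\D$, which then forces $A=A'$.
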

\begin{proof}
This proof is best deferred until Proposition~\ref{prop:uncountable}, after some basic representation machinery is introduced.
\end{proof}

Another problem with $G$-sets is that Cartesian product on them does not preserve orbit finiteness in general:

\begin{exa}
Consider $G=\permgrp(\D)$ for a countably infinite $\D$, and let $X\subseteq\P(\D)$ be the set of all those subsets of $\D$ that are neither finite nor cofinite. It is easy to see that $X$ is a single-orbit $G$-set. However, $X^2$ has infinitely many orbits. Indeed, for any $n\in\N$ one can choose $(C_n,D_n)\in X^2$ such that $|C_n\cap D_n|=n$, and pairs $(C_n,D_n)$ and $(C_m,D_m)$ are in different orbits of $X^2$ if $n\neq m$.
\end{exa}

Due to these difficulties, since the equality symmetry $G=\permgrp(\D)$ is one of the most important cases we want to consider, we need to restrict attention to some class of well-structured $G$-sets. To this end, we introduce the notion of a $G$-nominal set. Observe that so far, we have only used the group $G$, and we have ignored the fact that $G$ is a group acting on some data values $\D$. The definition of a $G$-nominal sets is where the data values start to play a role.

From now on, we focus on $G$-sets for groups arising from data symmetries. Consider a data symmetry $(\D,G)$ (cf.~Definition~\ref{def:symmetry}).

\begin{defi}\label{def:support}
A set $C\subseteq\D$ {\em supports} an element $x\in X$ if $x\cdot\pi=x$
for all $\pi\in G$ that act as identity on $C$. A $G$-set is \emph{nominal} in the symmetry $(\D,G)$ if every element of it has a finite support.
\end{defi}

Note that the definition of support mentions two group actions of $G$: an action on $X$, and the canonical one on $\D$. By abuse of notation, we usually leave the set of data values $\D$ implicit, and simply talk about nominal $G$-sets.

Nominal $G$-sets and equivariant functions between them form a category $\GNom$.

\begin{exa}\label{ex:wivwvw}
For any data symmetry, $\D$ is a nominal $G$-set, since every element $d\in\D$ is supported by $\{d\}\subseteq\D$. Similarly $\{d_1,\ldots,d_k\}$ supports $(d_1,\ldots,d_k)\in\D^k$, hence $\D^k$ is also a nominal $G$-set. The same works for $\D^*$, but not for $\D^\omega$ or $\P(\D)$ if $\D$ is infinite.

If $X,Y$ are nominal $G$-sets then so are the Cartesian product $X\times Y$ and the disjoint union $X+Y$. Indeed, if $C$ supports $x\in X$ and $D$ supports $y\in Y$ then $C\cup D$ supports $(x,y)\in X\times Y$, and also $C$ supports $x\in X+Y$ and $D$ supports $y\in X+Y$. A set $X$ equipped with the trivial $G$-action is always nominal, with every element supported by the empty set.
\end{exa}

\begin{exa}\label{ex:vonowv}
For the equality symmetry (see Examples~\ref{ex:symmetries}), nominal $G$-sets are exactly nominal sets introduced by Gabbay and Pitts~\cite{GP02}. Assuming $\D=\N$, the sets $\{0,1,2,3\}$ and its complement $\N\setminus\{0,1,2,3\}$,
considered as elements of $\P(\D)$, are both supported by $\{0,1,2,3\}$. In the equality symmetry, an element of $\P(\D)$ has finite support if and only if it is finite or cofinite. In particular, there are countably many finitely supported elements in ${\P}(\D)$.
\end{exa}

\begin{exa}\label{ex:vanofb}
Consider the total order symmetry, where $\D=\Q$, and the element $x\in \P(\Q)$ that is the union of two  intervals $[0;1] \cup [2;3)$.  It is easy to see that this element is supported by the set $\set{0,1,2,3}$. More generally, an element of $\P(\Q)$ has a finite support if and only if it is a finite Boolean combination of intervals. 
\end{exa}

\begin{exa}\label{ex:intsig}
	Consider the integer symmetry. If a translation $i \mapsto i + j$ preserves any single integer, then it is necessarily the identity. Therefore, any element of any set with an action of integers is supported by $\set{5}$ or $\set{8}$, etc. In the integer symmetry, all $G$-sets are nominal.
\end{exa}

Suppose that we change a symmetry $(\D,G)$ by keeping the set of data values $\D$, but considering a subgroup $H \leq G$. 
What happens to the nominal sets?  If $X$ is a $G$-set (and therefore also a $H$-set), then every $G$-support of $x\in X$ is also an $H$-support of $x$, therefore every nominal $G$-set is a nominal $H$-set.
On the other hand, under the smaller group $H$, more sets might become nominal (see Examples~\ref{ex:vonowv} and~\ref{ex:vanofb}). 

A basic property of equivariant functions is that they preserve supports:

\begin{lem}\label{lem:equiv-pres-supp}
For any equivariant $f:X\to Y$, $x\in X$ and $C\subseteq \D$, if $C$ supports $x$ then $C$ supports $f(x)$.
\end{lem}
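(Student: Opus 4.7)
The plan is to unfold the definitions of support and equivariance and chain them together; there is essentially one computation to make and no real obstacle.

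First I would fix an arbitrary $\pi \in G$ that acts as the identity on $C$, since that is the universally quantified hypothesis in the definition of ``$C$ supports $f(x)$''. The goal then reduces to showing $f(x) \cdot \pi = f(x)$.

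Next I would use the assumption that $C$ supports $x$: for this particular $\pi$, which fixes $C$ pointwise, Definition~\ref{def:support} gives $x \cdot \pi = x$. Applying $f$ to both sides yields $f(x \cdot \pi) = f(x)$. Finally, equivariance of $f$ rewrites the left side as $f(x) \cdot \pi$, producing the desired equality $f(x) \cdot \pi = f(x)$. Since $\pi$ was arbitrary among permutations fixing $C$ pointwise, $C$ supports $f(x)$.

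There is no hard part: the argument is a one-line chain
\[
f(x) \cdot \pi = f(x \cdot \pi) = f(x),
\]
the first equality by equivariance of $f$ and the second by the hypothesis that $C$ supports $x$ together with the choice of $\pi$.
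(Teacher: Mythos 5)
Your proof is correct and is exactly the paper's argument: fix a $\pi$ that is the identity on $C$, use the support hypothesis to get $x\cdot\pi=x$, and conclude via the chain $f(x)\cdot\pi = f(x\cdot\pi) = f(x)$. Nothing to add.
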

\begin{proof}
For any $\pi\in G$, if $x\cdot\pi=x$ then $f(x)\cdot\pi = f(x\cdot\pi)=f(x)$.
\end{proof}
Similarly, action of the group preserves supports in the following sense:
\begin{lem} \label{lem:actionpreservessupport}
If $C$ supports $x$ then $\pi C$ supports $x \cdot \pi$, for any $\pi \in G$.
\end{lem}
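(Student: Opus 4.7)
The plan is to chase the definition of ``supports'' directly, using the group action axioms and a conjugation trick. I want to show that if $\sigma \in G$ acts as identity on $\pi C$, then $(x\cdot\pi)\cdot\sigma = x\cdot\pi$.

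First I would rewrite the desired equation in a form where the support property of $C$ becomes applicable. Using the action axioms $x\cdot(\rho\tau) = (x\cdot\rho)\cdot\tau$, the equation $(x\cdot\pi)\cdot\sigma = x\cdot\pi$ is equivalent to $x\cdot(\pi\sigma\pi^{-1}) = x$ (multiply on the right by $\pi^{-1}$). So the goal reduces to showing that the conjugate $\pi\sigma\pi^{-1}$ acts as the identity on $C$, because then the hypothesis that $C$ supports $x$ immediately gives $x\cdot(\pi\sigma\pi^{-1}) = x$.

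Next I would verify this conjugation fact. Take any $c \in C$ and compute, using the right-action convention $d\cdot\pi = \pi(d)$:
\[
	c\cdot(\pi\sigma\pi^{-1}) = ((c\cdot\pi)\cdot\sigma)\cdot\pi^{-1} = (\pi(c)\cdot\sigma)\cdot\pi^{-1}.
\]
Since $\pi(c)\in\pi C$ and $\sigma$ fixes every element of $\pi C$, we have $\pi(c)\cdot\sigma = \pi(c)$, so the expression collapses to $\pi(c)\cdot\pi^{-1} = c$. Hence $\pi\sigma\pi^{-1}$ acts as the identity on $C$.

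There is no real obstacle here; the only thing to be careful about is the direction of the right action and the corresponding form of the conjugate (it is $\pi\sigma\pi^{-1}$, not $\pi^{-1}\sigma\pi$, because of how $\pi\sigma$ composes under the right action). Once the conjugation step is correct, the two invocations of the action axioms and one invocation of the support hypothesis conclude the proof.
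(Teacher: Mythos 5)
Your proof is correct and matches the paper's argument exactly: both reduce the claim to showing that the conjugate $\pi\sigma\pi^{-1}$ acts as the identity on $C$ and then invoke the support hypothesis. You simply spell out the conjugation computation and the action-axiom manipulations in more detail than the paper does.
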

\begin{proof}
Assume an arbitrary $\rho \in G$ to be the identity on $\pi C$.
Then $\pi \, \rho \, \pi^{-1}$ is the identity on $C$, and thus preserves $x$,
\[
x \cdot (\pi \, \rho \, \pi^{-1}) = x,
\]
from which we obtain:
\[
(x \cdot \pi) \cdot \rho = x \cdot \pi
\]
as required.
\end{proof}

The problem signified by Fact~\ref{fact:uncountable} disappears for nominal $G$-sets:

\begin{fact}\label{fact:countable}
For the equality symmetry $(\D,G)$, there are only countably many non-isomorphic single-orbit nominal $G$-sets.
\end{fact}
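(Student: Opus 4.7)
The plan is to apply the orbit–stabilizer correspondence. For any group $G$, single-orbit $G$-sets are classified up to isomorphism by subgroups of $G$ taken up to conjugacy: given a single-orbit $X$ and any $x \in X$, the map $\pi \mapsto x\cdot\pi$ factors through an isomorphism $G/\mathrm{Stab}(x) \cong X$ of $G$-sets, and two coset $G$-sets $G/H$, $G/H'$ are isomorphic iff $H$ and $H'$ are conjugate. Hence it suffices to bound the number of subgroups of $G$ that arise as $\mathrm{Stab}(x)$ in some nominal $G$-set, up to conjugacy.

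The next step is to reformulate nominality in terms of the stabilizer. By Definition~\ref{def:support}, a set $C\subseteq\D$ supports $x$ iff $\mathrm{Fix}(C) \leq \mathrm{Stab}(x)$, where $\mathrm{Fix}(C) := \{\pi \in G : \pi(c)=c \text{ for all } c \in C\}$. So $\mathrm{Stab}(x)$ arises in a nominal $G$-set iff it contains $\mathrm{Fix}(C)$ for some finite $C \subseteq \D$. Since the finite subsets of the countable $\D$ form a countable family, it now suffices to show that for each fixed finite $C$ there are only \emph{finitely} many subgroups $H$ with $\mathrm{Fix}(C) \leq H \leq G$.

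For this crucial combinatorial step, I would analyse the double coset space $\mathrm{Fix}(C)\backslash G/\mathrm{Fix}(C)$. Assign to each $\pi \in G$ its ``trace on $C$'', namely the partial injection $t_\pi : C \rightharpoonup C$ defined on $\{c \in C : \pi(c) \in C\}$ by $t_\pi(c) = \pi(c)$. I claim $\pi$ and $\sigma$ lie in the same double coset iff $t_\pi = t_\sigma$. One direction is immediate from the definition of $\mathrm{Fix}(C)$. For the converse, given $t_\pi = t_\sigma$, the map $\pi(c) \mapsto \sigma(c)$ for $c \in C$ with $\pi(c) \notin C$ is a partial bijection between subsets of $\D\setminus C$, and since $\D\setminus C$ is infinite (as $\D$ is countably infinite and $G = \permgrp(\D)$), this partial bijection extends to some $\alpha \in \mathrm{Fix}(C)$, witnessing $\sigma \in \mathrm{Fix}(C)\,\pi\,\mathrm{Fix}(C)$. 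Since there are only finitely many partial injections on the finite set $C$, the double coset space is finite. Every subgroup $H$ containing $\mathrm{Fix}(C)$ is a union of such double cosets, so the number of intermediate subgroups is at most $2^{k}$ where $k$ is the number of double cosets — finite.

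Combining: countably many finite $C$, each contributing finitely many intermediate subgroups, gives at most countably many $H$, hence at most countably many conjugacy classes, hence at most countably many non-isomorphic single-orbit nominal $G$-sets. The main obstacle is the extension step in the double-coset count, which relies on the ``richness'' of $\permgrp(\D)$ acting on the infinite complement of any finite set — precisely the property that distinguishes the equality symmetry from pathological cases like the integer symmetry of Example~\ref{ex:symmetries}.
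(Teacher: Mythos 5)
Your proof is correct, but it takes a genuinely different route from the paper's. The paper derives this fact as a special case of Corollary~\ref{cor:countable}: using the least-support machinery of Section~\ref{sec:least-supp}, every single-orbit nominal $G$-set is represented by a pair $(C,S)$ where $C$ is the \emph{least} support and $S\leq G|_C$ (Theorem~\ref{thm:suprepr}), and countability follows because there are countably many finite $C\subseteq\D$ and, for each, only finitely many $S\leq\permgrp(C)$. You instead classify single-orbit sets by their (not canonically chosen) open stabilizer subgroups and bound the number of subgroups above each $G_C$ by showing that the double coset space $G_C\backslash G/G_C$ is finite; your key steps are sound --- the trace $t_\pi$ is indeed a complete invariant of the double coset, with the converse direction resting on extending a partial bijection between finite subsets of the infinite set $\D\setminus C$ to a permutation fixing $C$ pointwise, and every subgroup containing $G_C$ is a union of double cosets, giving at most $2^k$ of them. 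What your route buys is self-containedness: it needs only the elementary subgroup representation of Section~\ref{sec:gset-repr} (Propositions~\ref{prop:subrepr} and~\ref{prop:conjugate-repr}) and avoids least supports and fungibility entirely, which is pleasant given that the fact is stated long before that machinery is available. What the paper's route buys is generality and canonicity: Corollary~\ref{cor:countable} applies uniformly to any countable symmetry admitting least supports (e.g., the total order symmetry), whereas your double-coset count is tailored to the richness of $\permgrp(\D)$ --- it correctly fails in the integer symmetry, where $G_C$ is trivial for nonempty $C$ and $G_C\backslash G/G_C$ is infinite --- and the least-support pair $(C,S)$ is a canonical representative that the paper reuses later for representing automata. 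A minor remark: the reduction to conjugacy classes in your first step is unnecessary for an upper bound, since countably many open subgroups already give countably many isomorphism classes of coset spaces $\cossem{H}$; it is harmless, though.
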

\begin{proof}
This will follow from the more general Corollary~\ref{cor:countable}.
\end{proof}

However, other problems persist and we shall not be able to distill a satisfactory representation of nominal $G$-sets and automata for arbitrary data symmetries. As a pathological example, consider the integer symmetry (see Example~\ref{ex:symmetries}).

\paragrafik{Integer pathologies}
As far as single-orbit nominal sets are concerned, the integer symmetry has a promisingly simple structure. As we mentioned in Example~\ref{ex:intsig}, all $G$-sets are nominal in this case. One example of a single-orbit $G$-set is $\Z$. Another example is the finite cyclic group $\Z_n$, for any nonzero $n \in \Nat$.  It is not difficult to see that 
every single-orbit nominal set in the integer symmetry is isomorphic either to $\Z$ or to some $\Z_n$. 

Equivariant functions between single-orbit sets are also simple.
If the domain in $\Z$, these are all translations, possibly modulo $n$ if the co-domain is $\Z_n$.
If the domain is $\Z_n$, the co-domain must be necessarily $\Z_m$ for $m$ a divisor of $n$.

The problems with the integer symmetry appear as soon as Cartesian products of nominal sets are considered. This has bad consequences for automata.
Suppose that we are interested in automata where the set of states is $\Z$ and the input alphabet is also $\Z$. Both sets are single-orbit and nominal, so these are among the simplest automata in the integer symmetry.
The transition function is any equivariant function
\[
	\delta : \Z \times \Z \to \Z.
\]
What functions $\delta$ can we expect? Suppose that $\delta$ is defined for arguments of the form $(0,i)$. Then, by equivariance, this definition extends uniquely to all arguments:
\begin{align*}
	\delta(i,j)= \delta((0,j-i)\cdot i)= \delta (0,j-i) + i .
\end{align*}
However, there is no restriction on the value of $\delta(0,i)$, call it $g(i)$. It is not difficult to show that for any function $g : \Z \to \Z$,
the function $\delta_g$ defined by
\[
	\delta_g(i,j) = g(j-i)+i
\]
is equivariant. In particular, there are uncountably many equivariant functions $\Z\times\Z \to \Z$.

Wishing to disregard the integer symmetry and other pathological cases, we shall require some desirable properties of data symmetries such as the existence of least supports.

\begin{defi}\label{def:admits-leastsupports}
A symmetry $(\D,G)$ \emph{admits least supports} if each element of every nominal $G$-set
has a least finite support with respect to set inclusion, or, equivalently, if finite supports of each element are closed under intersection.
\end{defi}

We shall study the existence of least supports in some detail in Section~\ref{sec:least-supp}. For now, we simply state some examples:

\begin{exa}\label{ex:leastsupp}
The equality symmetry and the total order symmetry both admit least supports. This will be proved as Corollaries~\ref{cor:leastsupportseq} and~\ref{cor:leastsupportsord}; for the equality symmetry, it was proved already in~\cite{GP02}.

The integer symmetry does not admit least supports, as is evident from Example~\ref{ex:intsig}.
\end{exa}

Later, in Section~\ref{sec:least-supp}, we shall restrict attention to those data symmetries that admit least supports and enjoy some other desirable properties, to achieve a finitary representation of nominal sets. 
For instance, Fact~\ref{fact:countable} holds for any data symmetry $(\D, G)$ that admits least supports, assumed that $\D$ itself is a countable set, as we show in Corollary~\ref{cor:countable}.
For now, however, we shall continue the study of nominal automata in an abstract setting, in the following Sections~\ref{sec:nominalgautomata}--\ref{sec:others}.

We conclude this section with a simple fact that gives a feeling of a kind of finitary representation we mean above.
Recall from Example~\ref{ex:gsets} that $\D^{(C)}$ is the set of all injective functions
$C \to \D$ that extend to a permutation from $G$.
\begin{lem}\label{lem:simple-representation-lemma}
	Every single orbit nominal set $X$ is an image, under an equivariant function, of $\D^{(C)}$, for some $C \subseteq_\text{fin} \D$. Moreover, if $(\D,G)$ admits least supports then the function may be chosen so that it preserves least supports.	
\end{lem}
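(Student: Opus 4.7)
The plan is to pick any $x \in X$ and any finite support $C$ of $x$ (these exist since $X$ is nominal and nonempty as a single orbit), and then to define $f : \D^{(C)} \to X$ by $f(\pi|C) = x \cdot \pi$. First I will check that $f$ is well defined: if $\pi|C = \sigma|C$, i.e., $\pi(c) = \sigma(c)$ for every $c \in C$, then (using the convention $(\alpha\beta)(d) = \beta(\alpha(d))$ forced by the right-action axiom applied to the canonical action on $\D$) the group element $\pi\sigma^{-1}$ fixes $C$ pointwise, hence $x\cdot\pi\sigma^{-1} = x$ and therefore $x\cdot\pi = x\cdot\sigma$. Equivariance is then immediate from associativity of the action, and surjectivity follows from $X$ being a single orbit, since every $y \in X$ is of the form $x \cdot \pi = f(\pi|C)$.

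For the moreover part I would assume that $(\D,G)$ admits least supports and take $C$ to be the least support of $x$. The claim I would prove is that for every $\pi \in G$, the least support of $\pi|C$ in $\D^{(C)}$ and the least support of $x\cdot\pi$ in $X$ both coincide with $\pi(C)$, which immediately gives that $f$ preserves least supports.

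To establish this, I first observe that $\pi(C)$ is the least support of $x\cdot\pi$: by Lemma~\ref{lem:actionpreservessupport} it is a support, and any strictly smaller support $D \subsetneq \pi(C)$ of $x\cdot\pi$ would, by a second application of Lemma~\ref{lem:actionpreservessupport} via $\pi^{-1}$, yield the support $\pi^{-1}(D) \subsetneq C$ of $x$, contradicting minimality of $C$. Next, $\pi(C)$ supports $\pi|C$ by the direct calculation $((\pi|C) \cdot \rho)(c) = \rho(\pi(c)) = \pi(c)$ for every $c \in C$ whenever $\rho$ fixes $\pi(C)$ pointwise. Finally, Lemma~\ref{lem:equiv-pres-supp} tells us that any support of $\pi|C$ is also a support of $f(\pi|C) = x\cdot\pi$; so the least support of $\pi|C$ lies between the least support of $x\cdot\pi$ and $\pi(C)$, both of which equal $\pi(C)$, forcing all three to coincide.

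The main subtlety I expect is keeping the convention for the group operation straight, since under this choice of right action group multiplication corresponds to anti-composition of functions, and this direction matters in both the well-definedness calculation and the direct verification that $\pi(C)$ (rather than $C$) supports $\pi|C$ in $\D^{(C)}$. Once that convention is fixed, the proof is essentially a short chain of already-established facts: Lemmas~\ref{lem:equiv-pres-supp} and~\ref{lem:actionpreservessupport}, together with the minimality of $C$ as a support of $x$.
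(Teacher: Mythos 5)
Your proposal is correct and follows essentially the same route as the paper: the same function $f(\pi|C) = x\cdot\pi$, the same well-definedness and surjectivity checks, and the same use of Lemma~\ref{lem:actionpreservessupport} (applied via $\pi^{-1}$) for the least-support claim. The only difference is that you fill in the paper's ``it is easy to see'' for the least support of $\pi|C$ with a clean squeeze between $\pi(C)$ as an upper bound and, via Lemma~\ref{lem:equiv-pres-supp}, the least support of $x\cdot\pi$ as a lower bound, which is a perfectly valid way to make that step explicit.
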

\begin{proof}
       For the first part, choose any $x \in X$ and any finite support
	$C$    
	of $x$. Because this is a support it follows that 
	\begin{align*}
		\pi|_{C} = \sigma|_{C} \quad \Rightarrow \quad x\cdot \pi = x \cdot \sigma \qquad \mbox{ for every }\pi,\sigma \in G.
	\end{align*}
	Therefore, the relation  defined by 
	\begin{align*}
                f = \set{ (\pi|_C, x \cdot \pi) : \pi \in G }
	\end{align*}
	is actually an equivariant function from $\D^{(C)}$ to $X$.  The function is clearly surjective, because every element of $X$ is of the form $x \cdot \pi$ for some $\pi \in G$.
	
	Assuming that $(\D,G)$ admits least supports, $C$ above may be chosen as the least support of $x$. It is easy to see that $\pi C$ is the least support both of $x\cdot\pi\in X$ (use Lemma~\ref{lem:actionpreservessupport} for $\pi^{-1}$) and of $\pi|_C\in\D^{(C)}$, which means that $f$ preserves least supports.
\end{proof}
In the equality symmetry every injective function $C \to \D$ extends to a permutation, thus we obtain:
\begin{cor}\label{cor:simple-representation-lemma}
	In the equality symmetry, every single orbit nominal set is an image  of $\D^{(n)}$, for some $n \in \Nat$, under an equivariant function that preserves least supports.	
\end{cor}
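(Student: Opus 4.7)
The plan is to deduce this immediately from Lemma~\ref{lem:simple-representation-lemma}, using the fact that in the equality symmetry the $G$-set $\D^{(C)}$ is, up to equivariant isomorphism, just $\D^{(n)}$ where $n=|C|$. By Example~\ref{ex:leastsupp}, the equality symmetry admits least supports, so the ``moreover'' clause of Lemma~\ref{lem:simple-representation-lemma} applies: there exist a finite $C\subseteq\D$ and an equivariant surjection $f:\D^{(C)}\to X$ that preserves least supports.

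Next I would observe that in the equality symmetry $G=\permgrp(\D)$, every injection $C\to\D$ extends to a permutation of $\D$, so $\D^{(C)}$ is precisely the set of all injections $C\to\D$. Fixing any enumeration $c_1,\ldots,c_n$ of $C$, I would then define
\[
  \phi:\D^{(C)}\to\D^{(n)},\qquad \phi(\pi|_C)=(\pi(c_1),\ldots,\pi(c_n)),
\]
and check that $\phi$ is an equivariant bijection. Equivariance is a short computation: both $\phi((\pi|_C)\cdot\rho)$ and $\phi(\pi|_C)\cdot\rho$ evaluate to $((\pi\rho)(c_1),\ldots,(\pi\rho)(c_n))$ by the right-action convention $d\cdot(\pi\rho)=\rho(\pi(d))$; bijectivity is clear since any tuple of $n$ distinct data values arises as $(\pi(c_1),\ldots,\pi(c_n))$ for a unique injection, which extends to a permutation.

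I would then note that $\phi$ preserves least supports: the least support of $\pi|_C$ in $\D^{(C)}$ is $\pi C$ (any $\rho$ fixes $\pi|_C$ iff it fixes every element of $\pi C$), and the least support of $(\pi(c_1),\ldots,\pi(c_n))$ in $\D^{(n)}$ is likewise $\{\pi(c_1),\ldots,\pi(c_n)\}=\pi C$. Composing, $f\circ\phi^{-1}:\D^{(n)}\to X$ is an equivariant surjection that preserves least supports, completing the proof. There is no real obstacle here; the only step worth a sanity check is the equivariance of $\phi$, since the right-action convention makes the order of composition easy to get wrong.
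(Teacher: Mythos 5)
Your proposal is correct and follows essentially the same route as the paper, which derives the corollary from Lemma~\ref{lem:simple-representation-lemma} by observing that in the equality symmetry every injection $C\to\D$ extends to a permutation, so that $\D^{(C)}$ is equivariantly identified with $\D^{(n)}$ for $n=|C|$. You merely make explicit the isomorphism $\phi$ and its preservation of least supports, which the paper leaves implicit.
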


\section{Nominal $G$-automata} \label{sec:nominalgautomata}

This section is a continuation of our theory of $G$-automata initiated in Section~\ref{sec:gautomata}, 
but now we restrict attention to nominal $G$-sets.
We call a $G$-automaton \emph{nominal} if both the alphabet $A$ and the state space $Q$ are nominal $G$-sets.

Note that if $A$ is nominal then $A^*$ is so as well, as every finite word is supported by the union of supports of individual letters. Thus every $G$-language over a nominal alphabet is automatically nominal.

The restriction to nominal sets will have little or no impact on the expressive power of automata. In particular, in any data symmetry $(\D,G)$:

\begin{prop} \label{prop:reachablenominal}
In any reachable deterministic $G$-automaton over a nominal alphabet $A$, the $G$-set of states is nominal.
\end{prop}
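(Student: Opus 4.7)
The plan is to exploit reachability to exhibit, for each state, an explicit finite support built from supports of letters used to reach it.

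First, I would observe that the initial state $q_I$ is equivariant, i.e.\ supported by $\emptyset$: since the set of initial states is a singleton orbit (as recorded in the categorical perspective, $q_I$ is the image of the trivial $G$-set $1$ under an equivariant map), we have $q_I \cdot \pi = q_I$ for every $\pi \in G$. Next, I would recall that the multi-step transition function $\delta^\ast : Q \times A^\ast \to Q$ is equivariant, as noted right after Definition~\ref{def:nondetGautomata}; this is a routine induction on word length using equivariance of $\delta$ and of the product action on $Q \times A$.

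Now fix an arbitrary state $q \in Q$. By reachability, pick $w = a_1 \cdots a_n \in A^\ast$ with $q = \delta^\ast(q_I, w)$. Since $A$ is nominal, each $a_i$ has a finite support $C_i \subseteq \D$; set $C := C_1 \cup \cdots \cup C_n$, which is finite. By the support-of-product argument from Example~\ref{ex:wivwvw}, $w$ is supported by $C$ in $A^\ast$, and together with $q_I$ being supported by $\emptyset$, the pair $(q_I, w) \in Q \times A^\ast$ is supported by $C$. Applying Lemma~\ref{lem:equiv-pres-supp} to the equivariant function $\delta^\ast$, the image $\delta^\ast(q_I, w) = q$ is supported by $C$ as well. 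Hence every element of $Q$ has a finite support, so $Q$ is a nominal $G$-set.

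I do not anticipate any real obstacle here: all the ingredients — equivariance of $\delta^\ast$, preservation of supports under equivariant functions, and finiteness of unions of finitely many finite sets — are already established. The only mild subtlety is that $Q$ itself is not assumed nominal a priori, so one must derive finite support for each state rather than invoke it; reachability supplies precisely the witness needed.
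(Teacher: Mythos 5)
Your proof is correct and follows essentially the same route as the paper: the paper packages the argument as ``$w \mapsto \delta^*(q_I,w)$ is an equivariant surjection from the nominal set $A^*$ onto $Q$'' and invokes Lemma~\ref{lem:equiv-pres-supp}, which is exactly your element-wise computation of a support for each state. The only difference is presentational — you unfold the support of $(q_I,w)$ explicitly rather than quoting that equivariant images of nominal sets are nominal.
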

\begin{proof}
Reachability of an automaton (see Definition~\ref{def:nondetGautomata} and the following paragraphs) means that the 
function  $w \mapsto \delta^*(q_I, w)$ is an equivariant function from the nominal set $A^*$ 
onto the state space of the automaton.
By Lemma~\ref{lem:equiv-pres-supp}, the image of a nominal set under an equivariant function is a nominal set.
%
\end{proof}

As before, for the rest of this section we fix some infinite set $\D$ and a group $G \leq \permgrp(\D)$.
The deterministic orbit finite nominal $G$-automata we call shortly \GDFA; similarly, the nondeterministic
orbit finite nominal $G$-automata we call \GNFA. 

\subsection{Myhill-Nerode theorem revisited} Assume the alphabet $A$ is an orbit finite nominal $G$-set. 
By Proposition~\ref{prop:reachablenominal} every reachable deterministic automaton over $A$ is nominal. 
As a conclusion, the syntactic automaton is always nominal. 
Thus in condition (2) in Theorem~\ref{thm:MN} one may equivalently require that the automaton be nominal:
\begin{thm}[Myhill-Nerode theorem for nominal $G$-sets] \label{thm:MNrev}
Let $A$ be an orbit finite nominal $G$-set, and let  $L \subseteq A^*$  be a $G$-language. The following conditions are equivalent:
	\begin{enumerate}
		\item the set of equivalence classes of Myhill-Nerode equivalence $\equiv_L$ is orbit finite;
		\item $L$ is recognized by a \GDFA.
	\end{enumerate}
\end{thm}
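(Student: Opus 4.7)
The plan is to reduce Theorem~\ref{thm:MNrev} to the already-proved Theorem~\ref{thm:MN} by showing that the nominality assumption on the alphabet propagates automatically to the automata involved, so that the two statements coincide.

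For the implication $(1) \Rightarrow (2)$, I would start from the syntactic automaton of $L$. By Lemma~\ref{lemma:syntaut}, this is a reachable deterministic $G$-automaton recognizing $L$, and condition (1) says exactly that its state space $A^*/\!\equiv_L$ is orbit finite. The only thing missing for it to be a \GDFA{} in the sense just introduced (deterministic, orbit finite, and nominal) is the nominality of the state space. But since $A$ is a nominal $G$-set, so is $A^*$ (every word is supported by the union of the supports of its letters, as remarked in Example~\ref{ex:wivwvw} and used in the paragraph just above the theorem). Now Proposition~\ref{prop:reachablenominal} applies directly: the syntactic automaton is reachable by construction and its alphabet is nominal, so its state space is automatically nominal. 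Hence the syntactic automaton is a \GDFA{} recognizing $L$.

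For the implication $(2) \Rightarrow (1)$, I would simply observe that every \GDFA{} is in particular a deterministic orbit finite $G$-automaton in the sense of Section~\ref{sec:gautomata}, since being nominal is an extra condition. Therefore the hypothesis of (2) in Theorem~\ref{thm:MNrev} implies the hypothesis of (2) in Theorem~\ref{thm:MN}, which in turn yields (1) by that theorem.

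In fact, the two steps together can be phrased as: Theorem~\ref{thm:MN} already gives the equivalence between (1) and the existence of \emph{some} deterministic orbit finite $G$-automaton recognizing $L$; the content of Theorem~\ref{thm:MNrev} is that under the nominality assumption on $A$ one may always take such an automaton to be nominal, and this is precisely what Proposition~\ref{prop:reachablenominal} provides, via the canonical reachable witness furnished by the syntactic automaton. There is no real obstacle here beyond being careful that the syntactic automaton inherits nominality from the alphabet; the preceding lemmas do all the work.
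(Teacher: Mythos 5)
Your proposal is correct and follows exactly the paper's route: the paper derives Theorem~\ref{thm:MNrev} from Theorem~\ref{thm:MN} by observing, via Proposition~\ref{prop:reachablenominal}, that the reachable syntactic automaton over a nominal alphabet is automatically nominal, so the nominality requirement in condition (2) costs nothing. Nothing is missing.
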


\subsection{Nondeterministic G-automata}
In the sequel we investigate some basic properties of classical NFA, and verify which of them still hold 
for \GNFA.

\paragrafik{Determinization fails} In the world of nominal sets, one cannot in general determinize finite automata. One reason is that complementation fails for nondeterministic  finite automata. Perhaps  a  more suggestive explanation is that the powerset of an orbit finite set can have infinitely many orbits, as illustrated in Example~\ref{ex:orbits} in the case of the equality symmetry. 
This means that applying the subset construction to a nominal nondeterministic finite automaton yields a nominal deterministic automaton, but not necessarily one with an orbit finite state space.

\paragrafik{Elimination of $\emptyword$-transitions}
Consider nominal $G$-automata as in Definition~\ref{def:nondetGautomata}, 
but which also have additionally $\emptyword$-transitions, described by an equivariant relation
	\begin{align*}
		\delta_\emptyword \subseteq Q \times Q.
	\end{align*}

\begin{lem} \label{lem:epsilon-transitions}
The expressive power of \GNFA\  is not changed if $\emptyword$-transitions are allowed.
\end{lem}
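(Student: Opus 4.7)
The plan is to adapt the classical $\varepsilon$-removal construction, taking care that every object constructed remains equivariant and that orbit finiteness is preserved.

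Given a nominal $G$-NFA with $\varepsilon$-transitions $(Q, A, I, F, \delta, \delta_\emptyword)$, I would first form the reflexive-transitive closure $\delta_\emptyword^* \subseteq Q \times Q$ of $\delta_\emptyword$. Then I would build a new automaton $(Q, A, I, F', \delta')$ over the same state space, where
\[
F' = \{ q \in Q \mid \exists q' \in F. \ (q,q') \in \delta_\emptyword^* \}
\]
and
\[
\delta' = \{ (q, a, q') \mid \exists q_1. \ (q, a, q_1) \in \delta \text{ and } (q_1, q') \in \delta_\emptyword^* \}.
\]
This is the standard construction; the state space is unchanged, so it remains orbit finite and nominal.

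The two things to verify are (i) equivariance of $\delta_\emptyword^*$, $F'$ and $\delta'$, and (ii) that the new automaton accepts exactly the same language. For (i), the key step is that $\delta_\emptyword^*$ is equivariant: since $\delta_\emptyword$ is equivariant, an easy induction on $n$ shows that the $n$-fold composition $\delta_\emptyword^n$ is equivariant, and equivariance is preserved by arbitrary unions, so $\delta_\emptyword^* = \bigcup_n \delta_\emptyword^n$ is equivariant too. Equivariance of $F'$ and $\delta'$ then follows by routine manipulations, using that $F$ and $\delta$ are equivariant and that relational composition and existential projection preserve equivariance. For (ii), the argument is purely combinatorial and identical to the classical one: a computation in the original automaton alternates $\delta$- and $\delta_\emptyword^*$-steps, and each such $\delta$-step together with the following chain of $\varepsilon$-transitions can be replaced by a single $\delta'$-transition, while the final $\delta_\emptyword^*$-chain leading into $F$ is absorbed into membership in $F'$.

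I do not foresee a real obstacle: the construction preserves the carrier set, so orbit finiteness and nominality carry over for free, and the only genuine nominal-flavoured check is that reflexive-transitive closure preserves equivariance, which is a standard induction. The rest of the argument is the textbook one and can be dispatched quickly.
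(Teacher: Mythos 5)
Your overall strategy is the same as the paper's: the textbook $\varepsilon$-elimination, with the only nominal-specific obligations being that the reflexive-transitive closure $\delta_\emptyword^*$ is equivariant (which you correctly establish as a union of equivariant relational compositions) and that orbit finiteness is free because the carrier is unchanged. That part is fine.

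There is, however, a genuine (if classical, not nominal) slip in the construction itself: you only compose the $\varepsilon$-closure \emph{after} each $\delta$-step and absorb the trailing chain into $F'$, but you leave the initial states $I$ untouched. A run of the original automaton may begin with a nontrivial $\varepsilon$-chain before the first letter is read, and your $\delta'$ cannot simulate it. Concretely, take $Q=\{p,q,r\}$, $I=\{p\}$, $\delta_\emptyword=\{(p,q)\}$, $\delta=\{(q,a,r)\}$, $F=\{r\}$: the original automaton accepts $a$, but your automaton has no $a$-transition out of $p$ and rejects it. The fix is standard and preserves all your equivariance arguments: either replace $I$ by its $\varepsilon$-closure $I'=\{q \mid \exists q_0\in I.\ (q_0,q)\in\delta_\emptyword^*\}$ (equivariant by the same reasoning as your $F'$), or do what the paper's sketch does and allow $\varepsilon$-chains on \emph{both} sides of the $\delta$-step in the definition of $\delta'$. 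With that repair your proof goes through and is essentially the paper's.
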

\begin{proof}
	The standard proof works.
	After eliminating $\emptyword$-transitions, we should have transitions of the form $(p,a,q) \in Q \times A \times Q$ such that 
	\begin{align*}
		(p_1,p_2),\ldots,(p_{n-1},p_n) \in \delta_\emptyword \qquad (p_n,a,q_1) \in \delta \qquad  (q_1,q_2),\ldots,(q_{m-1},q_m) \in \delta_\emptyword
	\end{align*}
	holds 	for some 
	\begin{align*}
		p_1,\ldots,p_n, q_1,\ldots,q_m \in Q \qquad p_1 = p \qquad q_m=q.
	\end{align*}
	It is not difficult to see that the new set of transitions is equivariant.
\end{proof}

\paragrafik{Union and intersection}
	It is easy to see that languages recognized by \GNFA\ are closed under union (because orbit finite sets are closed under disjoint union)
	 and concatenation (disjoint union again, and using Lemma~\ref{lem:epsilon-transitions}). They also contain all orbit finite subsets of $A^*$.   This raises the question of regular expressions and a Kleene Theorem, but we do not discuss these issues in this paper.
	
Closure under intersection is a bit more subtle, as it does not hold in an arbitrary symmetry.
The essential reason is that orbit finite nominal sets are not stable under Cartesian product, as shown 
in Example~\ref{ex:orbits} in the case of the integer symmetry.
However, if one restricts to well-behaved symmetries only, as we do in Sections~\ref{sec:least-supp}-\ref{sec:Fraisseautomata},
the closure under products is recovered, and, as a consequence, the closure of \GNFA\ under intersection
is recovered as well.

\paragrafik{Complementation}
For closure under complementation, the situation is much worse, as the closure fails essentially in every symmetry.
The proof below works for the equality symmetry, but with minor changes it can be adapted to other symmetries.
	\begin{lem}
		In the equality symmetry, languages recognized by \GNFA\ are not closed under complementation.
	\end{lem}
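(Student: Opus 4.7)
The natural witness is the language $L = \{w \in \D^* : w \text{ contains two equal letters}\}$, whose complement $L^c = \{w \in \D^* : \text{all letters of } w \text{ are pairwise distinct}\}$ will be shown not to be recognized by any $G$-NFA. For $L$ itself, I would exhibit a concrete $G$-NFA with state space $\{q_I, q_F\} \cup \D$ (three orbits), initial state $q_I$, accepting state $q_F$: the automaton non-deterministically ``commits'' to a letter by moving from $q_I$ to state $d \in \D$, and then transits to $q_F$ when it later reads $d$ again. All transitions are symmetric in $\D$ and hence equivariant.

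For the hard direction, suppose some $G$-NFA $\mathcal{A}$ with orbit finite nominal state space $Q$ recognizes $L^c$. Since $Q$ is orbit finite and each of its finitely many orbits has a fixed support size, there is a uniform $k$ such that every state of $\mathcal{A}$ has a support of size at most $k$. Pick a word $w = d_1 \cdots d_{2k+2}$ of pairwise distinct data values and an accepting run $q_0 \to q_1 \to \cdots \to q_{2k+2}$ of $\mathcal{A}$ on $w$. Focus on the pivot state $q_{k+1}$ and one of its supports $S$ of size at most $k$; since $|S| < k+1$, there exist indices $j \in \{1, \ldots, k+1\}$ and $l \in \{k+2, \ldots, 2k+2\}$ with $d_j, d_l \notin S$.

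In the equality symmetry $G = \permgrp(\D)$, the transposition $\pi = (d_j\ d_l) \in G$ acts as the identity on $S$, so $q_{k+1} \cdot \pi = q_{k+1}$. Applying $\pi$ by equivariance of $\delta$ to the suffix of the run starting at $q_{k+1}$, I obtain a valid (and still accepting, since $F$ is equivariant) run on the modified word $d_1 \cdots d_{k+1}\, d_{k+2} \cdots d_{l-1}\, d_j\, d_{l+1} \cdots d_{2k+2}$; here $d_j$ appears both at position $j$ and at position $l$, so $\mathcal{A}$ accepts a word containing a repeat, contradicting recognition of $L^c$.

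The key difficulty is choosing the transposition correctly: it must simultaneously fix the pivot state (so the modified suffix run remains valid from $q_{k+1}$) and introduce an actual duplicate into the rewritten word. Fixing $q_{k+1}$ forces $d_j, d_l \notin S$; creating a duplicate forces the swap to exchange a prefix letter with a suffix letter. The length $2k+2$ is dictated precisely by pigeonhole: together with $|S| \leq k$, it is the shortest length that guarantees such $d_j$ and $d_l$ exist on both sides of the split point.
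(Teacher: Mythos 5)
Your proposal is correct and follows essentially the same route as the paper: the same witness language, a uniform bound on support sizes across the finitely many orbits, a word of pairwise distinct letters split at a pivot state, a pigeonhole argument producing a prefix letter and a suffix letter outside the pivot's support, and a transposition that fixes the pivot state while creating a duplicate. The only (harmless) differences are that you spell out the $G$-NFA for $L$ explicitly and apply the transposition to the suffix of the run rather than the prefix.
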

	\begin{proof}\label{lem:no-complementation-for-nGa}
		Anticipating Section~\ref{sec:register-models}, we follow that same lines as the proof that finite memory automata of Francez and Kaminski are not closed under complementation.
		Consider the words over $\D$  which contain some data value twice:
		\begin{align*}
			L = \bigcup_{d \in \D} \D^* \cdot d \cdot  \D^* \cdot d \cdot  \D^*.
		\end{align*}
		The complement of this language is the set of words where all letters are distinct. Suppose that the complement of  $L$ is recognized by a \GNFA\ $\Aa$, with states $Q$ and transitions~$\delta$. For each $q \in Q$, let $C_q \subseteq \D$ be some chosen finite 
support 
of $q$. By Lemma~\ref{lem:actionpreservessupport}, the sets $C_q$ may be chosen so that 
the size of $C_q$ depends only on the orbit of $q$, and therefore
		\begin{align*}
			 \max_{q \in Q}  |C_q|
		\end{align*}
		is a finite number, since there are finitely many orbits in $Q$. Choose $n \in \Nat$ to be bigger than this finite number.
		Consider a word 
		\begin{align*}
			d_1 \cdots d_{2n} \not \in  L.
		\end{align*}
			This word should be accepted by  $\Aa$, so there should be an accepting run
		\begin{align*}
			q_0,\ldots,q_{2n}  \qquad\mbox{such that $(q_{i-1},d_i,q_i) \in \delta$ for all $i \in \set{1,\ldots,2n}$}.
		\end{align*}
		Because the least support $C_{q_{n}}$ of $q_{n}$  has fewer than $n$ data values it follows that 
		\begin{align*}
			d_i,d_j \not \in C_{q_{n}} \qquad\mbox{ for some $i \in \set{1,\ldots,n}$ and some $j \in \set{n+1,\ldots,2n}$}.
		\end{align*}
		Let $\pi$  be the transposition which swaps $d_i$ and $d_j$. By equivariance of the transition relation, we see that the sequence
		\begin{align*}
			q_0 \cdot \pi,\ldots, q_n \cdot \pi
		\end{align*}
		is a run over the prefix 
		\begin{align*}
			(d_1 \cdots d_n) \cdot \pi.
		\end{align*}
		Because $\pi$ does not move the support of $q_n$, it follows that  $q_n \cdot \pi = q_n$.  Therefore, the sequence
		\begin{align*}
			q_0 \cdot \pi,\ldots, q_n \cdot \pi, q_{n+1},\ldots,q_{2n}
		\end{align*}
		is an accepting  run over the word 
		\begin{align*}
			((d_1 \cdots d_n) \cdot \pi) \cdot  d_{n+1} \cdots d_{2n}.
		\end{align*}
		However, the above word contains the data value $d_j$ twice, so it should be rejected by $\Aa$. 
		\end{proof}

\section{Relationship with finite memory automata}
\label{sec:register-models}
In this section, we take a detour from the discussion of automata theory in general symmetries, and we discuss the special case of the equality symmetry $(\D, G)$. In this case, for alphabets of a special form, the abstract model of nominal finite automata coincides with an existing automaton model, namely the finite memory automata of Francez and Kaminski. A connection between finite memory automata and nominal sets was first made in~\cite{ciancia-tuosto}, in the related framework of named sets and history-dependent automata. However, no comparison of the expressive power of automata was considered there.

\subsection{Finite memory automata}
\label{sec:finite-memory-automata}
We begin by defining \emph{finite memory automata}~\cite{FK94}, known also under the name \emph{register automata}~\cite{DL09}.

\paragrafik{Partial data tuples} 
Consider a finite set $N$ of names. A partial data tuple over $N$ is a partial function from $N$ to $\D$. We write $(\D \cup \bot)^N$ for the set of partial data tuples. 
An equality constraint over $N$ is an element
\begin{align*}
	(r,\tau,r') \in N \times \set{=,\neq} \times N
\end{align*}
We say a partial tuple $t$ satisfies the constraint if $t(r)$ is defined, and $t(r')$ is defined, and their data values are related by $\tau$.  For instance, the completely undefined tuple is the  unique  partial tuple that satisfies no constraints.  

\begin{lem}\label{lem:bool-comb-equality-constraints}
	Every equivariant subset of $(\D \cup \bot)^N$ is  equivalent to a boolean combination of equality constraints.
\end{lem}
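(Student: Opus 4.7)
The plan is to characterize the $G$-orbits of $(\D \cup \bot)^N$ explicitly and to show that each orbit is defined by a finite conjunction of (possibly negated) equality constraints. Since $N$ is finite, the total number of orbits is finite, hence every equivariant subset, being a union of orbits, can be written as a finite disjunction of such conjunctions, i.e.\ a boolean combination of equality constraints.

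First, I would describe the orbits. Given a partial tuple $t \in (\D \cup \bot)^N$, let $D_t \subseteq N$ denote the set of names on which $t$ is defined, and let $\sim_t$ be the equivalence relation on $D_t$ given by $r \sim_t r'$ iff $t(r) = t(r')$. The claim is that two partial tuples $t$ and $t'$ lie in the same $G$-orbit if and only if $D_t = D_{t'}$ and $\sim_t \,=\, \sim_{t'}$. The forward direction is immediate, since every $\pi \in G$ preserves both the domain of a partial tuple and the equality pattern of its values. For the converse, the assumed data yield a well-defined bijection between the finite ranges $t(D_t)$ and $t'(D_{t'})$ sending $t(r)$ to $t'(r)$; since $\D$ is infinite and $G = \permgrp(\D)$, this bijection extends to some $\pi \in G$ with $t \cdot \pi = t'$.

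Next, for each pair $(D, {\sim})$ consisting of $D \subseteq N$ and an equivalence $\sim$ on $D$, the corresponding orbit is cut out by the conjunction of the following atoms and negated atoms: $(r, =, r)$ for each $r \in D$ (asserting $t(r)$ is defined); $\neg(r, =, r)$ for each $r \in N \setminus D$ (asserting $t(r)$ is undefined); $(r, =, r')$ for each $r \sim r'$ in $D$; and $(r, \neq, r')$ for each $r, r' \in D$ with $r \not\sim r'$. Verifying that this conjunction is satisfied by precisely the tuples in the orbit of $(D, \sim)$ is a direct unravelling of the definitions.

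Finally, since $N$ is finite there are only finitely many such pairs $(D, {\sim})$, so $(\D \cup \bot)^N$ is orbit finite, and any equivariant subset is a finite union of orbits. Taking the disjunction of the corresponding conjunctions writes it as a boolean combination of equality constraints. The only substantive step is the orbit characterization; the remainder is bookkeeping, and no real difficulty arises because the equality symmetry makes the orbit structure completely transparent.
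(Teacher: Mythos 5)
Your proof is correct and follows essentially the same route as the paper's: identify each orbit of $(\D\cup\bot)^N$ with its pattern of satisfied equality constraints (equivalently, your pair $(D,{\sim})$), show that this pattern determines the orbit by extending the induced finite bijection on values to a permutation of $\D$, and conclude that an equivariant set is a finite disjunction of the corresponding conjunctions. You merely spell out the orbit characterization and the encoding of definedness via $(r,=,r)$ more explicitly than the paper does.
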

\begin{proof}
Fix an arbitrary orbit of $(\D \cup \bot)^N$ and an arbitrary element $x$ of the orbit. 
Consider the set of equality constraints satisfied by $x$. A crucial but easy observation is that 
precisely the same constraints are satisfied by all other elements of the orbit.
On the other side, any two tuples that satisfy the same equality constraints are related by some permutation $\pi$.
Thus the orbit is equivalent to a conjunction of equality constraints. 
\end{proof}
There are only finitely many equality constraints,  as long as $N$ is finite, thus 
by the above lemma $(D \cup \bot)^N$ is an orbit finite nominal set.

\newcommand{\afin}{A_{\mathrm{fin}}}
\begin{defi}\label{def:finite-memory-automaton}
	A \emph{nondeterministic finite memory automaton} consists of 
	\begin{itemize}
		\item a finite set $\afin$ of input labels;
		\item a finite set $C$ of control states;
		\item a finite set $N$ of register names;
		\item sets of initial  $I \subseteq C$ and final $F \subseteq C$ control states;
		\item a transition relation, which is a subset of 
		\begin{align*}
			\delta \subseteq C \times \afin \times bool(\Phi) \times C
		\end{align*}
		where $\Phi$ is the set of equality constraints over the following set of names:
		\begin{align*}
			N'  = \qquad  \set{\mathrm{before}}\times N \quad \cup \quad \set{\mathrm{input}} \quad \cup \quad \set{\mathrm{after}}\times N
		\end{align*}
		and $bool(\Phi)$ stands for the boolean combinations of constraints from $\Phi$.
	\end{itemize}
\end{defi}\smallskip

\noindent Such an automaton $\Aa$ is used to accept or reject words over the alphabet $\afin \times \D$, and  works as follows. After reading a prefix of the input word, the configuration of the automaton consists  of a control state from $C$ together with a partial valuation from registers to data values. In other words, a configuration is an element of the set
\begin{align*} 
Q_\Aa =	C \times (\D \cup \bot)^N.
\end{align*}
Initial configurations are the ones of the form
\[
	(c,\bot,\ldots,\bot)\in Q_\Aa
\]
where $c\in I$; note that there are only finitely many of them.
Suppose that the automaton is in a configuration
\begin{align*}
	(c,d_1,\ldots,d_n) \in Q_\Aa
\end{align*} 
and that it reads an input letter $(a,d) \in A$. The automaton can nondeterministically choose any new configuration
\begin{align*}
	(c',d'_1,\ldots,d'_n) \in Q_\Aa
\end{align*}
provided that there is a transition
\begin{align*}
	(c,a,\phi,c') \in \delta
\end{align*}
such that the partial tuple 
\begin{align*}
	(d_1,\ldots,d_n,d,d'_1,\ldots,d'_n),
\end{align*}
interpreted as a partial tuple over $N'$, satisfies the boolean combination of equality constraints given by $\phi$.

\begin{lem}\label{lem:simple-equivalence-with-Kaminski} Consider  an alphabet $A = \afin \times \D$, where $\afin$ is a finite set.  Then the following conditions are equivalent for every language $L \subseteq A^*$:
	\begin{enumerate}
			\item $L$ is recognized by a finite memory automaton.
		\item $L$ is recognized by a \GNFA , where
		\begin{itemize}
			\item The state space is $C \times (\D \cup \bot)^n$ for some finite set $C$ and $n \in \Nat$, 
			\item There are finitely many initial states.
		\end{itemize}
	\end{enumerate}
\end{lem}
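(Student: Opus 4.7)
For $(1)\Rightarrow(2)$, given a finite memory automaton $\mathcal{A}=(\afin,C,N,I,F,\delta)$ with $|N|=n$, the natural construction takes as \GNFA\ the set of configurations $C\times(\D\cup\bot)^N$, with $G$ acting trivially on $C$ and pointwise on tuples. The initial states $I\times\{(\bot,\ldots,\bot)\}$ are automatically equivariant (being finite), the accepting states $F\times(\D\cup\bot)^N$ are equivariant, and the transition relation lifts each $(c,a,\phi,c')\in\delta$ to all triples $((c,t),(a,d),(c',t'))$ whose combined partial tuple $(t,d,t')$ over $N'$ satisfies $\phi$. Equivariance follows because equality constraints are preserved by permutations, and agreement of languages is immediate from the operational semantics of $\mathcal{A}$.

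For $(2)\Rightarrow(1)$, I would begin with two observations. First, in the equality symmetry every orbit of $(\D\cup\bot)^n$ except $\{(\bot,\ldots,\bot)\}$ is infinite, so the only tuple with a finite orbit is the fully undefined one; the finite set of initial states must therefore be of the form $I\times\{(\bot,\ldots,\bot)\}$ for some $I\subseteq C$. Second, for each $(c,a,c')\in C\times\afin\times C$ the section $\{(t,d,t'):((c,t),(a,d),(c',t'))\in\delta\}$ is an equivariant subset of $(\D\cup\bot)^{N'}$ and, by Lemma~\ref{lem:bool-comb-equality-constraints}, equals a boolean combination $\phi_{c,a,c'}$ of equality constraints, which will furnish an FMA transition.

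The main obstacle is translating the accepting states, since $F\subseteq C\times(\D\cup\bot)^n$ need only be an equivariant union of orbits and may depend nontrivially on the equality pattern of the register tuple, whereas an FMA accepts on the basis of the control state alone. To bridge this gap I would refine the FMA's control set to $C\times\Omega$, where $\Omega$ is the finite set of orbits of $(\D\cup\bot)^n$ (equivalently, the maximal consistent sets of equality constraints over $n$ registers). Each transition of the refined FMA tracks the component $\omega\in\Omega$ in the control state, which is well-defined because the orbit of the outgoing tuple is determined by $\omega$, the input datum, and $\phi_{c,a,c'}$, and this extra information can be encoded as further equality constraints inside the guard. The accepting states become $\{(c,\omega):\{c\}\times O_\omega\subseteq F\}\subseteq C\times\Omega$ as required, and a routine induction on word length confirms that both devices accept exactly the same language.
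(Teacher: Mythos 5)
Your proof is correct and follows the same overall outline as the paper's (which is only a few lines long): the forward direction is the identification of \GNFA\ states with configurations, and the converse rests on the observation that a finite equivariant set of initial states must consist of $(c,\bot^n)$ configurations together with an appeal to Lemma~\ref{lem:bool-comb-equality-constraints} to turn equivariant transition relations into boolean combinations of equality constraints. Where you go beyond the paper is in the treatment of accepting states: you correctly observe that an equivariant $F\subseteq C\times(\D\cup\bot)^n$ may accept some orbits of register tuples and not others for the same control state $c$, whereas a finite memory automaton accepts on the control state alone, and you repair this by refining the control set to $C\times\Omega$ with $\Omega$ the (finite) set of orbits of $(\D\cup\bot)^n$ and threading the orbit through the guards. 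This refinement is sound --- each maximal consistent set of equality constraints over $N'$ determines the orbits of both the before- and after-tuples, so the invariant that the $\Omega$-component matches the orbit of the actual register contents is maintained --- and it fills a step that the paper's proof leaves entirely implicit. So: same route, but your version is the more complete argument; the paper buys brevity at the cost of silently assuming the reader will perform exactly this product construction.
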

\begin{proof}
	The  implication from (1) to (2) follows immediately from the definition; states of the \GNFA{} correspond to configurations in the finite memory automaton. Note that $\bot$ is a singleton orbit in $\D \cup \bot$. 
	For the converse implication, we use Lemma~\ref{lem:bool-comb-equality-constraints}.  The assumption on initial states guarantees that every initial state is of  the form $(c,\bot^n)$ for some $c \in C$.
\end{proof}

\subsection{Equivalence for nondeterministic automata}
\label{sec:nNFA=Kaminski}
In this section, we prove a stronger version of Lemma~\ref{lem:simple-equivalence-with-Kaminski}, namely:
\begin{thm}\label{thm:kaminski} Consider  an alphabet $A = \afin \times \D$, where $\afin$ is a finite set.  Then the following conditions are equivalent for every language $L \subseteq A^*$:
	\begin{enumerate}
		\item $L$ is recognized by a finite memory automaton.
		\item $L$ is recognized by a \GNFA.
	\end{enumerate}
\end{thm}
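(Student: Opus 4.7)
The implication (1)$\Rightarrow$(2) is immediate from Lemma~\ref{lem:simple-equivalence-with-Kaminski}, so the real content is the converse. The plan is, starting from an arbitrary \GNFA{} $\Aa$ recognising $L$ with state space $Q$, alphabet $A = \afin\times\D$, initial set $I$, accepting set $F$ and transitions $\delta$, to build an equivalent \GNFA{} whose state space has the concrete form $C \times (\D \cup \bot)^n$ with only finitely many initial states of the form $(c,\bot^n)$; Theorem~\ref{thm:kaminski} then follows by Lemma~\ref{lem:simple-equivalence-with-Kaminski}.

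First I would reduce to a single empty-supported initial state. Adjoin a fresh singleton orbit $\set{q_0}$ to $Q$, declare $q_0$ accepting iff $I\cap F\neq\emptyset$, and add $\emptyword$-transitions $\set{(q_0,q): q\in I}$; this relation is equivariant because $I$ is. Lemma~\ref{lem:epsilon-transitions} eliminates the $\emptyword$-transitions, yielding an equivalent \GNFA{} $\Aa'$ whose unique initial state is $q_0$, with empty support.

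Next, decompose the state space of $\Aa'$ into its finitely many orbits $Q_1,\ldots,Q_k$, with $Q_1=\set{q_0}$. By Corollary~\ref{cor:simple-representation-lemma} there exist equivariant surjections $f_i:\D^{(n_i)}\to Q_i$ (and $n_1=0$). Set $n=\max_i n_i$, $C=\set{1,\ldots,k}$, and identify $\D^{(n_i)}$ with the subset of $(\D\cup\bot)^n$ consisting of tuples whose first $n_i$ coordinates are pairwise distinct elements of $\D$ and whose remaining coordinates are $\bot$. This assembles into an equivariant partial surjection $\Phi:C\times(\D\cup\bot)^n\to Q\cup\set{q_0}$. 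Transport everything along $\Phi$: the lifted transition relation $\tilde\delta$ contains $(\tilde q,a,\tilde q')$ precisely when $\Phi$ is defined on both sides and $(\Phi(\tilde q),a,\Phi(\tilde q'))\in\delta$ (with the extra constraint that $\tilde q,\tilde q'$ have the ``right shape'' for their control states, i.e.~used registers distinct, unused registers $\bot$, all expressible as boolean combinations of equality constraints using the convention that $(r,=,r)$ holds iff $r$ is defined). For each fixed $(i,b,j)\in C\times\afin\times C$, the restriction of $\tilde\delta$ to transitions of the form $((i,t),(b,d),(j,t'))$ is an equivariant subset of a product of $2n+1$ copies of $\D\cup\bot$, so by Lemma~\ref{lem:bool-comb-equality-constraints} it is a boolean combination of equality constraints over the name set $N'$ of Definition~\ref{def:finite-memory-automaton}. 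These are the transitions of the finite memory automaton; the sole initial configuration is $(1,\bot^n)$, and control state $i$ is accepting iff $Q_i$ is contained in the accepting set of $\Aa'$, which is well defined since accepting sets are equivariant and hence unions of orbits.

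The main technical task is to verify that the resulting finite memory automaton recognises exactly $L$. One direction is immediate because $\Phi$ is surjective onto the state space of $\Aa'$, so every accepting run of $\Aa'$ lifts to an accepting computation of the finite memory automaton; the other direction follows by induction on run length, using that $\tilde\delta$ is obtained by pullback along $\Phi$. The main subtlety I expect is that $\Phi$ is not injective, so a single state of $\Aa'$ may have several representatives in $C\times(\D\cup\bot)^n$; this only multiplies nondeterminism but preserves the recognised language. A secondary point to check is that the shape constraints on tuples are correctly enforced by the boolean combinations built via Lemma~\ref{lem:bool-comb-equality-constraints}, so that only valid representatives participate in runs.
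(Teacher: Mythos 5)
Your proof is correct and follows essentially the same route as the paper's: reduce to a single empty-supported initial state, pull the automaton back along a partial surjective equivariant function from $C\times(\D\cup\bot)^n$ onto the state space, and conclude via Lemma~\ref{lem:simple-equivalence-with-Kaminski}. The only differences are presentational: the paper packages your orbit-by-orbit construction of $\Phi$ as Corollary~\ref{cor:orbit finite-simple-representation}, and it writes down the composed transitions out of the new initial state directly rather than going through $\emptyword$-transitions and Lemma~\ref{lem:epsilon-transitions}.
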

The implication from (1) to (2) has already been shown in Lemma~\ref{lem:simple-equivalence-with-Kaminski}. The   rest of Section~\ref{sec:nNFA=Kaminski} is devoted to the implication from (2) to (1).

\begin{cor}\label{cor:orbit finite-simple-representation}
	Every orbit finite nominal set is an image, under a partial  equivariant function $f$ that preserves least supports, of a  set of the form
	\begin{align*}
		I \times (\D \cup \bot)^n \qquad \mbox{for some finite set $I$ and $n \in \Nat$}.
	\end{align*}
\end{cor}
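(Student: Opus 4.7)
The plan is to reduce the statement to Corollary~\ref{cor:simple-representation-lemma} by assembling representations of the finitely many orbits of the given nominal set into a single partial function on $I\times(\D\cup\bot)^n$.

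First, since $X$ is orbit finite, decompose it as a disjoint union $X = O_1 \cup \cdots \cup O_k$ of its (finitely many) orbits, each of which is itself a single-orbit nominal set. By Corollary~\ref{cor:simple-representation-lemma}, for each $i\in\{1,\ldots,k\}$ there is a number $n_i\in\N$ and an equivariant surjection
\[
	f_i : \D^{(n_i)} \to O_i
\]
that preserves least supports. Set $n = \max_i n_i$ and take $I = \{1,\ldots,k\}$ as a finite set equipped with the trivial $G$-action.

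Next, I would define the partial equivariant function $f: I \times (\D \cup \bot)^n \to X$ on precisely those tuples $(i, (d_1,\ldots,d_n))$ whose first $n_i$ coordinates are pairwise distinct elements of $\D$ and whose remaining $n-n_i$ coordinates all equal $\bot$; on such a tuple, put $f(i, (d_1,\ldots,d_n)) = f_i(d_1,\ldots,d_{n_i})$. Equivalently, $f$ is the composition of $f_i$ with the evident equivariant embedding of $\D^{(n_i)}$ into $(\D\cup\bot)^n$ by padding with $\bot$'s on the last $n-n_i$ coordinates, and summed over the finite index set $I$.

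Finally, I would verify the three required properties, each of which is routine. Equivariance of $f$ follows from equivariance of each $f_i$ together with the fact that $G$ fixes $\bot$ and preserves pairwise distinctness; surjectivity follows from surjectivity of each $f_i$, since every element of $X$ lies in some orbit $O_i$ and is hit by $f_i$, hence by $f$; and preservation of least supports follows from preservation by each $f_i$ and the observation that $\{d_1,\ldots,d_{n_i}\}$ is the least support both of the padded tuple in $I\times(\D\cup\bot)^n$ (the index $i$ contributes nothing, and $\bot$ contributes nothing) and of its image $f_i(d_1,\ldots,d_{n_i})$ in $O_i$. There is no genuine obstacle here beyond care in the bookkeeping; the only subtlety is allowing $f$ to be partial, which the statement explicitly permits and which is exactly what lets us unify orbits of different support sizes under a single ambient $n$.
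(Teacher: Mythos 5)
Your proposal is correct and follows essentially the same route as the paper: decompose $X$ into its finitely many orbits, apply Corollary~\ref{cor:simple-representation-lemma} to each to get surjections from $\D^{(n_i)}$ preserving least supports, and embed each $\D^{(n_i)}$ into $(\D\cup\bot)^n$ (with $n=\max_i n_i$) by padding with $\bot$'s, using the orbit index as the finite component $I$. The paper's proof is terser (it leaves the padding embedding and the verification of least-support preservation implicit), but the construction is identical.
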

\begin{proof}
	Suppose that $X$ is a nominal set with $k$ orbits. 
Recall from Example~\ref{ex:gsets} that $\D^{(n)}$ is the set of non-repeating $n$-tuples of data values.
By Corollary~\ref{cor:simple-representation-lemma}, $X$ is an image of the disjoint union:
\begin{equation}
	\label{eq:typical-orbit finite-set}
	\coprod_{i \in \{1\ldots k\}} \D^{(n_i)}.
\end{equation}
	Let $n$ be the maximal number among $\set{n_i}_{i \in \set{1 \ldots k}}$. It is not difficult to see that $\D^{(n_i)}$ is isomorphic to an orbit of $(\D \cup \bot)^n$. It follows that the disjoint union from~\eqref{eq:typical-orbit finite-set} is isomorphic to an equivariant subset of  $\set{1 \ldots k} \times (\D \cup \bot)^n$.
\end{proof}

We are now ready to prove Theorem~\ref{thm:kaminski}. Consider a \GNFA\ 
	$\Aa = (Q,A,I,F,\delta)$
with $A = \afin \times \delta$ for some finite set $\afin$. We assume that there is only one initial state, call it $q_I$. Otherwise, we add a new initial state, call it $q_I$, with a trivial action
\begin{align*}
	q_I \cdot \pi = q_I,
\end{align*}
and extend the set of transitions by the equivariant set of triples of the form
\begin{align*}
	\set{(q_I,a,q) :  (p,a,q) \in \delta \mbox{ for some $p \in I$}}.
\end{align*}
Basing on Lemma~\ref{lem:simple-equivalence-with-Kaminski}, we only need to show that there is an equivalent
\GNFA\ with a single initial state, whose 
state space is $C \times (\D \cup \bot)^n$ for some finite set $C$ and $n \in \Nat$.
Apply Corollary~\ref{cor:orbit finite-simple-representation} to $Q$, yielding a partial surjective equivariant function
\begin{align*}
	f : \quad C \times (\D \cup \bot)^n  \quad \to \quad Q
\end{align*}
for some finite set $C$ and $n \in \Nat$. Because there is just one initial state, we may assume that 
\begin{align*}
	q_I = f(c_I,\bot^n) 
\end{align*}
for some $c_I \in C$. Define a \GNFA, call it $f^{-1}(\Aa)$, with states $C \times (\D \cup \bot)^n$, initial state $(c_I, \bot^n)$, final states $f^{-1}(F)$ and transitions $f^{-1}(\delta)$. It is easy to see that the automata $\Aa$ and $f^{-1}(\Aa)$ recognize the same language. This completes the proof of Theorem~\ref{thm:kaminski}.

\paragrafik{Local symmetry}
Although finite memory automata and \GNFA\ have the same expressive power, the latter model is arguably richer and has more structure. Indeed, in contrast to Lemma~\ref{lemma:syntaut}, syntactic automata of $G$-languages are not necessarily finite memory automata. An example is the automaton from Example~\ref{ex:syntaut}, which does not arise from any finite memory automaton. This is because \GNFA\ allow for a \emph{local symmetry}\footnote{The notion of local symmetry is introduced in its full generality in Section~\ref{sec:fraisse}.}, as illustrated in Example~\ref{ex:syntaut} where a \GNFA\  stores an unordered pair of data values instead of an ordered one; on the other hand finite memory automata do not allow any notion of local symmetry, or permutation, of registers. As a result, the Myhill-Nerode theorem fails, and finite memory automata do not minimize: the syntactic automaton is always a homomorphic image of a finite memory automaton, but it may not be isomorphic to one.

The importance of local symmetries for automata minimization was first noticed in the context of history-dependent automata, in~\cite{P99}.

\subsection{Equivalence for deterministic automata}
\label{sec:dNFA=Kaminski}


Recall that the set of configurations of a finite memory automaton $\Aa$ is $Q_\Aa =	C \times (\D \cup \bot)^N$.
The semantics of a nondeterministic finite memory automaton is given by a transition relation between configurations, 
being an equivariant subset
of $Q_\Aa \times (A_\text{fin} \times \D) \times Q_\Aa$. A finite memory automaton is called deterministic if this relation is actually a function
$Q_\Aa \times (A_\text{fin} \times \D) \to Q_\Aa$.

In this section, we  prove a deterministic variant of Theorem~\ref{thm:kaminski}:
\begin{thm}\label{thm:det-kaminski} Consider  an alphabet $A = \afin \times \D$, where $\afin$ is a finite set.  Then the following conditions are equivalent for every language $L \subseteq A^*$:
	\begin{enumerate}
		\item $L$ is recognized by a deterministic finite memory automaton.
		\item $L$ is recognized by a \GDFA.
	\end{enumerate}
\end{thm}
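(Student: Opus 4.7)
For the forward direction $(1) \Rightarrow (2)$, the construction in the proof of Lemma~\ref{lem:simple-equivalence-with-Kaminski} applied to a deterministic finite memory automaton produces a $G$-automaton whose transition relation is functional and which has a unique initial configuration, hence a \GDFA. The real work is the converse $(2) \Rightarrow (1)$. Given a \GDFA\ $\Aa = (Q, A, q_I, F, \delta)$ with $A = \afin \times \D$, I first apply Corollary~\ref{cor:orbit finite-simple-representation} to $Q$, obtaining a partial equivariant surjection $f : I \times (\D \cup \bot)^n \to Q$ that preserves least supports, with $I$ finite. Since $q_I$ has empty least support, some $(c_I, \bot, \ldots, \bot)$ is a preimage of $q_I$ under $f$; this becomes the initial configuration of the deterministic finite memory automaton being built, and its accepting configurations are taken to be $f^{-1}(F)$.

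The crux is to define an equivariant transition function $\delta'$ on the domain of $f$ satisfying $f(\delta'(x, a)) = \delta(f(x), a)$. Note that the naive preimage $f^{-1}(\delta)$ is in general nondeterministic, so one cannot reuse the construction from Theorem~\ref{thm:kaminski}. Instead, I pick a representative $(x, a)$ from each orbit of $\mathrm{dom}(f) \times A$, choose $y \in f^{-1}(\delta(f(x), a))$ whose least support equals that of its image --- such $y$ exists because $f$ preserves least supports --- set $\delta'(x, a) = y$, and extend by equivariance. This extension is well-defined because in the representation $I \times (\D \cup \bot)^n$ the stabilizer of any tuple is precisely the pointwise stabilizer of the data values appearing in it (i.e.\ of its least support). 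Since the least support of $y$ equals that of $\delta(f(x), a)$, which is contained in $\mathrm{supp}(x) \cup \mathrm{supp}(a)$, every permutation fixing $(x, a)$ also fixes $y$, so the equivariant extension is consistent. Finally I extend $\delta'$ to a total function by routing the complement of $\mathrm{dom}(f)$ to a fresh trash configuration.

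The resulting \GDFA\ has state space of the form required by Lemma~\ref{lem:simple-equivalence-with-Kaminski}, so the construction from its proof --- which via Lemma~\ref{lem:bool-comb-equality-constraints} produces a finite memory automaton whose transition relation is functional whenever the input relation is --- yields the desired deterministic finite memory automaton; it recognizes $L$ by a routine induction on word length using $f$ as a simulation. The main obstacle is the well-definedness of the equivariant lifting of $\delta$; this step crucially relies on the ``rigidity'' of the tuple representation $I \times (\D \cup \bot)^n$, and without it the lifting can genuinely fail, as witnessed by the syntactic automaton of Example~\ref{ex:syntaut} which is itself not a deterministic finite memory automaton.
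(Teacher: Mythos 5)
Your proof is correct and follows essentially the same route as the paper: reduce via Corollary~\ref{cor:orbit finite-simple-representation} to a state space of the form $I\times(\D\cup\bot)^n$, lift the transition function along the support-preserving surjection $f$ orbit-by-orbit (the paper isolates this as Lemma~\ref{lem:find-a-subset-function}), and conclude with Lemma~\ref{lem:simple-equivalence-with-Kaminski}. Your explicit appeal to the rigidity of the tuple representation (stabilizers being pointwise stabilizers of least supports) to justify well-definedness of the equivariant extension is the same mechanism the paper uses, just stated inline rather than as a separate lemma.
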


We do the same proof as for the nondeterministic automata. The only problem is that $f^{-1}(\delta) 
$ might not in general be deterministic. To solve this problem, we need one additional result: 


\begin{lem}\label{lem:find-a-subset-function}
	Suppose that $f : X \to X'$ is a surjective equivariant function that preserves least supports. Then for every nominal set $A$, and every equivariant function 
	\begin{align*}
		\delta': X' \times A \to X'
	\end{align*}
	there exists a function
	\begin{align*}
		\delta : X \times A \to X
	\end{align*}
	such that the following diagram commutes
	\begin{equation}\label{dgm:commute-x1-x2}
	\vcenter{\xymatrix @R=1pc {
	X\times A\ar[r]^{\delta} \ar[d]_{f \times \text{Id}_A} & X \ar[d]^f  \\
	 X'\times A\ar[r]_{\delta'} & X'
	}}
	\end{equation}	
\end{lem}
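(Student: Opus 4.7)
The plan is to define $\delta$ one orbit at a time on $X \times A$, using the axiom of choice on orbit representatives. For each orbit $O \subseteq X \times A$ I fix a representative $(x_0, a_0) \in O$, set $y' = \delta'(f(x_0), a_0) \in X'$, and pick some $y_0 \in f^{-1}(y')$, which is nonempty by surjectivity of $f$. I then extend $\delta$ on the orbit by the formula $\delta((x_0, a_0) \cdot \pi) := y_0 \cdot \pi$ for all $\pi \in G$, and repeat this for every orbit. Commutativity of the square is immediate from the choice of $y_0$: by equivariance of $f$ and $\delta'$,
\[
f(\delta((x_0, a_0) \cdot \pi)) \ = \ f(y_0) \cdot \pi \ = \ y' \cdot \pi \ = \ \delta'(f(x_0) \cdot \pi,\, a_0 \cdot \pi) \ = \ \delta'((f \times \mathrm{Id}_A)((x_0,a_0) \cdot \pi)).
\]

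The main obstacle is well-definedness of the orbit-wise formula: it makes sense only if $\mathrm{Stab}(x_0, a_0) \subseteq \mathrm{Stab}(y_0)$, so that whenever $\pi$ and $\sigma$ send $(x_0, a_0)$ to the same element they also send $y_0$ to the same element. Equivalently, $y_0$ must be a fixed point of the natural action of $\mathrm{Stab}(x_0, a_0)$ on the fiber $f^{-1}(y')$ (note that this fiber is indeed preserved by $\mathrm{Stab}(y')$, which contains $\mathrm{Stab}(x_0, a_0)$); the crux of the proof is to exhibit such a fixed point and take it as our $y_0$.

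To exhibit the fixed point I would use preservation of least supports. Let $C$ be the least support of $(x_0, a_0)$; by equivariance of $f$ and $\delta'$ the set $C$ also supports $y'$, and by preservation of least supports every $y \in f^{-1}(y')$ has its least support contained in $C$, giving $\mathrm{Fix}(C) \subseteq \mathrm{Stab}(y)$ for every $y$ in the fiber. Since also $\mathrm{Fix}(C) \subseteq \mathrm{Stab}(x_0, a_0)$, what remains is to accommodate the elements of $\mathrm{Stab}(x_0, a_0)$ that act as nontrivial permutations of $C$. This is the hard part, and it is what forces $y_0$ to be chosen carefully rather than arbitrarily in the fiber. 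In the application to Theorem~\ref{thm:det-kaminski}, both the domain $X = I \times (\D \cup \bot)^n$ supplied by Corollary~\ref{cor:orbit finite-simple-representation} and the alphabet $A = \afin \times \D$ enjoy the rigidity property $\mathrm{Stab}(z) = \mathrm{Fix}(\mathrm{supp}(z))$ for every element; this collapses $\mathrm{Stab}(x_0, a_0)$ to $\mathrm{Fix}(C)$, so any $y_0 \in f^{-1}(y')$ works and the orbit-wise construction goes through uniformly.
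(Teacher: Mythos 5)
Your orbit-by-orbit strategy is exactly the one the paper uses, and the obstruction you isolate --- that the assignment $(x_0,a_0)\cdot\pi\mapsto y_0\cdot\pi$ is well defined only when $\mathrm{Stab}(x_0,a_0)\subseteq\mathrm{Stab}(y_0)$, whereas preservation of least supports only delivers $\mathrm{Fix}(C)\subseteq\mathrm{Stab}(y_0)$ --- is genuine. Be aware that the paper's own proof does not resolve it either: after noting that the least support of $y_i$ is contained in that of $(x_i,a_i)$, it simply asserts that ``it follows that there is an equivariant function $\delta_i:Y_i\to X$ with $\delta_i(x_i,a_i)=y_i$'', which is precisely the inference that fails when $\mathrm{Stab}(x_i,a_i)$ permutes the support of $y_i$ nontrivially. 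Moreover, no clever choice of $y_0$ can close this in general: in the equality symmetry take $X=\D^{(2)}$, $X'=\choose{\D}{2}$, $f(a,b)=\{a,b\}$ (surjective, equivariant, least-support-preserving), $A=\choose{\D}{2}$ and $\delta'(S,T)=T$. Any $\delta$ making the square commute must send $z=((a,b),\{c,d\})$, for pairwise distinct $a,b,c,d$, to $(c,d)$ or $(d,c)$; the transposition $(c\ d)$ stabilizes $z$ but neither candidate value, so no \emph{equivariant} $\delta$ exists. (Read literally the lemma only asks for ``a function'', which is trivially obtained by picking any element of each fiber and needs neither $\delta'$'s equivariance nor least supports; but the $\delta$ that the paper constructs, and that the proof of Theorem~\ref{thm:det-kaminski} needs in order to end up with a $G$-automaton, is equivariant.)

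So your refusal to wave away the ``hard part'' is the correct instinct, and your fallback is the right repair rather than a cop-out: under the additional hypothesis that every element of $X\times A$ is rigid, i.e.\ its stabilizer is the pointwise fixer of its least support, one gets $\mathrm{Stab}(x_0,a_0)=\mathrm{Fix}(C)$, every element of the fiber is supported by $C$ because $f$ preserves least supports, and any $y_0$ in the fiber makes the orbit-wise definition well defined; your commutativity computation then finishes the argument. That hypothesis holds for $X=I\times(\D\cup\bot)^n$ (ordered tuples carry no local symmetry) and for $A=\afin\times\D$, so the only use of the lemma, in Theorem~\ref{thm:det-kaminski}, survives intact. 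In short: your proof is incomplete as a proof of the lemma for arbitrary nominal $X$ and $A$, but so is the paper's, the fully general equivariant statement is false, and your restricted version is both provable by your argument and sufficient for the application; state the rigidity hypothesis explicitly in the lemma and you are done.
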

\begin{proof}
	Let $(Y_i)_{i\in I}$ be the family of all orbits of the set $X \times A$.

	Consider  some $i \in I$.
	Pick a representative  $(x_i,a_i) \in Y_i$.
	In the diagram~\eqref{dgm:commute-x1-x2}, follow the $f \times \text{Id}_A$ arrow, and then $\delta'$, yielding an element
	\begin{align*}
		x'_i = \delta'(f(x_i),a_i).
	\end{align*}
	Because the above element is the  result of applying two equivariant functions, the least support of $x'_i$ is a subset of the least support of $(x_i,a_i)$.  Because the function $f$ is surjective, there must be some $y_i \in X$ such that
	\begin{align*}
		f(y_i) = x'_i=  \delta'(f(x_i),a_i).
	\end{align*}
	Because the function $f$ preserves least supports, the least support of $y_i$ is equal to the least support of $x'_i$, which is included in the least support of $(x_i,a_i)$. It follows that there is an equivariant function
	\begin{align*}
		\delta_i : Y_i \to X \qquad \mbox{ such that }\delta_i(x_i,a_i)=y_i.
	\end{align*}
	
	Do the construction above for all orbits $Y_i$, yielding functions $(\delta_i)_{i\in I}$. Define $\delta$ to be the union of these functions. We now prove that the diagram~\eqref{dgm:commute-x1-x2} commutes. 
	
	Pick some $(x,a) \in X \times A$.  Because the pairs $(x_i,a_i)$ for $i\in I$
	represent all orbits of $X \times A$, it follows that 
	\begin{align*}
		(x,a)=(x_i \cdot \pi,a_i \cdot \pi) \qquad\mbox{for some $i \in I$ and some $\pi \in G$}.
	\end{align*}
	Following the down-right path in the diagram~\eqref{dgm:commute-x1-x2} from $(x,a)$ yields
	\begin{align*}
		\delta'(f(x),a)= \delta'(f(x_i \cdot \pi), a_i \cdot \pi) = \delta'(f(x_i),a_i) \cdot \pi.
	\end{align*}
	Following the right-down path in the diagram~\eqref{dgm:commute-x1-x2} from $(x,a)$ yields
	\begin{align*}
		f(\delta(x,a)) = f(\delta_i(x),a)) = f(\delta_i(x_i \cdot \pi,a_i \cdot \pi))=f(\delta_i(x_i,a_i)) \cdot \pi = f(y_i) \cdot \pi, 
	\end{align*}
	which means that the diagram commutes because $f(y_i) = \delta'(f(x_i),a_i)$.

\end{proof}

We now prove Theorem~\ref{thm:det-kaminski}. 
Let $X'$ be the state space of the \GDFA\ from item (2), and let $\delta'$ be its transition function. Apply Corollary~\ref{cor:orbit finite-simple-representation}
 to $X'$, yielding a partial surjective equivariant function $f : X \to X'$
where
\begin{align*}
  X = C \times (\D \cup \bot)^n.
\end{align*}
Let $Y \subseteq X$ be the domain of $f$.
Because $f$ preserves least supports, we can apply Lemma~\ref{lem:find-a-subset-function} for $f$, yielding a transition function $\delta : Y \times A \to Y$. Extend $\delta$ to an equivariant function
$X \times A \to X$ in an arbitrary way. 
The rest of the proof is the same as in Theorem~\ref{thm:kaminski}, using Lemma~\ref{lem:simple-equivalence-with-Kaminski}.



	\section{Other models and perspectives} \label{sec:others}
	
In Sections~\ref{sec:gautomata} and~\ref{sec:nominalgautomata} we  defined and studied the nominal version of finite automata. The same approach could be pursued for a wide variety of computation models. For a simple example:

\begin{defi}\label{def:nominal-pda}
		A \emph{nominal pushdown automaton} $\Aa$ consists of 
		\begin{itemize}
			\item an input alphabet $A$, which is an orbit finite nominal set;
			\item a set of states $Q$, which is an orbit finite nominal set;
			\item a stack alphabet $\Gamma$, which is an orbit finite nominal set;
			\item an initial state $q_I \in Q$, which is equivariant;
			\item an initial stack symbol $\gamma_I \in \Gamma$, which is equivariant;
			\item a set of transitions
			\begin{align*}
				\delta \subseteq Q \times \Gamma \times (A \cup \epsilon)  \times Q \times \Gamma^*
			\end{align*}
			which is orbit finite and equivariant.
		\end{itemize}
	\end{defi}\smallskip

\noindent By analogy to classical pushdown automata, the condition that the set of transitions is orbit finite is to prohibit a set of rules which can push arbitrarily large words onto the stack in one step.
Assuming acceptance by empty stack, the acceptance by a pushdown automaton is defined exactly like in the classical case. 

\begin{exa}\label{example:palindrome-pda}
	For an orbit finite alphabet $A$, consider the language of even-length palindromes:
	\begin{align*}
		P = \set{a_1a_2 \cdots a_n a_n \cdots a_2a_1 : a_1,\ldots,a_n \in A} \subseteq A^*
	\end{align*}
	This language is recognized by a nominal pushdown automaton which works exactly the same way as the usual automaton for palindromes, with the only difference that the stack alphabet $\Gamma$ is now $A$. For instance, in the case when $A = \D$, the automaton keeps a stack of data values during its computation. The automaton has two control states: one for the first half of the input word, and one for the second half of the input  word.
	\end{exa}
	
\begin{exa}\label{example:mid-palindrome-pda}
		The automaton in Example~\ref{example:palindrome-pda} had two control states. In some cases, it might be useful to have a set $Q$ of control states that is orbit finite, but not finite. Consider for example the set of odd-length palindromes where the middle letter is equal to the first letter:
		\begin{align*}
			P' = \set{a_1a_2 \cdots a_na_1 a_n \cdots a_2 a_1 : a_1,\ldots,a_n \in A } \subseteq A^*.
		\end{align*}
		A natural automaton  recognizing this language would be similar to the automaton for palindromes, except that it would store the first letter $a_1$ in its control state. 
\end{exa}

Also the definition of a nominal context-free grammar is obtained from the standard definition by replacing `finite' with 'orbit finite', and requiring elements and subsets to be equivariant.
	
	\begin{defi}\label{def:cfg}
		A \emph{nominal context-free grammar} $\Gg$ consists of 
		\begin{itemize}
			\item an input alphabet $A$, which is an orbit finite nominal set;
			\item a set of nonterminals $\Nn$, which is an orbit finite nominal set;
			\item a starting nonterminal, which is equivariant;
			\item an orbit finite, equivariant set of productions
			\begin{align*}
				\Pp \subseteq \Nn \times (\Nn \cup A)^*
			\end{align*}
		\end{itemize}
		As usual, we assume that the sets $A$ and $\Nn$ are disjoint.
	\end{defi}

	\begin{exa}
		Consider the palindrome language $P$ from Example~\ref{example:palindrome-pda}. This language is generated by the following grammar with one nonterminal $N$:
		\begin{align*}
			N &\to aNa & \qquad \mbox{ for every }a \in A.\\
			N &\to \epsilon&
		\end{align*}
	\end{exa}
	
	\begin{exa}
		In the previous example, the grammar had just one nonterminal. Sometimes, it is useful to have an orbit finite, but infinite, set of nonterminals.
		Consider the language $P'$ from Example~\ref{example:mid-palindrome-pda}. For this language, we need a different set of nonterminals, with a starting nonterminal $N$ as well as one nonterminal $N_a$ for every $a \in A$. The rules of the grammar are:
		\begin{align*}
			N &\to aN_aa & &\qquad \mbox{ for every }a \in  A.\\
			N_a &\to bN_ab && \qquad \mbox{ for every }a,b \in  A.\\
			N_a &\to a & &\qquad \mbox{ for every }a \in  A.
		\end{align*}
		\end{exa}

\medskip
One can apply the same treatment to other classical definitions such as two-way automata, alternating automata (cf.~\cite{BBKL12}), Turing machines(cf.~\cite{BKLT13}), Petri nets, and so on.
In each case one has to be careful to see which of the classical constructions or equivalences work, and which of them fail. For example:

	\begin{itemize}
		\item Nominal pushdown automata are expressively equivalent to nominal context-free grammars. The proof is essentially the same as the standard proof for classical sets.
		\item Nominal two-way \GNFA\ (\GDFA) are more powerful than one-way \GNFA\ (\GDFA). For instance, the language
		\begin{align*}
			L= \set{d_1 \cdots d_n : \mbox{ $n \in \Nat$ and all the letters $d_1,\ldots,d_n$ are different}} \subseteq \D^*
		\end{align*}
		is recognized by a two-way \GDFA.
		\item Nominal alternating finite automata are more powerful than nominal nondeterministic finite automata. For instance, the language $L$ mentioned above is recognized by a nominal alternating finite automaton.  In the spirit of Section~\ref{sec:register-models}, one makes a connection between nominal alternating finite automata, and models of alternating register automata known in the literature~\cite{DL09, FHL10}. This connection is investigated in~\cite{BBKL12}.
		\item Determinization of Turing machines heavily depends on the data symmetry. In the equality symmetry, nondeterministic Turing machines are more powerful than deterministic ones, and P $\neq$ NP. In the total order symmetry, Turing machines determinize, and 
the P = NP question is equivalent to the classical one. These questions are investigated in detail in~\cite{BKLT13}.
	\end{itemize}
	
\noindent A general analysis of the types of reasoning allowed for nominal $G$-sets of various kinds is beyond the scope of this paper. One general rule is Pitts's {\em equivariance principle}:
\begin{quote}
Any function or relation that is defined from equivariant functions and relations using classical higher-order logic is itself equivariant~\cite{pitts-book}.
\end{quote}	
For example, the language recognized by an equivariant automaton is automatically equivariant.
	
In practice, various classical results in nominal sets fail either due to the fact that the finite powerset construction does not preserve orbit-finiteness (so, e.g., standard automata determination fails), or due to the failure of the axiom of choice, even in its orbit-finite form (see~\cite{BKLT13}). 


\part{Finite representations of nominal sets and automata}

In the framework presented so far, automata and other models are generalized to infinite alphabets by reinterpreting their standard definitions, replacing finite sets by orbit-finite nominal sets and arbitrary relations and functions by equivariant ones. This is pleasantly simple, but not sufficient for a satisfactory treatment of the algorithmic aspect of automata theory.
For instance, for every deterministic finite automaton, one can minimize it, or test the emptiness of its recognized language, in polynomial time. To transport such results to the nominal case, one needs a finite representation of nominal data structures, amenable to effective manipulation.


We shall now provide finite representation results for nominal $G$-sets and equivariant functions. In Sections~\ref{sec:gset-repr}-\ref{sec:fraisse} we shall prove a sequence of progressively more concrete representations, under certain assumptions on the underlying data symmetry. An example application, shown in Section~\ref{sec:Fraisseautomata}, is a generalization of the development of Section~\ref{sec:register-models}, where a concrete understanding of deterministic and nondeterministic $G$-automata for the equality symmetry was provided. More applications can be found in~\cite{BBKL12}, where we define a programming language for manipulating orbit-finite nominal $G$-sets, with an implementation based on the representations presented here.

\section{$G$-set representation}\label{sec:gset-repr}
 
We begin with well-known results from group theory, regarding the structure of arbitrary $G$-sets for any group $G$, and we indicate why orbit finite $G$-sets cannot be presented by finite means in general. 

Important examples of $G$-sets are provided by subgroups of $G$ and their coset spaces. For a subgroup $H\leq G$, a (right) {\em coset} of $H$ is a set of the form 
\begin{align*}
	H\pi=\{\sigma\pi\mid\sigma\in H\}\subseteq G,
\end{align*} for some $\pi\in G$. Note that $H\pi=H\theta$ if and only if $\pi\theta^{-1}\in H$. Right cosets of $H$ define a partition of $G$, and the set of all such cosets is denoted $G/H^r$. 

We shall now show a well-known representation result for single-orbit $G$-sets as coset spaces of subgroups of $G$. 

\begin{defi}\label{def:subrepr}
A {\em subgroup representation} of a $G$-set is a subgroup $H\leq G$.
Its {\em semantics} is the set
\[
	\cossem{H} = G/H^r,
\] 
with a $G$-action defined by $(H\pi)\cdot\sigma=H(\pi\sigma)$ for any $H\pi\in G/H^r$ and $\sigma\in G$.
\end{defi}

The following two propositions are well known and their proofs completely standard; we include them here for completeness.

\begin{prop}\label{prop:subrepr} (1) For each $H\leq G$, $\cossem{H}$ is a single-orbit $G$-set.
(2) Every single-orbit $G$-set $X$ is isomorphic to some $\cossem{H}$.
\end{prop}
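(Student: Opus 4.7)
The plan is to handle the two parts separately, with part~(1) being a direct verification and part~(2) relying on the orbit--stabilizer construction.

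For part~(1), I first need to check that the proposed action $(H\pi)\cdot\sigma = H(\pi\sigma)$ is well defined on right cosets. This amounts to observing that if $H\pi = H\pi'$, meaning $\pi'\pi^{-1}\in H$, then $(\pi'\sigma)(\pi\sigma)^{-1} = \pi'\pi^{-1}\in H$, so $H(\pi\sigma) = H(\pi'\sigma)$. The group action axioms are then immediate: the neutral element acts trivially, and associativity follows from associativity of $G$. Finally, to see $\cossem{H}$ is single-orbit, given two cosets $H\pi$ and $H\theta$ I would simply observe $(H\pi)\cdot(\pi^{-1}\theta) = H\theta$.

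For part~(2), given a single-orbit $G$-set $X$, I would fix an arbitrary $x\in X$ and take $H$ to be its stabilizer $\{\pi\in G : x\cdot\pi = x\}$, which is plainly a subgroup of $G$. The key map is $f : \cossem{H} \to X$ defined by $f(H\pi) = x\cdot\pi$. I would then verify four things in sequence: well-definedness (if $\pi'\pi^{-1}\in H$ then $x\cdot\pi' = x\cdot\pi$); injectivity, which is the converse implication and uses exactly the same identity; surjectivity, which is immediate since $X$ is a single orbit of $x$; and equivariance, since $f((H\pi)\cdot\sigma) = x\cdot(\pi\sigma) = (x\cdot\pi)\cdot\sigma = f(H\pi)\cdot\sigma$. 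Together these give that $f$ is an isomorphism of $G$-sets.

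No step here looks genuinely hard — the whole statement is the standard orbit--stabilizer correspondence cast in the language of this paper. The only place that requires a moment of attention is checking well-definedness of the action in~(1) and of the map $f$ in~(2), both of which reduce to the observation that the equivalence defining right cosets is exactly the relation $\pi\sim\pi' \iff \pi'\pi^{-1}\in H$. Everything else is formal manipulation with the action axioms.
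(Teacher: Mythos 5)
Your proof is correct and follows essentially the same route as the paper: well-definedness of the coset action plus transitivity for part (1), and the orbit--stabilizer correspondence for part (2). The only cosmetic difference is that the paper defines the isomorphism in the direction $X\to\cossem{G_x}$ via $x\cdot\pi\mapsto G_x\pi$ while you define its inverse $\cossem{H}\to X$; the verifications are the same.
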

\begin{proof}
For (1), first check that the $G$-action on $\cossem{H}$ is well-defined under the choice of $\pi$; indeed, $H\pi=H\pi'$ implies $H(\pi\sigma)=H(\pi'\sigma)$. Further, every $H\pi,H\sigma\in\sem{H}$ are in the same orbit since $H\pi= H\sigma\cdot(\sigma^{-1}\pi)$. 

(2) is known in the literature as the {\em orbit-stabilizer theorem}. For any $x$ in a $G$-set $X$, the group 
\begin{align*}
	G_x=\{\pi\in G\mid x\cdot\pi=x\} \leq G
\end{align*}
is called the {\em stabilizer} of $x$.

To prove (2), put $H=G_x$ for any $x\in X$.
Define $f:X\to\cossem{G_x}$ by $f(x \cdot \pi) = G_x \pi$.
The function $f$ is well defined: if $x\cdot\pi=x\cdot\sigma$ then $\pi\sigma^{-1}\in G_x$, hence $G_x\pi=G_x\sigma$. It is easy to check that $f$ is equivariant. It is also a bijection. For injectivity, if $f(x \cdot \pi)=f(x \cdot \sigma)$, which means $G_x\pi=G_x\sigma$, 
then $\pi\sigma^{-1}\in G_x$, hence $x\cdot\sigma=(x\cdot\pi\sigma^{-1})\cdot\sigma=x\cdot\pi$. For surjectivity of $f$, for any $\pi\in G$ there is $f(x\cdot\pi)=G_x\pi$.
\end{proof}

Recall from group theory that subgroups $H,K\leq G$ are called {\em conjugate} if $K=\pi H\pi^{-1}$ for some $\pi\in G$.

\begin{prop}\label{prop:conjugate-repr}
For any $H,K\leq G$, $\cossem{H}$ and $\cossem{K}$ are isomorphic if and only if $H$ and $K$ are conjugate.
\end{prop}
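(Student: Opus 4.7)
The plan is to prove both directions via standard manipulations of stabilizers and cosets, using the right-action convention fixed in Definition~\ref{def:subrepr}.

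For the ``if'' direction, suppose $K = \pi_0 H \pi_0^{-1}$ for some $\pi_0 \in G$. I would define an explicit map $f : \cossem{H} \to \cossem{K}$ by $f(H\sigma) = K\pi_0\sigma$ and verify three routine properties. Well-definedness: if $H\sigma_1 = H\sigma_2$ then $\sigma_1\sigma_2^{-1}\in H$, so conjugating by $\pi_0$ gives $\pi_0\sigma_1\sigma_2^{-1}\pi_0^{-1}\in \pi_0 H\pi_0^{-1}=K$, hence $K\pi_0\sigma_1 = K\pi_0\sigma_2$. Equivariance: $f((H\sigma)\cdot\tau) = K\pi_0\sigma\tau = (K\pi_0\sigma)\cdot\tau$. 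Bijectivity: the same recipe with $\pi_0^{-1}$, i.e.\ $K\tau \mapsto H\pi_0^{-1}\tau$, gives a two-sided inverse (well-defined because $H = \pi_0^{-1}K\pi_0$).

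For the ``only if'' direction, I would exploit the fact that any equivariant bijection preserves stabilizers. Recall that the stabilizer of an element $x$ in a $G$-set is $G_x = \{\tau\in G : x\cdot\tau = x\}$; if $f$ is equivariant then $G_x\subseteq G_{f(x)}$, and if $f$ is additionally a bijection then applying the same inclusion to $f^{-1}$ yields $G_x = G_{f(x)}$. Computing directly, $(H\sigma)\cdot\tau = H\sigma$ iff $\sigma\tau\sigma^{-1}\in H$ iff $\tau\in \sigma^{-1}H\sigma$, so the stabilizer of $H\sigma\in\cossem{H}$ is $\sigma^{-1}H\sigma$, and in particular the stabilizer of the identity coset $He$ is $H$ itself; symmetrically the stabilizer of $K\pi\in\cossem{K}$ is $\pi^{-1}K\pi$. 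Given an equivariant bijection $f : \cossem{H}\to\cossem{K}$, pick $\pi_0\in G$ with $f(He) = K\pi_0$; then $H = G_{He} = G_{f(He)} = \pi_0^{-1}K\pi_0$, so $K = \pi_0 H\pi_0^{-1}$, exhibiting $H$ and $K$ as conjugate.

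There is no real obstacle here; the argument is pure bookkeeping. The one point to keep in mind is the right-action convention: with a left action the stabilizer of $\sigma H$ would come out as $\sigma H\sigma^{-1}$ rather than $\sigma^{-1}H\sigma$, and one has to track this carefully both when setting up the map $f$ in the ``if'' direction and when matching stabilizers in the ``only if'' direction. It is also worth noting explicitly that the preservation-of-stabilizers step needs $f$ to be a bijection, not merely equivariant, since in general equivariant functions only enlarge stabilizers.
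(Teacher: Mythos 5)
Your proof is correct and follows essentially the same route as the paper: the ``if'' direction uses the identical map $H\sigma\mapsto K\pi_0\sigma$ with the same well-definedness check and inverse, and your ``only if'' direction, though phrased via the general principle that equivariant bijections preserve stabilizers, amounts to the same computation the paper performs directly at the identity coset (one inclusion from $f$, the reverse from $f^{-1}$). Your explicit identification of the stabilizer of $H\sigma$ as $\sigma^{-1}H\sigma$ under the right-action convention is accurate and, if anything, ties the argument more cleanly to the orbit-stabilizer machinery of Propositions~\ref{prop:subrepr} and~\ref{prop:subrepr-fun}.
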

\begin{proof}
For the {\em if} part, assume $K=\pi H\pi^{-1}$ and define
\[
	f(H\sigma)=K\pi\sigma.
\] 
This is well defined as a function from $\cossem{H}$ to $\cossem{K}$: if $H\sigma=H\theta$ then $\sigma\theta^{-1}\in H$, therefore $\pi\sigma\theta^{-1}\pi^{-1}=\pi\sigma(\pi\theta)^{-1}\in K$, hence $K\pi\sigma=K\pi\theta$. Moreover, $f$ is obviously equivariant by Definition~\ref{def:subrepr}, and the mapping $K\sigma\mapsto H\pi^{-1}\sigma$ is its inverse.

For the {\em only if} part, assume an equivariant isomorphism $f:\cossem{H}\to\cossem{K}$ and take any $\pi\in G$ such that $f(He)=K\pi$, for $e$ the neutral element of $G$. Now, for any $\sigma\in H$ there is 
\[
	K\pi\sigma = f(He)\cdot\sigma = f(H\sigma)=f(He)=K\pi
\]
hence $\pi\sigma\pi^{-1}\in K$; as a result, $H\leq \pi K\pi^{-1}$. For $f^{-1}$ the inverse of $f$, there is $f^{-1}(K\pi)=He$, therefore by equivariance, $f^{-1}(Ke)=H\pi^{-1}$ and by repeating the previous argument, $K\leq \pi^{-1}H\pi$, hence $\pi K\pi^{-1}\leq H$. As a result, $H=\pi K\pi^{-1}$ as required.
\end{proof}

The subgroup representation can be extended to a representation of equivariant functions from single orbit $G$-sets:

\begin{prop}\label{prop:subrepr-fun}
Let $X=\cossem{H}$ and let $Y$ be a $G$-set. Equivariant functions from $X$ to $Y$ are in bijective correspondence with elements $y\in Y$ for which $H\leq G_y$. 
\end{prop}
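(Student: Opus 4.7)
The plan is to set up the bijection explicitly in both directions and verify that the two constructions are mutually inverse. Write $X = \cossem{H} = G/H^r$ with the action $(H\pi)\cdot\sigma = H(\pi\sigma)$, and let $e$ denote the neutral element of $G$.

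In one direction, I would send an equivariant function $f : X \to Y$ to the element $y_f = f(He) \in Y$. To check that $H \leq G_{y_f}$, take any $\sigma \in H$; then $H\sigma = H = He$, so by equivariance $y_f \cdot \sigma = f(He) \cdot \sigma = f(He \cdot \sigma) = f(H\sigma) = f(He) = y_f$. In the opposite direction, to an element $y \in Y$ with $H \leq G_y$ I would assign the function $f_y : X \to Y$ defined by $f_y(H\pi) = y \cdot \pi$. The main point to check here — the only mildly nontrivial step — is well-definedness: if $H\pi = H\pi'$ then $\pi(\pi')^{-1} \in H \leq G_y$, so $y \cdot \pi (\pi')^{-1} = y$, which gives $y \cdot \pi = y \cdot \pi'$. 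Equivariance of $f_y$ is then immediate from the definition of the action on cosets: $f_y(H\pi \cdot \sigma) = f_y(H\pi\sigma) = y \cdot (\pi\sigma) = (y \cdot \pi) \cdot \sigma = f_y(H\pi) \cdot \sigma$.

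Finally I would check that the two constructions are mutually inverse. Starting from $y$, the element associated to $f_y$ is $f_y(He) = y \cdot e = y$. Starting from $f$, the function associated to $y_f = f(He)$ sends $H\pi$ to $y_f \cdot \pi = f(He)\cdot\pi = f(H\pi)$ by equivariance, so this function is just $f$. No step here looks like a genuine obstacle; the only place where one has to be a little careful is the well-definedness of $f_y$, which is precisely what the hypothesis $H \leq G_y$ is designed to guarantee.
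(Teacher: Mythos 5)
Your proposal is correct and follows essentially the same route as the paper's proof: send $f$ to $f(He)$, send $y$ to $H\pi\mapsto y\cdot\pi$, with well-definedness resting on $H\leq G_y$. You merely spell out in full the steps the paper compresses (e.g., the paper justifies $H\leq G_{f(He)}$ by noting that equivariant functions can only increase stabilizers, and leaves the mutual-inverse check to the reader).
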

\begin{proof}
Given an equivariant function $f:X\to Y$, let $y$ be the image under $f$ of the coset $He \in X$. Equivariant functions can only increase stabilizers, so  $H=G_{He}\leq G_y$. On the other hand, given $y\in Y$, define a function $f: X \to Y$ by $f(H\pi)=y\cdot\pi$. This is well-defined if $H\leq G_y$; indeed, if $H\pi=H\sigma$ then $\pi\sigma^{-1}\in H\subseteq G_y$, hence $y\cdot\pi=y\cdot\sigma$.

It is easy to check that the two above constructions are mutually inverse.
\end{proof}

\begin{cor}\label{cor:subrepr-fun}
Equivariant functions from $X=\cossem{H}$ to $Y=\cossem{K}$ are in bijective correspondence with those cosets $K\pi$ for which $\pi H\subseteq K\pi$.
\end{cor}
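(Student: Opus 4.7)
My plan is to derive this as a direct application of Proposition~\ref{prop:subrepr-fun} with $Y = \cossem{K}$. That proposition reduces the problem to identifying those elements $y \in Y$ whose stabilizer contains $H$; since the elements of $\cossem{K}$ are exactly the right cosets $K\pi$, I need only compute $G_{K\pi}$ and reinterpret the resulting condition in terms of $H$ and $K$.

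The stabilizer calculation is short. Using the action $(K\pi) \cdot \sigma = K(\pi\sigma)$ from Definition~\ref{def:subrepr}, together with the standard fact that $Kg_1 = Kg_2$ iff $g_1 g_2^{-1} \in K$, I get that $\sigma \in G_{K\pi}$ iff $\pi\sigma\pi^{-1} \in K$, so $G_{K\pi} = \pi^{-1} K \pi$. Hence $H \leq G_{K\pi}$ becomes $H \leq \pi^{-1} K \pi$, which on conjugating by $\pi$ rewrites as $\pi H \pi^{-1} \subseteq K$, equivalently $\pi H \subseteq K\pi$, exactly as claimed.

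To finish I would verify that the resulting parametrization is well defined on cosets (rather than on representatives), i.e.\ that the property $\pi H \subseteq K\pi$ depends only on the coset $K\pi$: if $\pi' = k\pi$ for some $k \in K$, then $\pi' H = k(\pi H)$ and $K\pi' = K\pi = k(K\pi)$, so left multiplication by $k$ moves $\pi H \subseteq K\pi$ to $\pi' H \subseteq K\pi'$ and vice versa. Combined with the bijection of Proposition~\ref{prop:subrepr-fun}, this immediately yields the bijective correspondence between equivariant functions $\cossem{H} \to \cossem{K}$ and cosets $K\pi$ satisfying $\pi H \subseteq K\pi$.

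I do not anticipate any real obstacle here: the corollary is essentially a specialization of Proposition~\ref{prop:subrepr-fun}, and the only work is the two-line stabilizer computation plus the well-definedness check just described.
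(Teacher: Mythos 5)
Your proof is correct and follows essentially the same route as the paper: apply Proposition~\ref{prop:subrepr-fun} with $Y=\cossem{K}$, compute the stabilizer $G_{K\pi}=\pi^{-1}K\pi$, and rewrite $H\leq\pi^{-1}K\pi$ as $\pi H\subseteq K\pi$. The additional check that the condition depends only on the coset $K\pi$ is a reasonable extra verification, but the argument is otherwise identical to the paper's.
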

\begin{proof}
By Proposition~\ref{prop:subrepr-fun} unfolding Definition~\ref{def:subrepr}. Notice that the stabilizer of $Ke\in\cossem{K}$ is $K$ itself, and the stabilizer of $K\pi$ is the conjugate subgroup $\pi^{-1}K\pi$. The condition $H\leq\pi^{-1}K\pi$ obtained from Proposition~\ref{prop:subrepr-fun} is equivalent to $\pi H\subseteq K\pi$. 
\end{proof}

Proposition~\ref{prop:subrepr} provides a way to represent single-orbit G-sets by subgroups. Together with Corollary~\ref{cor:subrepr-fun}, this representation can be rephrased concisely as an equivalence of two categories. Denote by $\GSetI$ the category of single-orbit $G$-sets and equivariant function between them.

\begin{thm}\label{thm:gsetrepr}
For any group $G$, $\GSetI$ is equivalent to a category with:
\begin{itemize}
\item as objects, subgroups $H\leq G$,
\item as morphisms from $H$ to $K$, cosets $K\pi$ such that $\pi H\subseteq K\pi$.
\end{itemize}
\end{thm}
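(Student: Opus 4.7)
\proof[Proof plan]
The plan is to exhibit a functor $F$ from the proposed ``syntactic'' category, call it $\mathcal{C}$, into $\GSetI$ and to show that $F$ is both essentially surjective and fully faithful. On objects, $F$ sends a subgroup $H\leq G$ to the single-orbit $G$-set $\cossem{H}=G/H^r$. On morphisms, $F$ sends a coset $K\pi:H\to K$ in $\mathcal{C}$ (i.e.\ one satisfying $\pi H\subseteq K\pi$) to the unique equivariant function $\cossem{H}\to\cossem{K}$ corresponding to it under Corollary~\ref{cor:subrepr-fun}. Unfolding that correspondence, this function is concretely $H\rho\mapsto K\pi\rho$.

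Before invoking the previous results, I first need to check that $\mathcal{C}$ actually is a category. The key point is defining composition: given $K\pi:H\to K$ and $L\sigma:K\to L$, I set $(L\sigma)\circ(K\pi)=L(\sigma\pi)$. Well-definedness on representatives uses that $\sigma K\subseteq L\sigma$: if $\pi'\in K\pi$, then $\pi'=k\pi$ with $k\in K$, so $L\sigma\pi'=L(\sigma k)\pi=L\sigma\pi$ because $\sigma k\in L\sigma$; and if $\sigma'\in L\sigma$, then $L\sigma'\pi=L\sigma\pi$ trivially. The membership condition for the composite is $\sigma\pi H\subseteq\sigma K\pi\subseteq L\sigma\pi$, which is immediate from the two source conditions. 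The identity on $H$ is the coset $He$, and associativity follows from associativity in $G$.

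Next I verify that $F$ is a functor. It sends the identity $He$ on $H$ to the map $H\rho\mapsto H\rho$, which is the identity on $\cossem{H}$. For composition, the map corresponding to $(L\sigma)\circ(K\pi)=L\sigma\pi$ is $H\rho\mapsto L\sigma\pi\rho$, and this equals the composite of $H\rho\mapsto K\pi\rho$ followed by $K\tau\mapsto L\sigma\tau$, so $F$ preserves composition.

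Finally, $F$ is essentially surjective by Proposition~\ref{prop:subrepr}(2), which states that every single-orbit $G$-set is isomorphic to some $\cossem{H}$; and $F$ is fully faithful by Corollary~\ref{cor:subrepr-fun}, which is precisely the statement that the assignment $K\pi\mapsto (H\rho\mapsto K\pi\rho)$ is a bijection between morphisms in $\mathcal{C}(H,K)$ and equivariant functions $\cossem{H}\to\cossem{K}$. An essentially surjective fully faithful functor is an equivalence of categories, completing the proof. The main obstacle in this argument is not any of the category-theoretic steps, which follow immediately from the already-established propositions, but rather the bookkeeping required to state composition in $\mathcal{C}$ correctly and verify its well-definedness on coset representatives; once that is in place, the rest is a routine unfolding.\qed
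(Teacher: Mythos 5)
Your proof is correct and takes essentially the route the paper intends: the theorem is stated there as a direct repackaging of Proposition~\ref{prop:subrepr} (essential surjectivity) and Corollary~\ref{cor:subrepr-fun} (full faithfulness), which is exactly how you use them. The extra bookkeeping you supply --- verifying that composition $L(\sigma\pi)$ is well defined on coset representatives and that the assignment is functorial --- is checked correctly and merely makes explicit what the paper leaves implicit.
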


We do not pursue the categorical formulation of G-sets in this paper, but we include this theorem to make a connection with related work such as~\cite{staton-thesis}, where formulated with essentially the same proof as above. 

Thanks to Proposition~\ref{prop:conjugate-repr}, one could refine Theorem~\ref{thm:gsetrepr} and represent single-orbit $G$-sets not by subgroups of $G$, but by conjugacy classes of those subgroups. For the sake of simplicity we choose not to do so.

The representation can be extended from single-orbit to arbitrary $G$-sets. To this end, note that the action of $G$ on a set $X$ acts independently on different orbits, and can be defined separately on each orbit. Formally, every $G$-set $X$ is isomorphic to the disjoint union of its orbits understood as single orbit $G$-sets. As a result, a $G$-set can be represented by a {\em family} of subgroups of $G$, and equivariant functions are represented as suitable families of functions.

The subgroup representation exhibits some structure in the world of $G$-sets and equivariant functions. At the same time, it implies that it is impossible to present all orbit finite $G$-sets by finite means, as we shall now demonstrate.

By Propositions~\ref{prop:subrepr} and~\ref{prop:conjugate-repr}, the following proposition proves Fact~\ref{fact:uncountable}.

\begin{prop}\label{prop:uncountable}
For a countably infinite $\D$ and $G=\permgrp(\D)$, there are uncountably many non-conjugate subgroups of $G$. 
\end{prop}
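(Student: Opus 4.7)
The plan is to exhibit $2^{\aleph_0}$ pairwise non-conjugate subgroups of $G$, parametrized by the infinite subsets of $\N_{\geq 2}$. For each infinite $S \subseteq \N_{\geq 2}$, I would fix a partition $P_S$ of $\D$ into pairwise disjoint finite blocks with exactly one block of size $n$ for each $n \in S$ (and no other blocks); this is possible since $\sum_{n \in S} n = \aleph_0 = |\D|$. Define $H_S \leq G$ to be the setwise stabilizer of $P_S$, i.e.\ the subgroup of permutations $g \in G$ such that $\{g(B) : B \in P_S\} = P_S$.

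The key observation is that, for any $\pi \in G$, the conjugate $\pi H_S \pi^{-1}$ is the setwise stabilizer of the partition $\pi \cdot P_S := \{\pi(B) : B \in P_S\}$, which has the same multiset of block sizes as $P_S$. Moreover, I claim that the orbits of $H_S$ acting on $\D$ coincide with the blocks of $P_S$. On the one hand, each block $B \in P_S$ is contained in a single orbit, since for any two $x, y \in B$, the transposition $(x\ y)$ fixes every other block and lies in $H_S$. On the other hand, no two distinct blocks $B, B'$ lie in the same orbit, because any $g \in H_S$ sending $x \in B$ to $y \in B'$ must map $B$ to the unique block containing $y$, forcing $|B| = |B'|$; since block sizes in $P_S$ are pairwise distinct, this gives $B = B'$. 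Therefore, $H_S$ determines $P_S$ as its orbit partition.

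Combining these two points, if $H_S$ and $H_{S'}$ are conjugate, their orbit partitions $P_S$ and $P_{S'}$ must share the same multiset of block sizes, which forces $S = S'$. Since there are $2^{\aleph_0}$ infinite subsets $S \subseteq \N_{\geq 2}$, this yields uncountably many pairwise non-conjugate subgroups of $G$. The main (mild) technical point is the recovery of $P_S$ as the orbit partition of $H_S$, which depends critically on the choice to use pairwise distinct block sizes; this is exactly what ensures that conjugation, which clearly preserves the multiset of orbit sizes, cannot collapse distinct parameters $S \neq S'$.
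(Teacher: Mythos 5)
Your proof is correct, but it takes a genuinely different route from the paper's. The paper fixes pairwise disjoint sets $C_p\subseteq\D$ of prime cardinality $p$ and, for each set $I$ of primes, takes the subgroup $H_I$ generated by one permutation of order $p$ supported on $C_p$ for each $p\in I$; since these generators have disjoint supports and hence commute, $H_I$ contains an element of prime order $p$ if and only if $p\in I$, and the set of finite element orders is the conjugacy invariant that separates the $H_I$. You instead take setwise stabilizers of partitions of $\D$ with pairwise distinct finite block sizes, and use the multiset of orbit sizes on $\D$ as the invariant, taking care to verify that the orbit partition of $H_S$ recovers $P_S$ (this is where the distinctness of block sizes is essential, as you note, and your verification of both inclusions is sound). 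Both constructions and both invariants work. The paper's invariant is purely group-theoretic, so it actually separates the $H_I$ up to abstract isomorphism, not just conjugacy in $\permgrp(\D)$ --- a slightly stronger conclusion; your invariant is tied to the permutation action (conjugation carries orbits to orbits), which is exactly what is needed for the stated proposition and is arguably the more natural invariant for subgroups of a symmetric group. The restriction to infinite $S\subseteq\N_{\geq 2}$ is the right move: infiniteness is needed so the blocks can exhaust the countable set $\D$, and there remain $2^{\aleph_0}$ such parameters.
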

\begin{proof}
Fix an arbitrary family of pairwise-disjoint subsets $C_p\subseteq \D$, indexed by prime numbers $p$, such that $|C_p|=p$ for any $p$. Then, fix a family of permutations $\pi_p$, indexed also by prime numbers, 
such that each $\pi_p$ acts as identity on $\D\setminus C_p$, and as a permutation of order $p$ on $C_p$.
 For any subset $I$ of prime numbers, let the group $H_I\leq G$ be generated by the family $\set{ \pi_p : p \in I }$.
One easily observes that $H_I$ contains an element of a prime order $p$ if and only if $p\in I$.


On the other hand, is is easy to show that for conjugate subgroups $H,K\leq G$, if $H$ contains an element of some finite order, then $K$ contains an element of the same order. Therefore, if $I\neq I'$ then $H_I$ and $H_{I'}$ are not conjugate, and there are uncountably many different choices of $I$.
\end{proof}

\paragrafik{Open subgroups}
We shall now restrict the subgroup representation to nominal $G$-sets. 
For any $C\subseteq\D$, define $G_C\leq G$ by:
\begin{equation}\label{eqn:GC}
	G_C = \{\pi\in G \mid \pi(c)=c \mbox{ for all }c\in C\}.
\end{equation}
In other words, the subgroup $G_C$ is the intersection of all stabilizers $G_c$ for $c\in C$, in $\D$ considered as a $G$-set.

\begin{defi}\label{def:opengrp}
A subgroup $H\leq G$ is {\em open} if $G_C\leq H$ for some finite $C\subseteq \D$. If this is the case, we say that $C$ {\em supports} $H$.
\end{defi}

The name ``open'' is justified by considering $G\leq\permgrp(\D)$ as a topological group. This technique is well known, see e.g.~\cite{maclanemoerdijk} for an application in the context of sheaf theory closely related to nominal sets.
For any set $\D$, a set of permutations $G\subseteq\permgrp(\D)$ can be equipped with a topology 
with basis given by {\em $C$-neighborhoods} of all $\pi\in G$:
\begin{equation}\label{eq:ball}
	\mathcal{B}_C(\pi) = \{\sigma\in G\mid \sigma|_C=\pi|_C\}.
\end{equation}
It is not difficult to check that a subgroup $H\leq G$ is an open subset with respect to this topology if and only if it satisfies Definition~\ref{def:opengrp}.

Open subgroups of $G$ are linked to nominal $G$-sets via the following result.

\begin{prop}
A single-orbit $G$-set $\cossem{H}$ is nominal if and only if $H$ is open in $G$.
\end{prop}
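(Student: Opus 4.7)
The plan is to unfold the definitions on both sides and exploit the identity element coset $He \in \cossem{H}$, which has stabilizer exactly $H$. I would separate the equivalence into two short implications, one of which is essentially immediate and the other follows from Lemma~\ref{lem:actionpreservessupport}.

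For the forward direction, I would assume $\cossem{H}$ is nominal, so in particular the element $He$ has some finite support $C \subseteq \D$. Unpacking the definition of support: for every $\pi \in G$ that fixes $C$ pointwise (that is, for every $\pi \in G_C$), we have $(He)\cdot\pi = He$. Since $(He)\cdot\pi = H\pi$ by Definition~\ref{def:subrepr}, and $H\pi = He$ iff $\pi \in H$, this gives $G_C \leq H$. Hence $H$ is open.

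For the backward direction, I would assume $G_C \leq H$ for some finite $C \subseteq \D$ and show that each element $H\pi \in \cossem{H}$ has $\pi C$ as a finite support. The quickest route is to first observe that $C$ supports $He$: if $\sigma \in G_C$ then $\sigma \in H$, so $(He)\cdot\sigma = H\sigma = H = He$. Now $H\pi = (He)\cdot\pi$, and by Lemma~\ref{lem:actionpreservessupport} the set $\pi C$ supports $(He)\cdot\pi = H\pi$. Since $\pi C$ is finite and $\pi$ was arbitrary, $\cossem{H}$ is nominal.

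The only delicate point is the bookkeeping of the conjugation/composition convention: verifying $(He)\cdot\pi = H\pi$ from Definition~\ref{def:subrepr}, and checking that $H\pi = H$ is equivalent to $\pi \in H$. These are straightforward, and Lemma~\ref{lem:actionpreservessupport} absorbs the conjugation computation, so there is no real obstacle. The whole argument is a few lines.
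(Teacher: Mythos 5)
Your proof is correct and follows essentially the same route as the paper: the paper's argument reduces the claim to the observation that $C$ supports $x$ iff $G_C\leq G_x$ together with the orbit--stabilizer isomorphism (so that $He$ has stabilizer exactly $H$), which is precisely what you unfold concretely, with Lemma~\ref{lem:actionpreservessupport} absorbing the conjugation step for the remaining cosets. No gaps.
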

\begin{proof}
Unfolding the definitions, it is easy to see that
in a $G$-set $X$, a subset $C\subseteq\D$ supports an element $x\in X$ if and only if $G_C\leq G_x$. Then use (the proof of) Proposition~\ref{prop:subrepr}(2).
\end{proof}

The above proof also implies that the notions of support in Definitions~\ref{def:support} and~\ref{def:opengrp} coincide along the representation function $\cossem{-}$. We shall use both notions as convenient.

It is now straightforward to restrict the subgroup representation of Definition~\ref{def:subrepr}: nominal $G$-sets are represented by open subgroups of $G$. The representation of equivariant functions from nominal sets remains as in Proposition~\ref{prop:subrepr-fun}. In categorical terms, Theorem~\ref{thm:gsetrepr} restricts to:

\begin{thm}~\label{thm:subgrprepr}
For data symmetry $(\D,G)$, the category $\GNomI$ is equivalent to a category with:
\begin{itemize}
\item as objects, open subgroups $H\leq G$,
\item as morphisms from $H$ to $K$, cosets $K\pi$ such that $\pi H\subseteq K\pi$.
\end{itemize}
\end{thm}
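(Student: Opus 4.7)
The plan is to obtain Theorem~\ref{thm:subgrprepr} as a direct restriction of Theorem~\ref{thm:gsetrepr} to the full subcategories carved out by the nominal/open-subgroup correspondence already proved just above. Concretely, the equivalence of Theorem~\ref{thm:gsetrepr} is witnessed by the functor $\cossem{-}$ sending a subgroup $H \leq G$ to the single-orbit $G$-set $G/H^r$, and sending a morphism $K\pi$ (with $\pi H \subseteq K\pi$) to the equivariant function $H\sigma \mapsto K\pi\sigma$. I would simply check that this equivalence cuts down along the inclusions of the two subcategories in question.

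For essential surjectivity onto $\GNomI$, take any single-orbit nominal $G$-set $X$ and any $x \in X$. By Proposition~\ref{prop:subrepr}(2), $X \cong \cossem{G_x}$. Since $X$ is nominal, $x$ has some finite support $C \subseteq \D$, and by the observation made in the proof of the preceding proposition, $G_C \leq G_x$, so $G_x$ is open by Definition~\ref{def:opengrp}. Conversely, by that same proposition, $\cossem{H}$ is nominal whenever $H$ is open, so the functor $\cossem{-}$ does land in $\GNomI$ when restricted to open subgroups.

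For fullness and faithfulness on morphisms, nothing changes: Proposition~\ref{prop:subrepr-fun} and Corollary~\ref{cor:subrepr-fun} describe equivariant functions $\cossem{H} \to \cossem{K}$ purely in terms of cosets $K\pi$ with $\pi H \subseteq K\pi$, regardless of whether $H$ and $K$ are open. So the hom-sets of the target category in Theorem~\ref{thm:subgrprepr} agree with the hom-sets in $\GNomI$ under the correspondence, and composition is preserved exactly as in Theorem~\ref{thm:gsetrepr}.

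There is no real obstacle here; the content was already established in the preceding development, and the only thing to verify is that the open-subgroup condition is exactly the image of the nominal condition under the existing equivalence. The mildly delicate point to double-check is that the cosets $K\pi$ representing morphisms $\cossem{H} \to \cossem{K}$ need no further restriction when $H$ and $K$ are open: by Lemma~\ref{lem:equiv-pres-supp} any equivariant function from a nominal $G$-set to an arbitrary $G$-set lands in the nominal part, so no spurious morphisms are introduced, and conversely no valid morphism is lost.
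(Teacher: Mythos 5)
Your proposal is correct and follows essentially the same route as the paper, which obtains the theorem by restricting the equivalence of Theorem~\ref{thm:gsetrepr} along the proposition that $\cossem{H}$ is nominal iff $H$ is open (the paper leaves the restriction argument implicit, while you spell out essential surjectivity and full faithfulness explicitly). The only cosmetic remark is that your appeal to Lemma~\ref{lem:equiv-pres-supp} at the end is unnecessary: $\GNomI$ is by definition a full subcategory of $\GSetI$, so the hom-sets agree automatically once the objects are matched up.
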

Here, $\GNomI$ denotes the category of single-orbit nominal sets and equivariant functions.


\section{Well-behaved symmetries}\label{sec:least-supp}

Open subgroups of permutation groups are rather abstract entities, and it is not at all clear how to represent them by finite means. Much more concrete representations can be obtained under certain assumptions on the data symmetry involved, as we shall now demonstrate.

\subsection{Least supports}

An element of a nominal set always has {\em minimal} supports with respect to  inclusion, simply because it has some finite support.
As shown in Example~\ref{ex:intsig}, there may be many incomparable minimal supports (which means that there is no  {\em least} support).
Minimal supports of the same element might even have different  cardinalities, as illustrated by the following example.

\begin{exa}
For a permutation $\pi \in \permgrp(\Nat)$, let $\pi^2  \in \permgrp(\Nat\times\Nat)$ be the permutation
\[
\pi^2(n,m) = (\pi(n), \pi(m)).
\]
Let $\D = \Nat \times \Nat$ and let $G = \set{ \pi^2 :  \pi \in \permgrp(\Nat) } \leq \text{Sym}(\D)$.
Essentially, $G$ contains all permutations of $\Nat$, extended coor\-di\-na\-te-\-wise to $\Nat \times \Nat$.
Consider the set $\D$ as a nominal $G$-set, with the canonical action of $G$.
The pair $(0,1)$ has three minimal supports: the singleton $\set{(0,1)}$, the singleton $\set{(1,0)}$,
and the two-element set $\set{(0,0),(1,1)}$.
\end{exa}

The following fact follows immediately from the development of Section~\ref{sec:gset-repr}:

\begin{fact}\label{fact:admits-leastsupports}
A symmetry $(\D,G)$ admits least supports if and only if for every subgroup $H\leq G$ and for every finite $C,D\subseteq \D$, if $G_C\leq H$ and $G_D\leq H$ then $G_{C\cap D}\leq H$ (see~\eqref{eqn:GC}).
\end{fact}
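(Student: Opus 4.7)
The plan is to translate the statement across the subgroup representation developed in Section~\ref{sec:gset-repr}. Recall from there that for any $x$ in a nominal $G$-set, a finite set $C \subseteq \D$ supports $x$ if and only if $G_C \leq G_x$, and that the stabilizer $G_x$ is an open subgroup of $G$ (Definition~\ref{def:opengrp}). Conversely, by Proposition~\ref{prop:subrepr}(1) every open subgroup $H \leq G$ arises as the stabilizer of an element of a nominal $G$-set: namely, in $\cossem{H}$, the element $He$ has stabilizer exactly $H$. I also use that supports are upward closed (if $G_C \leq G_x$ and $C \subseteq C'$ then $G_{C'} \leq G_C \leq G_x$), which justifies the ``or equivalently'' in Definition~\ref{def:admits-leastsupports}.

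For the implication from ``admits least supports'' to the subgroup condition, I take an arbitrary subgroup $H \leq G$ and finite $C, D \subseteq \D$ with $G_C \leq H$ and $G_D \leq H$. The first of these inclusions already witnesses that $H$ is open. I then consider the nominal $G$-set $\cossem{H}$ together with the element $He$, whose stabilizer is $H$. The inclusions $G_C \leq H = G_{He}$ and $G_D \leq H = G_{He}$ mean that both $C$ and $D$ support $He$. Since the symmetry admits least supports, finite supports of $He$ are closed under intersection, so $C \cap D$ also supports $He$, giving $G_{C \cap D} \leq H$ as required.

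For the converse, assume the subgroup-level condition. Let $X$ be a nominal $G$-set, let $x \in X$, and let $C, D$ be finite supports of $x$, so that $G_C \leq G_x$ and $G_D \leq G_x$. Applying the hypothesis to the subgroup $H = G_x$ yields $G_{C \cap D} \leq G_x$, i.e.\ $C \cap D$ supports $x$. Hence finite supports of each element are closed under intersection, which is the ``equivalently'' formulation of admitting least supports.

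There is no real obstacle here: once the subgroup representation is in place, both directions are immediate unfoldings. The only point worth flagging is that the hypothesis is stated for \emph{arbitrary} subgroups $H \leq G$, but it is vacuous unless $H$ is open (since it requires some $G_C \leq H$ with $C$ finite), so the quantification over subgroups is really a quantification over stabilizers of elements of nominal $G$-sets, which is exactly what matches Definition~\ref{def:admits-leastsupports}.
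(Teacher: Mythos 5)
Your proof is correct and follows exactly the route the paper intends: the paper states that this fact ``follows immediately from the development of Section~\ref{sec:gset-repr}'', and your argument supplies precisely those details --- $C$ supports $x$ iff $G_C\leq G_x$, and every open subgroup $H$ is realized as the stabilizer of $He$ in the nominal $G$-set $\cossem{H}$. Your closing remark that the quantification over arbitrary subgroups is effectively a quantification over open subgroups (hence over stabilizers in nominal $G$-sets) is exactly the point that makes the two formulations match.
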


We now give a convenient sufficient and necessary condition for $(\D,G)$ admitting least supports. It is easy to check that
\[
	C\subseteq D \mbox{ implies } G_C\geq G_D
\]
and, as a result, for all $C,D\subseteq\D$,
\[
	G_{C\cap D} \geq G_C + G_D,
\]
where the right-hand side denotes the subgroup of $G$ generated by the union of $G_C$ and $G_D$, i.e., the smallest subgroup of $G$ that contains $G_C$ and $G_D$. The opposite subgroup inclusion guarantees least supports for open subgroups of $G$. In fact it is not necessary to compare both sides as groups, but merely to check containment of their single orbits of $\D$, in the special case when both $C \setminus D$ and $D\setminus C$ are singleton sets.

\begin{thm}\label{thm:leastsupports}
For any symmetry $(\D,G)$, the following conditions are equivalent:
\begin{enumerate}
\item For all finite $E\subseteq \D$  and $c,d\in \D\setminus E$ such that $c\neq d$,
\[
	c\cdot G_E\subseteq c\cdot\left(G_{E\cup\{c\}}+G_{E\cup\{d\}}\right) .
\]
\item $(\D,G)$ admits least supports, i.e., if $G_C\leq H$ and $G_D\leq H$ then $G_{C\cap D}\leq H$, for any $H\leq G$ and any  finite $C,D\subseteq \D$.
\end{enumerate}
\end{thm}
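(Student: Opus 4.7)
The plan is to begin by reformulating condition~(1) in a purely algebraic form. The inclusion $c \cdot G_E \subseteq c \cdot (G_{E \cup \{c\}} + G_{E \cup \{d\}})$ says that every $\pi \in G_E$ agrees on the point $c$ with some $\sigma \in G_{E \cup \{c\}} + G_{E \cup \{d\}}$. Since every generator of the sum fixes $E$ pointwise, the sum is contained in $G_E$, so $\sigma^{-1}\pi$ lies in $G_E$; as it additionally fixes $c$, it lies in $G_{E \cup \{c\}}$. Hence $\pi = \sigma(\sigma^{-1}\pi)$ belongs to $G_{E \cup \{c\}} + G_{E \cup \{d\}}$. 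Combined with the trivial reverse inclusion, condition~(1) is equivalent to the subgroup identity
\[
G_E \ = \ G_{E \cup \{c\}} \, + \, G_{E \cup \{d\}}.
\]

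The implication (2) $\Rightarrow$ (1) is then immediate: take $C = E \cup \{c\}$, $D = E \cup \{d\}$, whose intersection is $E$ (because $c \neq d$ and $c, d \notin E$), and apply~(2) with $H = G_{E \cup \{c\}} + G_{E \cup \{d\}}$, which by construction contains both $G_C$ and $G_D$.

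For (1) $\Rightarrow$ (2), I would induct on $|C \setminus D| + |D \setminus C|$. If either difference is empty, then $C \cap D$ equals $C$ or $D$, so $G_{C \cap D}$ equals $G_C$ or $G_D$, already contained in $H$. Otherwise pick $c \in C \setminus D$ and $d \in D \setminus C$, and set $E = C \cap D$. Consider the enlarged pair $(C, D \cup \{c\})$: enlarging the supporting set only shrinks the stabilizer, so $G_{D \cup \{c\}} \leq G_D \leq H$; the intersection of the pair equals $E \cup \{c\}$; and $|C \setminus (D \cup \{c\})| + |(D \cup \{c\}) \setminus C| = (|C \setminus D| - 1) + |D \setminus C|$ is strictly smaller. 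By the inductive hypothesis, $G_{E \cup \{c\}} \leq H$, and symmetrically $G_{E \cup \{d\}} \leq H$ via the pair $(C \cup \{d\}, D)$. The reformulated condition~(1) then gives $G_E = G_{E \cup \{c\}} + G_{E \cup \{d\}} \leq H$, as desired.

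The main obstacle I anticipate is the initial equivalence between the pointwise-on-orbit formulation of (1) and the cleaner subgroup-theoretic identity above; the author's orbit-level phrasing is notationally lighter but somewhat obscures the underlying subgroup identity that drives the argument. Once that reformulation is in hand, the induction is essentially bookkeeping of symmetric differences, with condition~(1) doing all of the combinatorial work at each reduction step.
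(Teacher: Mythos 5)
Your proof is correct. The crucial step---upgrading the orbit-level inclusion in (1) to the subgroup identity $G_E = G_{E\cup\{c\}} + G_{E\cup\{d\}}$ by noting that if $\sigma(c)=\pi(c)$ then $\sigma^{-1}\pi \in G_{E\cup\{c\}}$, hence $\pi$ lies in the sum---is exactly the argument the paper performs inline during the induction (there one finds a word $\tau$ in the two generating subgroups with $\tau(c)=\pi(c)$ and concludes $\tau\pi^{-1}\in G_{E\cup\{c\}}\leq H$, so $\pi\in H$); you simply isolate it as an upfront reformulation, which makes the logic more transparent. Where you genuinely diverge is the organization of the induction for (1)$\Rightarrow$(2). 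The paper inducts on $|C\cup D|$, sets $E=(C\cup D)\setminus\{c,d\}$ (note: not $C\cap D$), derives $G_E\leq H$ from (1), and then needs two further applications of the inductive hypothesis, to the pairs $(C,E)$ and $(C\setminus\{c\},D)$, to descend to $C\cap D$. You instead induct on $|C\setminus D|+|D\setminus C|$, take $E=C\cap D$ directly, and obtain $G_{E\cup\{c\}}\leq H$ and $G_{E\cup\{d\}}\leq H$ from the inductive hypothesis applied to the enlarged pairs $(C, D\cup\{c\})$ and $(C\cup\{d\}, D)$, whose stabilizers only shrink and whose symmetric differences strictly decrease; a single application of the reformulated (1) then closes the argument. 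Both inductions are sound; yours trades the paper's intermediate set $E$ and its extra descent steps for the slightly less obvious move of \emph{enlarging} the supports before intersecting, and is arguably the tidier of the two.
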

\begin{proof}
(2)$\Longrightarrow$(1) is easy: take $C=E\cup\{c\}$, $D=E\cup\{d\}$ and $H=G_{E\cup\{c\}}+G_{E\cup\{d\}}$. Clearly $G_C\leq H$ and $G_D\leq H$, so by (2), $G_E\leq H$, hence $c\cdot G_E\subseteq c\cdot H$ for any $c\in\D$.

For (1)$\Longrightarrow$(2), we shall assume (1) and prove (2) by induction on the size of the (finite) set $C\cup D$. 

If $C\subseteq D$ or $D\subseteq C$, then $C\cap D=C$ or $C\cap D=D$ and the conclusion follows trivially. Otherwise, consider any $c\in C\setminus D$ and $d\in D\setminus C$; obviously $c\neq d$. Define
\[
	E = (C\cup D)\setminus\{c,d\}.
\]
We have $C\subseteq E\cup\{c\}$ and $D\subseteq E\cup\{d\}$, so
\[
	G_{E\cup\{c\}}\leq G_C\leq H \qquad \qquad 
	G_{E\cup\{d\}}\leq G_D\leq H.
\]

We shall now prove that $G_E\leq H$. To this end, consider any 
$\pi\in G_E$. By (1), there exists
a permutation 
\[\tau=\sigma_1\theta_1\sigma_2\theta_2\cdots\sigma_n\theta_n\]
 such that all $\sigma_i\in G_{E\cup\{c\}}$, $\theta_i\in G_{E\cup\{d\}}$, and $\tau(c)=\pi(c)$.
Since $G_{E\cup\{c\}}\leq H$ and $G_{D\cup\{d\}}\leq H$, all $\sigma_i, \theta_i \in H$, hence also $\tau\in H$.

On the other hand, clearly $G_{E\cup\{c\}}\leq G_E$ and $G_{E\cup\{d\}}\leq G_E$, so 
all $\sigma_i,\theta_i\in G_E$, therefore $\tau\in G_E$. As a result, $\tau\pi^{-1}\in G_E$. Since $\tau\pi^{-1}(c)=c$, we obtain $\tau\pi^{-1}\in G_{E\cup\{c\}}$, therefore $\tau\pi^{-1}\in H$. Together with $\tau\in H$ proved above, this gives $\pi\in H$. Thus we have proved $G_E\leq H$.

It is now easy to show that $G_{C\cap D}\leq H$. Indeed, $|C\cup E|=|C\cup D|-1$, so by the inductive assumption for $C$ and $E$, we have $G_{C\setminus\{c\}}\leq H$ (note that $C\setminus\{c\}=C\cap E$). Further, $|(C\setminus\{c\})\cup D|=|C\cup D|-1$, so $G_{C\cap D}\leq H$ (note that $(C\setminus\{c\})\cap D=C\cap D$).
\end{proof}

As an application:

\begin{cor}\label{cor:leastsupportseq}
The equality symmetry admits least supports.
\end{cor}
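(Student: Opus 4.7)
The plan is to apply Theorem~\ref{thm:leastsupports} and verify its condition (1). So I fix a finite $E \subseteq \D$ and distinct $c, d \in \D \setminus E$; given an arbitrary $\pi \in G_E$, I must exhibit $\tau$ in the subgroup $G_{E \cup \{c\}} + G_{E \cup \{d\}}$ with $c \cdot \tau = c \cdot \pi$. Setting $e := \pi(c) = c\cdot\pi$, I first note that $e \in \D \setminus E$: if $e \in E$ then $e = \pi^{-1}(e)$ (since $\pi^{-1} \in G_E$), contradicting $\pi^{-1}(e) = c \notin E$.

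Now I do a three-way case analysis on $e$. If $e = c$, take $\tau = \mathrm{id}$. If $e \neq c$ and $e \neq d$, then the transposition $(c\ e)$ fixes every element of $E \cup \{d\}$ pointwise, so it lies in $G_{E \cup \{d\}}$; taking $\tau = (c\ e)$ gives $c \cdot \tau = e$. The remaining case is $e = d$: here I pick any $f \in \D \setminus (E \cup \{c,d\})$, which exists because $\D$ is infinite while $E \cup \{c,d\}$ is finite, and set $\tau = (c\ f)(f\ d)$. The transposition $(c\ f)$ fixes $E \cup \{d\}$, hence lies in $G_{E \cup \{d\}}$, and $(f\ d)$ fixes $E \cup \{c\}$, hence lies in $G_{E \cup \{c\}}$. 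Using the right action convention $c \cdot (\alpha \beta) = \beta(\alpha(c))$, one computes $c \cdot \tau = (f\ d)((c\ f)(c)) = (f\ d)(f) = d$, as required.

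In each case $\tau$ is a product of elements drawn from $G_{E \cup \{c\}}$ and $G_{E \cup \{d\}}$, hence $\tau \in G_{E \cup \{c\}} + G_{E \cup \{d\}}$, and $c \cdot \tau = e = c \cdot \pi$. This establishes the inclusion $c \cdot G_E \subseteq c \cdot (G_{E \cup \{c\}} + G_{E \cup \{d\}})$ required by condition (1) of Theorem~\ref{thm:leastsupports}, from which the corollary follows. There is no real obstacle here: the crux is only that $G = \permgrp(\D)$ is rich enough to contain every transposition, and that $\D$ is infinite enough to supply the auxiliary fresh element $f$ used to bridge $c$ and $d$ through the otherwise disjoint subgroups $G_{E \cup \{c\}}$ and $G_{E \cup \{d\}}$.
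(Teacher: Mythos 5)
Your proof is correct and follows essentially the same route as the paper's: it verifies condition (1) of Theorem~\ref{thm:leastsupports} by case analysis on $e=\pi(c)$, using the transposition $(c\ e)\in G_{E\cup\{d\}}$ when $e\neq d$ and bridging through a fresh element $f$ via $(c\ f)\in G_{E\cup\{d\}}$ followed by $(f\ d)\in G_{E\cup\{c\}}$ when $e=d$. Your explicit check that $\pi(c)\notin E$ and the separate identity case for $e=c$ are minor elaborations of details the paper leaves implicit.
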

\begin{proof}
Consider any finite $E\subseteq\D$ and $c,d\not\in E$ such that $c\neq d$. Take any 
$e\in c\cdot G_E = \D\setminus E$. We need to show some $\pi\in G_{C\cup\{c\}}+G_{C\cup\{d\}}$ such that $\pi(c)=e$. 

There are two cases to consider. If $e\neq d$, put $\pi=(c\ e)\in G_{C\cup\{d\}}$. 
If $e=d$, take some fresh $d'\not\in E\cup\{c,d\}$ and put $\pi=\sigma\theta$, where 
\[
	\sigma = (c\ d') \in G_{C\cup\{d\}} \qquad \mbox{and} \qquad
	\theta = (d\ d') \in G_{C\cup\{c\}}.
\]
Then use Theorem~\ref{thm:leastsupports}.
\end{proof}

Corollary~\ref{cor:leastsupportseq} was first proved by Gabbay and Pitts~\cite[Prop.~3.4]{GP02}.

\begin{cor}\label{cor:leastsupportsord}
The total order symmetry admits least supports.
\end{cor}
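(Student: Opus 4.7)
The plan is to verify condition~(1) of Theorem~\ref{thm:leastsupports} in a style that parallels the proof of Corollary~\ref{cor:leastsupportseq}, but with the linear order of $\Q$ taking the place of the equality structure. The first step is to spell out the orbits of $G_E$ on $\Q$ for finite $E$: since $G$ consists of monotone bijections and $E$ cuts $\Q$ into finitely many maximal open intervals, the orbits of $G_E$ on $\Q$ are exactly the singletons $\{a\}$ for $a \in E$, together with each maximal open interval of $\Q \setminus E$.

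Now fix finite $E \subseteq \Q$ and distinct $c, d \in \Q \setminus E$, and let $I$ be the maximal interval of $\Q \setminus E$ containing $c$, so $c \cdot G_E = I$. Given an arbitrary $e \in I$, the goal is to produce $\pi \in G_{E \cup \{c\}} + G_{E \cup \{d\}}$ with $c \cdot \pi = e$. I would split on whether $d \in I$. If $d \notin I$, then $I$ remains a single interval of $\Q \setminus (E \cup \{d\})$, so $c \cdot G_{E \cup \{d\}} = I$ and a single $\pi \in G_{E \cup \{d\}}$ does the job.

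If $d \in I$, assume without loss of generality that $c < d$ and write $I = (a,b)$ with $a < c < d < b$. Now $c \cdot G_{E \cup \{d\}} = (a, d)$, so for $e \in (a, d)$ a single element of $G_{E\cup\{d\}}$ again suffices. For the remaining range $e \in \{d\} \cup (d, b)$ I would pick an auxiliary point $c'' \in (c, d)$, choose $\theta \in G_{E \cup \{d\}}$ with $\theta(c) = c''$, and $\sigma \in G_{E \cup \{c\}}$ with $\sigma(c'') = e$; then $\pi = \theta \sigma$ lies in $G_{E \cup \{c\}} + G_{E \cup \{d\}}$ and satisfies $c \cdot \pi = \sigma(\theta(c)) = e$. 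The symmetric case $c > d$ is identical.

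The delicate point is the existence of $\sigma$: both $c''$ and $e$ must lie in the same maximal interval of $\Q \setminus (E \cup \{c\})$, since monotone bijections of $\Q$ act transitively only within a single open interval. This is exactly why $c''$ is chosen strictly between $c$ and $d$: the chain $c < c'' < d \leq e < b$ places both $c''$ and $e$ inside the interval $(c, b)$ of $\Q \setminus (E \cup \{c\})$, and homogeneity of $(\Q, <)$ furnishes the required monotone bijection fixing $E \cup \{c\}$ pointwise. Combining the cases verifies condition~(1) of Theorem~\ref{thm:leastsupports}, from which the corollary follows.
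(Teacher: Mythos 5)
Your proof is correct and follows essentially the same route as the paper's: both verify condition~(1) of Theorem~\ref{thm:leastsupports}, reduce to the case where $d$ lies between $c$ and the target $e$, and factor the required map through an auxiliary point strictly between $c$ and $d$ (your $c''$ is the paper's $d'$, and your $\theta,\sigma$ are the paper's $\sigma,\theta$). Your write-up is somewhat more explicit about the orbit structure of $G_E$ and about why the intermediate point must lie in $(c,d)$, but the argument is the same.
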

\begin{proof}
Consider any finite $E\subseteq\D$ and $c,d\not\in E$ such that $c\neq d$. Let $l$ be the greatest element of $E$ smaller than $c$, and let $h$ be the smallest element of $E$ greater than $c$, assuming they both exist. (The cases where $c$ is smaller/greater than all elements of $E$ are similar). Then $c\cdot G_E$ is the open interval of rational numbers $(l,h)$.
Take any $e\in (l,h)$; without loss of generality assume that $e>c$. We need to show some $\pi\in G_{C\cup\{c\}}+G_{C\cup\{d\}}$ such that $\pi(c)=e$. 

The only interesting case is $d\in(c,e]$. In this case, take some $d'\in(c,d)$ and put $\pi=\sigma\theta$, where
\begin{itemize}
\item $\sigma$ is some monotone permutation that acts as identity on $(-\infty,l]\cup[d,+\infty)$ (so $\sigma\in G_{E\cup\{d\}}$) and such that $\sigma(c)=d'$,
\item $\theta$ is some monotone permutation that acts as identity on $(\infty,c]\cup[h,+\infty)$ (so $\theta\in G_{E\cup\{c\}}$) and such that $\theta(d')=e$.
\end{itemize}
Then use Theorem~\ref{thm:leastsupports}.
\end{proof}

\subsection{Fungibility}

In general, even if $G\leq\permgrp(\D)$ admits least supports, not every finite subset of $\D$ is the least support of some open subgroup of $G$ (see Example~\ref{ex:leastnotfun} below). We now characterize those subsets that are.

For any $C\subseteq \D$ and $G\leq\permgrp(\D)$, the restriction of $G$ to $C$ is defined by
\[
	G|_C = \{\pi|_C \mid \pi\in G,\ C\cdot\pi=C\} \leq \permgrp(C).
\]
Clearly if $H\leq G$ then $H|_C\leq G|_C$.
On the other hand, for $S\leq\permgrp(C)$, the {\em $G$-extension} of $S$ is
\[
	ext_G(S) = \{\pi\in G \mid \pi|_C\in S\} \leq G.
\]

\begin{defi}\label{def:fungible}
A finite set $C\subseteq\D$ is {\em fungible} (wrt.~$G$) if for every $c\in C$ there exists a $\pi\in G$ such that:
\begin{itemize}
\item $\pi(c)\neq c$, and
\item $\pi(c')=c'$ for all $c'\in C\setminus\{c\}$. 
\end{itemize}
We say that a data symmetry $(\D,G)$ is fungible if every finite $C\subseteq \D$ is fungible.
\end{defi}

\begin{exa}
The equality symmetry and the total order symmetry are both fungible. The integer symmetry is not fungible, as the set $\{1,2\}$ is not fungible in it: if $\pi(1)=1$ then necessarily $\pi(2)=2$, for $\pi\in G$.
\end{exa}

\begin{lem}\label{lem:fungibleleastsupp}\ 
\begin{enumerate}
\item For any open $H\leq G$, if the least support of $H$ exists then it is fungible.
\item If $(\D,G)$ admits least supports then every finite fungible $C \subseteq \D$ is the least support of $ext_G(S)$, for any $S\leq\permgrp(C)$.
\item If $(\D,G)$ is fungible then every finite $C \subseteq \D$ is the least support of $ext_G(S)$, for any $S\leq\permgrp(C)$.
\end{enumerate}
\end{lem}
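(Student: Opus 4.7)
The plan is to prove the three parts with a common bijectivity argument: a permutation that fixes every element of $C$ except one element $c$ must, by injectivity, send $c$ to a point outside $C$ entirely. This lets us catch such a permutation in the stabilizer of a strictly smaller candidate support, while simultaneously forcing it to leave $C$ setwise and hence to fall outside $ext_G(S)$.

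For part (1), let $C$ be the least support of $H$, and suppose toward contradiction that some $c \in C$ admits no fungibility witness, i.e.\ every $\pi \in G_{C \setminus \{c\}}$ automatically satisfies $\pi(c) = c$. This means $G_{C \setminus \{c\}} \leq G_C$; the reverse inclusion is obvious, so $G_C = G_{C \setminus \{c\}} \leq H$. Hence $C \setminus \{c\}$ is also a support of $H$, contradicting the minimality of $C$.

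For parts (2) and (3), I would first observe that $C$ always supports $ext_G(S)$: any $\pi \in G_C$ restricts to the identity on $C$, which lies in the subgroup $S$, so $\pi \in ext_G(S)$. For part (2), the admission of least supports yields a least support $D \subseteq C$ of $ext_G(S)$; if $D \neq C$, pick $c \in C \setminus D$ and, using fungibility of $C$, choose $\pi \in G$ that fixes $C \setminus \{c\}$ and moves $c$. Since $D \subseteq C \setminus \{c\}$, we have $\pi \in G_{C \setminus \{c\}} \leq G_D \leq ext_G(S)$; but by the common bijectivity argument $\pi(c) \notin C$, so $\pi$ does not preserve $C$ setwise, $\pi|_C$ is undefined, and $\pi \notin ext_G(S)$, a contradiction. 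Part (3) avoids the least-support hypothesis altogether: given any finite support $D$ of $ext_G(S)$ and a hypothetical $c \in C \setminus D$, apply fungibility of the finite set $D \cup C$ to $c$ to obtain $\pi \in G$ fixing $(D \cup C) \setminus \{c\}$ and moving $c$; then $\pi \in G_D \leq ext_G(S)$, while the same bijectivity argument delivers the analogous contradiction, showing $C \subseteq D$ and hence that $C$ itself is the least support.

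The main subtlety, modest as it is, lies in recognizing that $ext_G(S)$ is defined only on permutations preserving $C$ setwise, so a permutation dragging $c$ out of $C$ is automatically excluded; this exclusion is precisely what powers the contradictions in parts (2) and (3), and the analogous stabilizer-collapse trick $G_C = G_{C \setminus \{c\}}$ powers the one in part (1).
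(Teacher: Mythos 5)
Your proof is correct and follows essentially the same route as the paper's: part (1) via the stabilizer collapse $G_{C\setminus\{c\}}=G_C$, and parts (2) and (3) by exhibiting a fungibility witness that lies in the stabilizer of the smaller candidate support yet moves $c$ out of $C$ and hence out of $ext_G(S)$. Your explicit ``bijectivity argument'' merely spells out the step the paper leaves implicit when it calls $\pi$ a witness for $G_{C\setminus\{c\}}\not\leq ext_G(S)$.
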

\begin{proof}
For (1), it is not difficult to check that if $C$ is not fungible then $G_{C\setminus\{c\}}=G_C$ for some $c\in C$, therefore whenever $C$ supports $H$ so does $C\setminus\{c\}$.

For (2), first show that $C$ supports $ext_G(S)$; indeed, $G_C = \{\pi\in G\mid \pi|_C=e|_C\} \subseteq ext_G(S)$. In this part fungibility is not used. Since $(\D,G)$ admits least supports, if $C$ is not the least support then there must be some support properly contained in it. However, if $C$ is fungible then no $C\setminus\{c\}$ supports $ext_G(S)$; indeed, the permutation $\pi$ from Definition~\ref{def:fungible} is a witness for $G_{C\setminus\{c\}}\not\leq ext_G(S)$. Since supports of a given group are always closed under supersets, no $C'\subsetneq C$ supports $ext_G(S)$. 

Note that in (3) the existence of least supports is not assumed, so it does not follow immediately from (2). For a proof of (3), first show that $C$ supports $ext_G(S)$ as in (2) above. Then assume another support $D$ of $ext_G(S)$. We shall show that necessarily $C\subseteq D$. To this end, assume to the contrary that some $c\in C\setminus D$ exists. By the assumption on $(\D,G)$ the set $C\cup D$ is fungible, so
the permutation $\pi$ from Definition~\ref{def:fungible} is a witness for $G_{C\cup D\setminus\{c\}}\not\leq ext_G(S)$. But $G_{C\cup D\setminus\{c\}}\leq G_D$, so $G_D\not\leq ext_G(S)$, contradicting the assumption on $D$.
\end{proof}

In general, there is no implication between fungibility and the existence of least supports, as the following two examples show.

\begin{exa}\label{ex:leastnotfun}
Let $\D$ be a countably infinite set with a distinguished element $d$, and let $G$ be the group of all permutations $\pi$ of $\D$ such that $\pi(d)=d$. The symmetry $(\D,G)$ is not fungible, as the set $\{d,e\}$ is not fungible for any $e\neq d$. The fact that $(\D,G)$ admits least supports can be proved along the lines of Corollary~\ref{cor:leastsupportseq}.

Note that the set $\{d,e\}$ is not the least support of any open subgroup of $G$. Indeed, $G_{\{d,e\}}=G_{\{e\}}$, so whenever $\{d,e\}$ supports a subgroup, so does $\{e\}$.
\end{exa}

\begin{exa}\label{ex:funnotleast}
Let $\D=\{0,1\}\times\N$, and let $G$ be the group of all bijections $\pi$ on $\D$ that either preserve the first components of all elements, or negate the first components of all elements. Such a permutation may be presented by a triple $(a,\pi,\sigma)$ with $a\in \{0,1\}$ and $\pi,\sigma\in\permgrp(\N)$, acting on $\D$ as follows:
\[
	(0,n) \mapsto (a,\pi(n)) \qquad \qquad (1,n) \mapsto (1-a,\sigma(n))
\]
It is easy to check that $\D$ is fungible. Now consider the set $X=\{0,1\}$ with an action of $G$ defined by:
\[
	0 \cdot(a,\pi,\sigma) = a \qquad \qquad 1 \cdot (a,\pi,\sigma)  = 1-a
\]
Note that this action disregards the $\pi$ and $\sigma$ components of a permutation in $G$. Now, $0\in X$ is supported by any singleton $\{(0,n)\}\subseteq\D$, but not by the empty set. As a result, $(\D,G)$ does not admit least supports.
\end{exa}

\subsection{Support representation}

From now on, we assume a data symmetry $(\D,G)$ that admits least supports.

\begin{defi}\label{def:supgrprepr}
A {\em support representation} is a pair $(C,S)$, where $C\subseteq\D$ is finite and fungible, and $S\leq G|_C$. Its {\em subgroup semantics} is 
\[
\supgsem{C,S}=ext_G(S).
\]
\end{defi}
By Lemma~\ref{lem:fungibleleastsupp}(2), $\supgsem{C,S}$ is an open subgroup of $G$ and $C$ is the least support of it.
 
\begin{prop}\label{prop:supgrprepr}
Every open subgroup $H\leq G$ is equal to some $\supgsem{C,S}$.
\end{prop}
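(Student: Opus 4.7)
The plan is: given an open subgroup $H \leq G$, let $C$ be the least support of $H$, which exists by the assumption that $(\D,G)$ admits least supports and by Fact~\ref{fact:admits-leastsupports} (the intersection of two supports of $H$ is again a support). Set $S = \{\pi|_C : \pi \in H\}$, and then verify that $(C,S)$ is a legitimate support representation with $H = \supgsem{C,S}$. By Lemma~\ref{lem:fungibleleastsupp}(1), $C$ is automatically fungible, so one part of the definition of a support representation is immediate.

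The key step — and the main obstacle — is to show that every $\pi \in H$ preserves $C$ setwise, so that $\pi|_C$ actually lies in $G|_C$ and the definition of $S$ above makes sense. To do this, I would use a conjugation argument: for any $\pi \in H$, the identity $\pi^{-1} G_C \pi = G_{\pi^{-1}(C)}$ holds in $G$, and since $G_C \leq H$ and $H$ is a subgroup containing $\pi$, we get $G_{\pi^{-1}(C)} \leq H$. Thus $\pi^{-1}(C)$ also supports $H$. By the least support property, $C \cap \pi^{-1}(C)$ supports $H$, so by minimality of $C$ we must have $C \subseteq \pi^{-1}(C)$, i.e.\ $\pi(C) \subseteq C$. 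Finiteness of $C$ and bijectivity of $\pi$ then force $\pi(C) = C$.

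With that established, $S$ is the image of $H$ under the restriction homomorphism $\{\sigma \in G : \sigma(C)=C\} \to \permgrp(C)$, and so $S \leq G|_C$; hence $(C,S)$ is a support representation in the sense of Definition~\ref{def:supgrprepr}.

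It remains to check $H = \supgsem{C,S} = ext_G(S)$. The inclusion $H \subseteq ext_G(S)$ is immediate from the definition of $S$. For the reverse, suppose $\pi \in ext_G(S)$, so $\pi|_C \in S$, meaning there exists $\pi' \in H$ with $\pi'|_C = \pi|_C$. Then $\pi'^{-1}\pi$ acts as the identity on $C$, so $\pi'^{-1}\pi \in G_C \leq H$, whence $\pi = \pi' \cdot (\pi'^{-1}\pi) \in H$. This completes the proof. The only genuinely non-routine step is the setwise invariance of the least support under the action of $H$, which is precisely where the least-supports hypothesis gets used.
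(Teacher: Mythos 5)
Your proof is correct and follows essentially the same route as the paper: take $C$ to be the least support of $H$, let $S$ be the restriction of $H$ to $C$, show via leastness and finiteness that every $\pi\in H$ preserves $C$ setwise (so that $S\leq G|_C$), and then verify $ext_G(S)=H$, with the nontrivial inclusion coming from $G_C\leq H$. Your conjugation identity $\pi^{-1}G_C\pi = G_{\pi^{-1}(C)}$ is the paper's observation that $\{\sigma^{-1}\mid \sigma\in\mathcal{B}_C(\pi)\}=\mathcal{B}_{C\cdot\pi}(\pi^{-1})$ in different clothing (modulo the paper's right-action convention for composing permutations, which merely swaps the roles of $\pi$ and $\pi^{-1}$ and does not affect the argument).
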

\begin{proof}
Put $S=H|_C$ where $C$ is the least support of $H$; obviously $H|_C\leq G|_C$ since $H\leq G$, and $C$ is fungible by Lemma~\ref{lem:fungibleleastsupp}(1). Then calculate
\begin{align*}
	ext_G(H|_C) &= \{\pi\in G\mid \pi|_C\in H|_C\} = \{\pi\in G\mid \exists\sigma\in H.\ \pi|_C=\sigma|_C, C\cdot\sigma=C\} \\
	&\stackrel{(\ast)}{=} \{\pi\in H\mid C\cdot\pi=C\} \stackrel{(\ast\ast)}{=} H
\end{align*}
Step $(\ast)$ above is valid since $C$ supports $H$, as $\pi|_C=\sigma|_C$ iff 
$\pi\in\mathcal{B}_C(\sigma)\subseteq H$ for $\sigma\in H$ (see~\eqref{eq:ball}). For step $(\ast\ast)$, check that
for any $\pi\in G$, 
\[
	\{\sigma^{-1}\mid \sigma\in \mathcal{B}_C(\pi)\} = \mathcal{B}_{C\cdot\pi}(\pi^{-1}).
\] 
This implies that if $C$ supports $H$ then so does $C\cdot\pi$, for any $\pi\in H$. Since $C$ is the least support of $H$, there must be $C\subseteq C\cdot\pi$ and hence by finiteness, $C\cdot\pi=C$.
\end{proof}

In the following we shall use a simple characterization of the subgroup relation in terms of representations:

\begin{lem}\label{lem:subgrp}
$\supgsem{C,S}\leq\supgsem{D,T}$ if and only if $D\subseteq C$ and $S|_D\leq T$.
\end{lem}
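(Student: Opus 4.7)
The plan is to unfold Definition~\ref{def:supgrprepr} on both sides and to exploit Lemma~\ref{lem:fungibleleastsupp}(2), which tells us that $C$ (respectively $D$) is the \emph{least} support of $\supgsem{C,S}$ (respectively $\supgsem{D,T}$).

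For the ``if'' direction, I would take any $\pi\in\supgsem{C,S}$, so that $\pi|_C\in S$. Since $D\subseteq C$ and $S|_D$ is tacitly assumed to be well-defined (each element of $S$, being a permutation of $C$, must preserve $D$ for $S|_D\leq T$ to make sense), we obtain $D\cdot\pi=D$ and $\pi|_D=(\pi|_C)|_D\in S|_D\leq T$. Therefore $\pi\in\supgsem{D,T}$, yielding the required subgroup inclusion.

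For the ``only if'' direction, the first step is to deduce $D\subseteq C$. I would argue the general fact that if $H\leq K$ are open subgroups, then any support $E$ of $H$ is also a support of $K$, since $G_E\leq H\leq K$. Applied to $\supgsem{C,S}\leq\supgsem{D,T}$, this shows that $C$ supports $\supgsem{D,T}$, so by minimality of $D$ (Lemma~\ref{lem:fungibleleastsupp}(2)) we get $D\subseteq C$. The second step is to show $S|_D\leq T$: given $s\in S$, pick $\pi\in G$ with $C\cdot\pi=C$ and $\pi|_C=s$; then $\pi\in\supgsem{C,S}\leq\supgsem{D,T}$, hence $\pi|_D\in T\leq G|_D$. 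In particular $\pi$ preserves $D$, so $s$ preserves $D\subseteq C$, and $s|_D=\pi|_D\in T$. This simultaneously verifies well-definedness of $S|_D$ and the inclusion.

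The only potentially subtle step is handling the well-definedness of $S|_D$: it has to be read as an implicit side condition on the statement $S|_D\leq T$. I would address this explicitly by observing that in both directions the inclusion $\supgsem{C,S}\leq\supgsem{D,T}$ (via the particular $\pi$'s chosen above) forces each $s\in S$ to preserve $D$, so the restriction $s\mapsto s|_D$ is a well-defined group homomorphism $S\to G|_D$ whose image lies in $T$.
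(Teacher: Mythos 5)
Your proof is correct and follows essentially the same route as the paper's: the inclusion $D\subseteq C$ is deduced from the fact that $C$ supports $\supgsem{D,T}$ together with Lemma~\ref{lem:fungibleleastsupp}(2), and the remaining equivalence is the paper's chain $\pi|_C\in S\Rightarrow(\pi|_C)|_D\in T$ unfolded into its two directions, using $S\leq G|_C$ to realize each $\tau\in S$ as some $\pi|_C$. Your explicit treatment of the well-definedness of $S|_D$ (each $s\in S$ must preserve $D$) is a welcome clarification of what the paper's final step $\forall\tau\in S.\ \tau|_D\in T$ leaves implicit.
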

\begin{proof}
First we prove that $\supgsem{C,S}\leq\supgsem{D,T}$ implies $D\subseteq C$. Indeed, assuming the former, $C$ supports $\supgsem{D,T}$ (as it supports $\supgsem{C,S}$). However, the least support of $\supgsem{D,T}$ is $D$ by Lemma~\ref{lem:fungibleleastsupp}(2), therefore $D\subseteq C$.

Then, assuming $D\subseteq C$, unfold the definitions and check
\[\begin{array}{c}
   \supgsem{C,S}\leq\supgsem{D,T} \\
   \Updownarrow \\ 
   \forall \pi\in G.\ \pi|_C\in S \Longrightarrow \pi|_D\in T \\
   \Updownarrow \\
   \forall \pi\in G.\ \pi|_C\in S \Longrightarrow (\pi|_C)|_D\in T \\
   \Updownarrow \\
   \forall \tau\in S.\ \tau|_D\in T;
\end{array}\]
the last step uses the assumption that $S\leq G|_C$.
\end{proof}

We now compose representations~\ref{def:subrepr} and~\ref{def:supgrprepr} to represent single-orbit nominal $G$-sets in terms of least supports.

\begin{defi}\label{def:supsetrepr-def}
The {\em $G$-set semantics} $\supssem{C,S}$ of a support representation (see Definition~\ref{def:supgrprepr}) is the set of those functions $u:C\to\D$ that extend to a permutation from $G$, quotiented by the equivalence relation:
\begin{equation}\label{eqn:equivK}
	u\equiv_S v \iff \exists \tau\in S.\ \tau u=v.
\end{equation}
An action of $G$ on $\supssem{C,S}$  is defined by composition: 
\[
\absclass{S}{u}\cdot\pi=\absclass{S}{u\pi}.
\]
\end{defi}
Here and in the following, by $\absclass{S}{u}$ we denote the equivalence class of $u$ under $\equiv_S$.

\begin{prop}\label{prop:supsetrepr} 
(1) $\supssem{C,S}$ is a single-orbit nominal $G$-set.
(2) Every single-orbit nominal $G$-set $X$ is isomorphic to some $\supssem{C,S}$.
\end{prop}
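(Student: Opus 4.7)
The plan for part (1) is to verify, in order, that the action $[u]_S\cdot\pi:=[u\pi]_S$ is well-defined on equivalence classes, that the group action axioms hold, that the set has a single orbit, and that every element has finite support. Well-definedness reduces to noting that $\tau\in S\leq G|_C$ acts on the domain of $u\colon C\to\D$ while $\pi\in G$ acts on the codomain, so the two compositions commute and $\tau u=v$ forces $\tau(u\pi)=(\tau u)\pi=v\pi$; the action axioms then follow from associativity of function composition. Single-orbitness is easy: for $u,v\in\D^{(C)}$ extending respectively to $\pi_u,\pi_v\in G$, the element $\rho=\pi_u^{-1}\pi_v\in G$ satisfies $u\rho=v$, so $[u]_S\cdot\rho=[v]_S$. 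For nominality I would first check that $[e|_C]_S$ is supported by $C$: any $\pi\in G_C$ fixes $C$ pointwise, so $(e|_C\cdot\pi)(c)=\pi(c)=c$, giving $[e|_C]_S\cdot\pi=[e|_C]_S$. Every other element is of the form $[e|_C]_S\cdot\rho$ by single-orbitness, hence has finite support $\rho(C)$ by Lemma~\ref{lem:actionpreservessupport}.

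For part (2) the strategy is to compose Propositions~\ref{prop:subrepr}(2) and~\ref{prop:supgrprepr}. Given a single-orbit nominal $G$-set $X$, pick any $x\in X$. Proposition~\ref{prop:subrepr}(2) gives $X\cong\cossem{G_x}$. Nominality of $X$ makes $x$ finitely supported; since a set $D$ supports $x$ iff $G_D\leq G_x$ iff $D$ supports the subgroup $G_x$, the open subgroup $G_x$ has a least support $C$ equal to the least support of $x$, which is fungible by Lemma~\ref{lem:fungibleleastsupp}(1). Applying the argument of Proposition~\ref{prop:supgrprepr} produces $G_x=\supgsem{C,S}$ for $S=G_x|_C$. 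It remains to exhibit an isomorphism $\cossem{\supgsem{C,S}}\cong\supssem{C,S}$, valid for any support representation.

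The isomorphism I would use is $f\colon\cossem{ext_G(S)}\to\supssem{C,S}$ defined by $f(H\pi)=[\pi|_C]_S$, where $H=ext_G(S)$. The crucial calculation is well-definedness: if $H\pi=H\sigma$ then $\sigma\pi^{-1}\in H$, so $\tau:=(\sigma\pi^{-1})|_C\in S$, and the identity $\pi(\tau(c))=\pi(\pi^{-1}(\sigma(c)))=\sigma(c)$ shows $\tau\cdot\pi|_C=\sigma|_C$, witnessing $\pi|_C\equiv_S\sigma|_C$. Injectivity runs the same computation in reverse, surjectivity is immediate since every $u\in\D^{(C)}$ is $\pi|_C$ for some $\pi\in G$, and equivariance unfolds to $(\pi\sigma)|_C=\pi|_C\cdot\sigma$. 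The main obstacle I anticipate is simply bookkeeping around the reverse-composition convention $(\pi\sigma)(d)=\sigma(\pi(d))$ for the right action; the argument is otherwise a routine diagram chase.
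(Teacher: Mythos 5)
Your proof is correct and follows essentially the same route as the paper: the heart of both arguments is the equivariant bijection $\cossem{ext_G(S)}\cong\supssem{C,S}$ given by $H\pi\mapsto\absclass{S}{\pi|_C}$, composed with Propositions~\ref{prop:subrepr} and~\ref{prop:supgrprepr}. The only deviation is that you verify part (1) (well-definedness of the action, single-orbitness, nominality) by direct computation, whereas the paper reads it off from that same isomorphism together with Proposition~\ref{prop:subrepr}(1) and the openness of $\supgsem{C,S}$ --- both routes go through.
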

\begin{proof}
Both parts easily follow from Propositions~\ref{prop:subrepr} and~\ref{prop:supgrprepr}
once we prove that
\begin{equation}\label{eq:reprcomp}
	\supssem{C,S}\cong\cossem{\supgsem{C,S}}.
\end{equation}
(recall from Definition~\ref{def:subrepr} that $\cossem{H}$ is the set of right cosets of a subgroup $H$ in $G$).
For this we need an equivariant bijection between $\supssem{C,S}$ and the set of cosets of $H=\supgsem{C,S}$ in $G$.

To this end, map a coset $H\sigma$ to $\absclass{S}{\sigma|_C}$; this is well-defined since $C$ supports $H$. 
Conversely, for any $u:C\to\D$, map $\absclass{S}{u}$ to $H\sigma$ where $\sigma\in G$ is such that $\sigma|_C=u$. This is again well-defined under the choice of $\sigma$ since $C$ supports $H$. To check that it is also well-defined under the choice of $u$ from $\absclass{S}{u}$, assume $\tau u = v$ for some $\tau\in S$. Since $H=ext_G(S)$, there is some $\pi\in H$ such that $\pi|_C=\tau$. Then $\sigma|_C=u$ and $\theta|_C=v$ implies $(\pi\sigma)|_C=\theta|_C$, therefore (since $C$ supports $H$) $H\sigma=H\pi\sigma=H\theta$.

Finally, it is easy to check that the two constructions are equivariant and mutually inverse.
\end{proof}

It is also possible to represent equivariant functions between $G$-sets represented via least supports.

\begin{prop}\label{prop:supfunrepr}
Let $X=\supssem{C,S}$ and $Y=\supssem{D,T}$ be single-orbit nominal sets. Equivariant functions from $X$ to $Y$ are in bijective correspondence with those injective functions $u:D\to C$ that extend to a permutation from $G$, such that $uS\subseteq Tu$, quotiented by $\equiv_T$ (see~Definition~\ref{def:supsetrepr-def}). 
\end{prop}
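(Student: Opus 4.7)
The plan is to reduce the statement to Corollary~\ref{cor:subrepr-fun} by way of the isomorphisms $X \cong \cossem{H}$ and $Y \cong \cossem{K}$ from equation~\eqref{eq:reprcomp}, where $H = \supgsem{C,S}$ and $K = \supgsem{D,T}$. Corollary~\ref{cor:subrepr-fun} gives a bijection between equivariant functions $\cossem{H} \to \cossem{K}$ and cosets $K\pi$ satisfying $\pi H \subseteq K\pi$, and the isomorphism $\cossem{K}\to Y$ in the proof of Proposition~\ref{prop:supsetrepr} assigns to each coset $K\pi$ the class $[\pi|_D]_T$. The task therefore becomes to characterise precisely which classes $[u]_T$ arise this way, and to recover the condition $uS \subseteq Tu$.

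First I would rewrite the coset condition $\pi H \subseteq K\pi$ as $H$ being contained in the stabiliser of $K\pi$, which a direct calculation identifies with the conjugate subgroup $\pi K \pi^{-1}$. By Lemma~\ref{lem:fungibleleastsupp}(2), $K$ has least support $D$, so conjugation by $\pi$ produces a subgroup with least support $\pi(D)$; comparing with the least support $C$ of $H$ via the inclusion $H \subseteq \pi K \pi^{-1}$ forces $\pi(D) \subseteq C$. Hence $u := \pi|_D$ is an injection $D \to C$ that extends to a permutation of $\D$ in $G$, matching the form of morphism claimed by the statement.

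The crux is the second half of the eligibility condition. Given $\alpha \in H$ with $\sigma := \alpha|_C \in S$, I would compute $(\pi^{-1}\alpha\pi)|_D$ directly: for $d \in D$ it equals $\pi^{-1}(\sigma(u(d)))$, which lies in $D$ exactly when $\sigma$ preserves $u(D)$ setwise, and in that case represents the bijection $\tau$ of $D$ determined by $\sigma \circ u = u \circ \tau$. Requiring $\tau \in T$ for every $\sigma \in S$ is precisely the condition $uS \subseteq Tu$; the argument runs in both directions, so $\pi H \subseteq K\pi$ is equivalent to $u$ being of the advertised form. I expect this unfolding to be the main obstacle, since it mixes restrictions to $C$ and $D$, composition of partial bijections, and the juxtaposition convention used for right actions.

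Finally, well-definedness modulo $\equiv_T$ and surjectivity of the correspondence follow from the bijection part of Proposition~\ref{prop:supsetrepr}: two cosets $K\pi, K\pi'$ yield equal classes $[\pi|_D]_T = [\pi'|_D]_T$ iff they already coincide, and every injection $u : D \to C$ extending to $G$ arises as $\pi|_D$ for some $\pi \in G$. Since the eligibility conditions on both sides correspond, the already-established bijection of Proposition~\ref{prop:supsetrepr} restricts to the claimed correspondence between equivariant functions $X \to Y$ and $\equiv_T$-classes of injections $u \colon D \to C$ with $uS \subseteq Tu$.
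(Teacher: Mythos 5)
Your proposal is correct and follows essentially the same route as the paper: both reduce to the stabilizer characterization of Section~\ref{sec:gset-repr} via the isomorphism~\eqref{eq:reprcomp}, identify the stabilizer of the image element as a conjugate of $\supgsem{D,T}$ whose least support is $u(D)$, and unfold the resulting subgroup inclusion into $u(D)\subseteq C$ and $uS\subseteq Tu$. The only differences are cosmetic --- you pass through Corollary~\ref{cor:subrepr-fun} and unfold the condition $\pi H\subseteq K\pi$ by hand, where the paper invokes Proposition~\ref{prop:subrepr-fun} directly and packages the last step as Lemma~\ref{lem:subgrp} (and your writing the conjugate as $\pi K\pi^{-1}$ rather than the paper's $\pi^{-1}K\pi$ is only a discrepancy of composition convention, since your concrete computation $d\mapsto\pi^{-1}(\sigma(u(d)))$ is the right one).
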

\begin{proof}
By Proposition~\ref{prop:subrepr-fun} and by~\eqref{eq:reprcomp}, equivariant functions from $X$ to $Y$ bijectively correspond to those elements $[u]_T\in\supssem{D,T}$ (i.e., injective functions $u:D\to\D$ that extend to permutations from $G$, quotiented by $\equiv_T$) for which the condition
\begin{equation}\label{eqn:weiljgv}
   \supgsem{C,S}\leq G_{\absclass{T}{u}}
\end{equation}
holds. Considering $\absclass{T}{u}$ as a right coset of $\supgsem{C,K}$, it is easy to show that $G_{\absclass{T}{u}}=\pi^{-1}\supgsem{D,T}\pi$, for any $\pi\in G$ that extends $u$. Further, it is easy to check that $\pi^{-1}\supgsem{D,T}\pi=\supgsem{D\cdot u,u^{-1}Tu}$ (here note that $D\cdot u$ is fungible whenever $D$ is). As a result,~\eqref{eqn:weiljgv} is equivalent to
\[
	\supgsem{C,S}\leq\supgsem{D\cdot u,u^{-1}Tu}
\]
and, by Lemma~\ref{lem:subgrp}, to
\[
	D\cdot u\subseteq C \qquad\mbox{and}\qquad S|_{D\cdot u}\leq u^{-1}Tu.
\]
Equivalently, $u$ is an injection from $D$ to $C$ such that $uS\subseteq Tu$, as in the conclusion.
\end{proof}

As before, Propositions~\ref{prop:supsetrepr} and~\ref{prop:supfunrepr} can be phrased in the language of category theory, by analogy to Theorem~\ref{thm:subgrprepr}:

\begin{thm}\label{thm:suprepr}
For any data symmetry $(\D,G)$ which admits least supports, the category $\GNomI$ is equivalent to a category with:
\begin{itemize}
\item as objects, pairs $(C,S)$ where $C\subseteq\D$ is finite and fungible and $S\leq G|_C$,  
\item as morphisms from $(C,S)$ to $(D,T)$, those injective functions $u:D\to C$ that extend to permutations from $G$, such that $uS\subseteq Tu$, quotiented by $\equiv_T$.
\end{itemize}
\end{thm}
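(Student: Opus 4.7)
The plan is to package Propositions~\ref{prop:supsetrepr} and~\ref{prop:supfunrepr} into a categorical equivalence. I would define a functor $F : \mathcal{C} \to \GNomI$ from the concrete category $\mathcal{C}$ described in the statement, sending an object $(C,S)$ to $\supssem{C,S}$ (a single-orbit nominal $G$-set by Proposition~\ref{prop:supsetrepr}(1)) and a morphism $\absclass{T}{u}$ to the equivariant function supplied by Proposition~\ref{prop:supfunrepr}. Essential surjectivity of $F$ is then immediate from Proposition~\ref{prop:supsetrepr}(2), and fullness together with faithfulness is exactly the bijective correspondence of Proposition~\ref{prop:supfunrepr}. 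By the standard characterization of a categorical equivalence as an essentially surjective, fully faithful functor, this would complete the proof, provided $F$ is indeed a functor.

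To verify functoriality I would first make the action of $F$ on morphisms explicit. Tracing the proof of Proposition~\ref{prop:supfunrepr} through the isomorphism~\eqref{eq:reprcomp} and Corollary~\ref{cor:subrepr-fun} (keeping in mind the paper's convention $(\sigma\pi)(d) = \pi(\sigma(d))$) yields the clean formula
\[
   F(\absclass{T}{u})\bigl(\absclass{S}{x}\bigr) \;=\; \absclass{T}{x \circ u},
\]
where $x \circ u : D \to \D$ extends to a permutation in $G$ because both $x$ and $u$ do. Given this formula, the correct definition of composition in $\mathcal{C}$ is forced by functoriality to be $\absclass{R}{v} \circ \absclass{T}{u} = \absclass{R}{u \circ v}$; preservation of composition by $F$ is then just associativity of function composition, and preservation of the identity $\absclass{S}{\mathrm{id}_C}$ is immediate.

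The remaining obligations are routine. Closure of morphisms under composition is a two-step diagram chase: the condition $uS \subseteq Tu$ unpacks as ``for every $s \in S$ there exists $t \in T$ with $s \circ u = u \circ t$'', and similarly for $vT \subseteq Rv$, so one obtains $s \circ u \circ v = u \circ t \circ v = u \circ v \circ r$ for some $r \in R$, giving $(u \circ v) S \subseteq R (u \circ v)$. Well-definedness of the formula and of composition modulo $\equiv_T$ and $\equiv_R$ follows directly from these same relations, using that changing $u$ to $u \circ \tau$ with $\tau \in T$, or $v$ to $v \circ \rho$ with $\rho \in R$, only alters $u \circ v$ by post-composition with an element of $R$. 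The main technical subtlety, which I would single out as the step needing most care, is deriving the above explicit formula for $F$ from the abstract coset correspondence, since one must consistently track left-versus-right actions and function-composition conventions throughout.
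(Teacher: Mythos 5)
Your proposal is correct and follows essentially the same route as the paper, which states Theorem~\ref{thm:suprepr} as a direct categorical repackaging of Propositions~\ref{prop:supsetrepr} and~\ref{prop:supfunrepr} (by analogy to Theorem~\ref{thm:gsetrepr}) without spelling out the functoriality checks. Your explicit formula $F(\absclass{T}{u})(\absclass{S}{x})=\absclass{T}{x\circ u}$, the induced composition law, and the verification that $uS\subseteq Tu$ yields well-definedness and closure under composition are exactly the details the paper leaves implicit, and they are carried out correctly.
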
\smallskip

\noindent This representation is much more concrete than those of Theorems~\ref{thm:gsetrepr} or~\ref{thm:subgrprepr}. Indeed, pairs $(C,S)$ are finite entities, and equivariant functions are also represented by finite functions. As an immediate application, we obtain:

\begin{cor}\label{cor:countable}
For any data symmetry $(\D,G)$ with $\D$ countable, which admits least supports, there are only countably many non-isomorphic single-orbit nominal $G$-sets.
\end{cor}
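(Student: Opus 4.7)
The plan is to apply the concrete representation of single-orbit nominal $G$-sets just developed in Theorem~\ref{thm:suprepr} (or equivalently, Proposition~\ref{prop:supsetrepr}(2)), and then count. By that representation, every single-orbit nominal $G$-set is isomorphic to $\supssem{C,S}$ for some support representation $(C,S)$, i.e.\ a finite fungible subset $C\subseteq\D$ together with a subgroup $S\leq G|_C$. Hence the number of isomorphism classes of single-orbit nominal $G$-sets is bounded above by the cardinality of the class of all such pairs.

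First, I would note that since $\D$ is countable, the collection of finite subsets of $\D$ is countable, so only countably many candidates for $C$ arise. Second, for each such finite $C$, the group $G|_C$ is a subgroup of the finite group $\permgrp(C)$, and therefore is itself finite. Consequently, there are only finitely many subgroups $S\leq G|_C$, and in particular only finitely many pairs $(C,S)$ with that first component.

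Putting these two observations together, the class of support representations is a countable union of finite sets, hence countable. Since the semantics map $(C,S)\mapsto\supssem{C,S}$ hits every isomorphism class of single-orbit nominal $G$-sets by Proposition~\ref{prop:supsetrepr}(2), we conclude that there are at most countably many such classes up to isomorphism, as required.

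There is really no serious obstacle: the hard work has already been done in establishing the support representation (in particular, in showing that every open subgroup is of the form $\supgsem{C,S}$ for some finite fungible $C$, which in turn depends on the existence of least supports). Given that machinery, the corollary is a direct counting argument; the only small point to be careful about is that the representation need not be injective on iso classes, but this only strengthens the bound.
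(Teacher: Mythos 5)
Your proof is correct and follows exactly the paper's argument: apply the support representation of Proposition~\ref{prop:supsetrepr}(2), then count -- countably many finite subsets $C$ of a countable $\D$, and finitely many subgroups $S\leq G|_C\leq\permgrp(C)$ for each. The paper's own proof is the same counting argument stated more tersely.
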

\begin{proof}
Since $\D$ is countable, it has only countably many finite subsets $C$. Moreover, for any $C$, there are only finitely many choices of $S\leq\permgrp(C)$.
\end{proof}

To obtain an even more appealing representation, we shall now restrict attention to symmetries arising from certain classes of finite relational structures.

\section{\Fra symmetries}\label{sec:fraisse}

Two of the key symmetries studied in this paper: the equality and the total order symmetry, 
arise from a general construction of a \Fra limit known from standard model theory, to be defined in this section.

\subsection{\Fra limits}

A {\em signature} is a set of relation names together with (finite) arities. We shall now consider relational structures
over some fixed finite signature. For two relational structures $\struct$ and $\structB$, an \emph{embedding} $f : \struct \to \structB$ in an injective function  from the  carrier of $\struct$ to the carrier of $\structB$ that preserves and reflects all relations in the signature.

\begin{defi}\label{def:fraisseclass}
A class $\KK$ of finite structures over some fixed signature is called a \emph{\Fra class} if it:
\begin{itemize}
\item is closed under isomorphisms and substructures,
\item has \emph{amalgamation}: if $f_\structB: \structA \to \structB$ and $f_\structC: \structA \to \structC$ are embeddings and $\structA, \structB, \structC \in \KK$ then there
is a structure $\structD \in \KK$ together with two embeddings $g_\structB: \structB \to \structD$ and $g_\structC: \structC \to \structD$ that agree on the 
images of $f_\structB$ and $f_\structC$, i.e., 
$g_\structB f_\structB = g_\structC f_\structC$.
\end{itemize}
\end{defi}

\begin{exa}\label{ex:caowenvq}
Examples of \Fra classes include, over the empty signature:
\begin{enumerate}[label=\({\alph*}]
\item all finite structures, i.e.,~sets,
\item sets of size at most $k$, for any constant $k>0$,
\end{enumerate}
over the signature with a single unary predicate symbol $P$:
\begin{enumerate}[resume,label=\({\alph*}]
\item all finite sets such that at most one element satisfies $P$,
\end{enumerate}
and over the signature with a single binary relation symbol:
\begin{enumerate}[resume,label=\({\alph*}]
\item all finite structures, i.e.,~directed graphs,
\item undirected graphs, undirected trees,
\item equivalence relations,
\item equivalence relations with at most two equivalence classes,
\item preorders, partial orders, total orders.
\end{enumerate}
Classes that are {\em not} \Fra due to lack of amalgamation include, over the signature with a single binary relation symbol:
\begin{enumerate}[resume,label=\({\alph*}]
\item total orders of size at most $k$, for any constant $k>1$,
\item directed acyclic graphs,
\item undirected forests (i.e., sets of disjoint trees),
\item planar graphs.
\end{enumerate}
\end{exa}

The following theorem is standard in model theory (see e.g.~\cite{hodges}):

\begin{thm}
For any \Fra class $\KK$ there exists a unique, up to isomorphism, countable \emph{universal structure} $\U$, called the \Fra limit of $\KK$,
such that:
\begin{itemize}
\item the class of structures isomorphic to finite substructures of $\U$ is exactly $\KK$, and
\item $\U$ is \emph{homogenous}, i.e.,
 any isomorphism between two finite substructures of $\U$ 
 extends (not necessarily uniquely) to an automorphism of $\U$.
\end{itemize}
\end{thm}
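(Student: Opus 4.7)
The plan is to construct $\U$ as the union of a countable ascending chain $\U_0 \subseteq \U_1 \subseteq \cdots$ of structures in $\KK$, designed to satisfy the following \emph{extension property} at the limit: for every finite substructure $\structA \subseteq \U$ and every embedding $g : \structA \to \structB$ with $\structB \in \KK$, there is an embedding $h : \structB \to \U$ restricting to the inclusion on $\structA$. This single property will imply both that the age of $\U$ (the class of finite structures embeddable into it) equals $\KK$ and that $\U$ is homogeneous; it will also drive the uniqueness proof via back-and-forth.

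For the construction, I would first note that since the signature is finite and of finite arity, each cardinality admits only finitely many isomorphism types in $\KK$, so $\KK$ is countable up to isomorphism. I would then dovetail an enumeration of all ``tasks'' of the form $(n, f : \structA \to \U_n, g : \structA \to \structB)$ where $\structA, \structB \in \KK$ and $f$ is an embedding into the current approximation, and at each stage use amalgamation in $\KK$ to extend $\U_n$ to some $\U_{n+1} \in \KK$ in which the designated task is realized. A suitable priority ordering guarantees that every task is eventually served, so that $\U = \bigcup_n \U_n$ has the extension property. The age of $\U$ is contained in $\KK$ by closure under substructures, and equals $\KK$ because any $\structB \in \KK$ can be amalgamated with $\U_0$ over the empty structure (or introduced by an extension step from $\emptyset \hookrightarrow \structB$).

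Homogeneity follows from the extension property by a back-and-forth argument. Given an isomorphism $h : \structA \to \structA'$ between finite substructures of $\U$, fix an enumeration $\U = \{u_0, u_1, \ldots\}$ and inductively build partial isomorphisms $h_0 \subseteq h_1 \subseteq \cdots$ with $h_0 = h$. At even steps, to absorb $u_k$ into the domain, apply the extension property to the embedding of $\mathrm{dom}(h_k) \cup \{u_k\}$ viewed as a $\KK$-structure, transported through $h_k$; this yields an element of $\U$ to serve as $h_{k+1}(u_k)$. At odd steps do the dual operation for the codomain. The union $\bigcup_k h_k$ is an automorphism of $\U$ extending $h$. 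For uniqueness, given a second structure $\U'$ with both properties, a symmetric observation shows $\U'$ also satisfies the extension property (use amalgamation to realise any target $\structB$ inside $\U'$, then homogeneity to align the embedding of $\structA$); a back-and-forth between $\U$ and $\U'$ then constructs an isomorphism.

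The main obstacle is the bookkeeping in the existence proof: new extension tasks over freshly introduced elements keep appearing, so the enumeration must be arranged (by a standard dovetailing such as processing task $(n,k)$ by stage $2^n 3^k$, or by diagonalising Cantor-style) so that every task eventually fires. The subtler conceptual step is the derivation of the extension property from $\mathrm{Age}(\U') = \KK$ plus homogeneity in the uniqueness argument, which genuinely uses amalgamation inside $\KK$ rather than merely inside $\U'$.
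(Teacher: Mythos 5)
The paper does not prove this theorem itself---it is quoted as a standard result of model theory with a citation to Hodges---and your construction is precisely the classical proof of \Fra{}'s theorem: build the limit as the union of a chain in $\KK$ that realizes all extension tasks via amalgamation and dovetailing, then derive homogeneity and uniqueness from the resulting extension property by back-and-forth. Your argument is correct, including the observation that joint embedding comes from amalgamating over the empty structure (available here because the signature is purely relational and $\KK$ is closed under substructures); the only quibble is with your closing remark, since deriving the extension property from ``age equals $\KK$'' plus homogeneity in the uniqueness step needs no further amalgamation---one simply embeds $\structB$ anywhere into the other structure and uses homogeneity to move the resulting copy of $\structA$ into place.
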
\smallskip

\noindent For the rest of this section, fix a \Fra class $\KK$.
From $\KK$ we obtain a data symmetry $(\DU,\GU)$, where $\DU$ is the carrier of
$\U$ and $\GU=\autgrp(\U)\leq\permgrp(\DU)$ is its group of automorphisms. We shall call a data symmetry of this form a {\em \Fra symmetry}.

\begin{exa}\label{ex:orenve}
The equality and total order symmetries (see Example~\ref{ex:symmetries}), are both \Fra symmetries; the former arises from the class of all finite sets, the latter from the class of finite total orders.

Other \Fra symmetries of interest include:
\begin{itemize}
\item The {\em graph symmetry}, arising from the class of finite undirected graphs. The universal undirected graph is the so-called random graph~\cite{rado}, where vertices are natural numbers, and an edge $\{x,y\}$ is present if and only if the $x$-th bit in the binary representation of $y$ is $1$ (for $x<y$). In the graph symmetry, $\D$ is therefore the set of natural numbers, and $G$ is the automorphism group of the random graph.
\item The {\em partial order symmetry}, arising from the class of finite partial orders. The universal structure $\U$ is not easily described in this case (see~e.g.~\cite{upo2}), except that it is partially ordered and homogenous. 
\end{itemize}
\end{exa}

\begin{defi}\label{defi:well-behaved}
A \Fra symmetry $(\DU,\GU)$ is {\em well-behaved} if
it admits least supports and is fungible.
\end{defi}

All symmetries in Example~\ref{ex:orenve} are well behaved. However,
not every \Fra symmetry admits least supports or is fungible. Indeed, symmetries in Examples~\ref{ex:leastnotfun} and~\ref{ex:funnotleast} are both {\Franosp}. The one from Example~\ref{ex:leastnotfun} arises from the class in Example~\ref{ex:caowenvq}(c), and the one from Example~\ref{ex:funnotleast}  from Example~\ref{ex:caowenvq}(g).

\subsection{Structure representation}

We shall now refine the nominal set representation provided in Section~\ref{sec:least-supp} for well-behaved \Fra symmetries.
Looking at Definitions~\ref{def:supgrprepr} and~\ref{def:supsetrepr-def}, from the properties of \Fra limits it is easy to form the following definition:

\begin{defi}\label{def:repr}
A {\em structure representation} is a finite structure $\structA\in\KK$ (the {\em shape}) together with a group of automorphisms $S\leq\autgrp(\structA)$ (the {\em local symmetry}). Its {\em semantics} $\fsem{\structA,S}$ is the set of embeddings $u:\structA\to\U$, quotiented by $\equiv_S$ (see~\ref{eqn:equivK}).
A $\GU$-action on $\fsem{\structA,S}$ is defined by composition of embeddings with automorphisms of $\U$.
\end{defi}

\begin{prop}\label{prop:reprFra}
(1) $\fsem{\structA,S}$ is a single-orbit nominal $\GU$-set. (2) Every single-orbit nominal $\GU$-set $X$ is isomorphic to some $\fsem{\structA,S}$.
\end{prop}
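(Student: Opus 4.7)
The plan is to reduce Proposition~\ref{prop:reprFra} to Proposition~\ref{prop:supsetrepr} by exhibiting an equivalence between structure representations and support representations that is compatible with the respective semantics.

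First I would, given $(\structA, S)$, fix an arbitrary embedding $u_0 : \structA \to \U$ (one exists because $\structA \in \KK$ and $\KK$ is exactly the age of $\U$) and set $C = u_0(\structA) \subseteq \DU$. The set $C$ is fungible by well-behavedness of the symmetry. Writing $\structC$ for the substructure of $\U$ with carrier $C$, the key observation is that homogeneity of $\U$ yields $\autgrp(\structC) = \GU|_C$: every automorphism of $\structC$ extends to a global automorphism of $\U$, and conversely every $\pi \in \GU$ with $\pi(C) = C$ restricts to an automorphism of $\structC$, since such a $\pi$ preserves and reflects all relations in the signature. Via the canonical isomorphism $\autgrp(\structC) \cong \autgrp(\structA)$ induced by $u_0$, the subgroup $S \leq \autgrp(\structA)$ transports to a subgroup $S' \leq \GU|_C$, and $(C, S')$ is a support representation in the sense of Definition~\ref{def:supgrprepr}.

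Next I would construct an equivariant bijection $\fsem{\structA, S} \to \supssem{C, S'}$ sending $[u]_S$ to $[u \circ u_0^{-1}]_{S'}$. The map is well-defined: if $u : \structA \to \U$ is an embedding then $v := u \circ u_0^{-1}$ is an embedding of $\structC$ into $\U$, so by homogeneity $v$ extends to some $\GU$-permutation; conversely, any injection $v : C \to \DU$ that extends to a $\GU$-permutation restricts from a relation-preserving bijection of $\DU$, hence is itself an embedding of $\structC$ into $\U$, whence $v \circ u_0$ is an embedding $\structA \to \U$. Routine verification confirms that the bijection identifies $\equiv_S$ with $\equiv_{S'}$ via the transport of $S$ to $S'$, and that it commutes with the $\GU$-actions by post-composition. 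Part~(1) then follows from Proposition~\ref{prop:supsetrepr}(1). For part~(2), given a single-orbit nominal $\GU$-set $X$, Proposition~\ref{prop:supsetrepr}(2) supplies an isomorphism $X \cong \supssem{C, S'}$ for some support representation $(C, S')$. Taking $\structA$ to be the substructure of $\U$ with carrier $C$, one has $\structA \in \KK$ as a finite substructure of $\U$, and by homogeneity $\autgrp(\structA) = \GU|_C$, so $S' \leq \autgrp(\structA)$ is a legitimate local symmetry; the construction of the previous paragraph, applied with $u_0$ the inclusion $\structA \hookrightarrow \U$, delivers $\fsem{\structA, S'} \cong \supssem{C, S'} \cong X$.

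The main obstacle is not technical but conceptual: one must consistently use homogeneity of $\U$ to move between local data (automorphisms of and embeddings between finite substructures) and global data (permutations in $\GU$), and verify that these translations respect both the quotient equivalences and the $\GU$-actions. Once this dictionary is laid out, the remaining verifications are diagram chases, and the finite fungibility required by Definition~\ref{def:supgrprepr} is handed to us by the well-behavedness hypothesis.
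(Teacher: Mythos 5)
Your proposal is correct and follows essentially the same route as the paper: both reduce to Proposition~\ref{prop:supsetrepr} by using homogeneity of $\U$ to identify $\autgrp(\structA)$ with $\GU|_C$ and embeddings $\structA\to\U$ with injections $C\to\DU$ extending to automorphisms, so that $\fsem{\structA,S}$ coincides with $\supssem{C,S}$. Your version merely makes explicit the transport along a chosen embedding $u_0$, which the paper leaves implicit by treating $\structA$ as a substructure of $\U$.
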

\begin{proof}
Easy from Proposition~\ref{prop:supsetrepr}. Indeed, compare Definitions~\ref{def:repr} and~\ref{def:supgrprepr} and notice that $\autgrp{\structA}=(\GU)|_C$, where $C$ is the carrier of $\structA$, as $\U$ is homogenous. Moreover, embeddings of $\structA$ into $\U$ are exactly those injective functions from $C$ to $\DU$ that extend to automorphisms of $\U$. As a result, $\fsem{\structA,S}=\supssem{C,S}$.
\end{proof}

Equivariant functions get a similar characterization:

\begin{prop}\label{prop:funrepr}
Let $X=\fsem{\structA,S}$ and $Y=\fsem{\structB,T}$ be single-orbit nominal sets. Equivariant functions from $X$ to $Y$ are in bijective correspondence with those embeddings $u:\structB\to\structA$ for which $uS\subseteq Tu$, quotiented by $\equiv_T$. 
\end{prop}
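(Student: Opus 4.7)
The plan is to derive this proposition directly from Proposition~\ref{prop:supfunrepr} by applying the identification already observed in the proof of Proposition~\ref{prop:reprFra}. Writing $C$ and $D$ for the carriers of $\structA$ and $\structB$ respectively, homogeneity of $\U$ gives $\autgrp(\structA) = (\GU)|_C$ and $\autgrp(\structB) = (\GU)|_D$, and both carriers are fungible (since $(\DU,\GU)$ is well-behaved in the \Fra setting we care about, but for this proposition we only need the identification $\fsem{\structA,S} = \supssem{C,S}$ and $\fsem{\structB,T} = \supssem{D,T}$, which does not require fungibility).

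With these identifications in place, Proposition~\ref{prop:supfunrepr} characterises equivariant functions $X \to Y$ as $\equiv_T$-equivalence classes of injective functions $u : D \to C$ that extend to permutations from $\GU$, subject to $uS\subseteq Tu$. So the only remaining step is to show that, given finite substructures $\structA,\structB$ of $\U$ with carriers $C,D$, an injection $u : D \to C$ extends to an automorphism of $\U$ if and only if $u$ is an embedding $\structB \to \structA$.

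This equivalence is exactly where \Fra homogeneity enters. In one direction, any automorphism of $\U$ preserves and reflects all relations of the signature, so its restriction $u : D \to C$ does too, making $u$ an embedding of $\structB$ into $\structA$. In the other direction, suppose $u : \structB \to \structA$ is an embedding. Composing with the inclusion $\structA \hookrightarrow \U$ gives an isomorphism between two finite substructures of $\U$ (namely $\structB$ and its image), and homogeneity of the \Fra limit says that this isomorphism extends to an automorphism of $\U$. Hence $u$ lifts to an element of $\GU$, as required.

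The one subtlety worth flagging is the side condition $uS \subseteq Tu$: it is expressed unchanged on both sides of the translation, because once $u$ is viewed as an embedding $\structB\to\structA$, $S$ acts on $C$ by automorphisms of $\structA$ and $T$ on $D$ by automorphisms of $\structB$, exactly matching the conditions of Proposition~\ref{prop:supfunrepr}. Similarly, the quotient by $\equiv_T$ carries over verbatim. No genuinely new work is required; the content of the proposition is essentially that \Fra homogeneity turns ``injections extending to permutations in $\GU$'' into the purely combinatorial notion of ``embeddings between finite structures in $\KK$.''
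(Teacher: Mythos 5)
Your proof is correct and follows the paper's route exactly: the paper's own proof of this proposition is a one-line reduction to Proposition~\ref{prop:supfunrepr} via the identification $\fsem{\structA,S}=\supssem{C,S}$ established in the proof of Proposition~\ref{prop:reprFra}, and you have simply spelled out the key translation (injections extending to elements of $\GU$ correspond to embeddings of finite structures, by homogeneity of $\U$) that the paper leaves implicit. No discrepancy.
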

\begin{proof}
Easy from Proposition~\ref{prop:supfunrepr}.
\end{proof}

As before, this induces an equivalence of categories:

\begin{thm}\label{thm:reprcatequiv}
In a well-behaved \Fra symmetry, the category $\GNomI$ is equivalent to a category 
with:
\begin{itemize}
\item as objects, pairs $(\structA,S)$ where $\structA\in\KK$ and $S\leq\autgrp(\structA)$,
\item as morphisms from $(\structA,S)$ to $(\structB,T)$, those embeddings $u:\structB\to\structA$ for which $uS\subseteq Tu$, quotiented by $\equiv_T$.
\end{itemize}
\end{thm}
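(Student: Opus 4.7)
The plan is to derive Theorem~\ref{thm:reprcatequiv} by specialising Theorem~\ref{thm:suprepr} to the \Fra setting, using homogeneity of the \Fra limit $\U$ as the bridge between support representations and structure representations. Concretely, I would show that the proposed category of structure representations is itself equivalent to the category of support representations from Theorem~\ref{thm:suprepr}, which then composes with the equivalence already established there.

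The object-level identification proceeds as follows. Given a finite structure $\structA \in \KK$, fix any embedding of $\structA$ into $\U$ and let $C \subseteq \DU$ denote the carrier of its image. Then $C$ is finite, and fungibility of $C$ is immediate from the well-behavedness assumption (Definition~\ref{defi:well-behaved}). The key identification is $\autgrp(\structA) = (\GU)|_C$: for one inclusion, every automorphism of the finite substructure $\structA \subseteq \U$ extends, by homogeneity, to an automorphism of $\U$, i.e., an element of $\GU$; for the other, any $\pi \in \GU$ with $C\cdot\pi = C$ restricts to an automorphism of $\structA$ because $\pi$ preserves all relations of $\U$. Thus every pair $(\structA, S)$ with $S \leq \autgrp(\structA)$ is exactly a support representation $(C, S)$ in the sense of Definition~\ref{def:supgrprepr}. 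Conversely, given a support representation $(C, S)$ with $C$ finite and fungible, the substructure $\U|_C$ is some $\structA \in \KK$ and, by the identification above, $S \leq (\GU)|_C = \autgrp(\structA)$.

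For the morphism-level identification, I would argue that embeddings $u : \structB \to \structA$ of finite substructures of $\U$ correspond precisely to those injective functions $u : D \to C$ (between their carriers) that extend to permutations from $\GU$. Indeed, any embedding between finite substructures of $\U$ is an isomorphism onto its image, so by homogeneity extends to some $\pi \in \GU$; conversely, any $\pi \in \GU$ restricted to $D$ with values in $C$ preserves and reflects all relations of $\U$, hence gives an embedding $\structB \to \structA$. Under this bijection, the condition $uS \subseteq Tu$ and the quotient by $\equiv_T$ match exactly the corresponding data in Theorem~\ref{thm:suprepr}. Semantics agree via $\fsem{\structA, S} = \supssem{C, S}$, which is precisely Proposition~\ref{prop:reprFra}, and the induced equivariant functions coincide by Proposition~\ref{prop:funrepr}.

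Composing the resulting equivalence with the equivalence of Theorem~\ref{thm:suprepr} yields the desired equivalence with $\GNomI$. The main subtlety (and the only step requiring care) is the use of homogeneity in both directions of the morphism identification: without it, embeddings of finite substructures of $\U$ would not in general extend to automorphisms of $\U$, and the passage between ``$\KK$-embedding'' and ``injection extending to an element of $\GU$'' would break. Well-behavedness is used exactly to ensure fungibility (so that every object of our category is a legitimate support representation) and least supports (so that Theorem~\ref{thm:suprepr} applies).
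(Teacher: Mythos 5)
Your proposal is correct and follows essentially the same route as the paper: the paper also derives Theorem~\ref{thm:reprcatequiv} from Theorem~\ref{thm:suprepr} via Propositions~\ref{prop:reprFra} and~\ref{prop:funrepr}, using homogeneity of $\U$ to identify $\autgrp(\structA)$ with $(\GU)|_C$ and embeddings $\structA\to\U$ with injections $C\to\DU$ extending to automorphisms, so that $\fsem{\structA,S}=\supssem{C,S}$. You spell out the object- and morphism-level identifications in somewhat more detail than the paper's brief proofs, but the underlying argument is identical.
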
\smallskip

\noindent For $\KK$ the class of all finite sets, this gives rise to the category of ``named sets with symmetries'' studied in the theory of history-dependent automata. In this special case, Theorem~\ref{thm:reprcatequiv} was proved in~\cite{gadducci-etal,staton-thesis}.

\subsection{Representation of Cartesian products}

Nominal automata as studied in Section~\ref{sec:nominalgautomata} are algebraic structures that involve equivariant functions, or relations, between nominal sets that are Cartesian products of other sets. To present such models by finite means, it is therefore necessary to calculate Cartesian products of nominal sets in terms of their representations. We shall now do this for the case of well-behaved \Fra symmetries.%
\footnote{ In the case of the equality symmetry, a somewhat less concrete representation, in terms of minimal spans of representation morphisms, was provided in~\cite{ciancia-thesis}.}

First, consider a Cartesian product of the form
\[
	\fsem{\structA,1}\times\fsem{\structB,1}
\]
for some finite structures $\structA,\structB\in\KK$, where both representation symmetries are trivial groups. Recall that $\fsem{\structA,1}$ is the set of embeddings $f:\structA\to\U$, with $\GU$-action defined by $f\cdot\pi=\pi\circ f$, and similarly for $\fsem{\structB,1}$.

For any pair of embeddings $f:\structA\to\U$, $g:\structB\to\U$, consider a relation $\rho_{(f,g)}$ between the carriers $A,B$ of $\structA,\structB$ defined by:
\begin{equation}\label{eq:vnawpa}
	\rho_{(f,g)}(a,b) \quad\iff\quad f(a)=g(b).
\end{equation}
Since both $f$ and $g$ are embeddings, $\rho_{(f,g)}$ is a partial isomorphism between $\structA$ and $\structB$. This isomorphism is invariant under the action of $\GU$ on pairs of embeddings:
\begin{equation}\label{eq:oibijvnwv}
	\rho_{(f,g)\cdot\pi} = \rho_{(f,g)}
\end{equation}
for all $\pi\in\GU$. Indeed, calculate:
\[
	\rho_{(f,g)\cdot\pi}(a,b) \iff (f\cdot\pi)a=(g\cdot\pi)a \iff \pi(f(a))=\pi(g(b)) \iff f(a)=g(b) \iff \rho_{(f,g)}(a,b).
\]

For a partial bijection $\rho$ between $A$ and $B$, the {\em amalgamated sum} $A\cup_{\rho}B$ is the disjoint union of $A$ and $B$ quotiented by $\rho$, together with canonical injections 
\begin{equation}\label{eq:voane}
\xymatrix{A\ar[r]^-{i} & A\cup_{\rho}B & B\ar[l]_-{j}.}
\end{equation}
To save space, $A\cup_{\rho_{(f,g)}}B$ will be denoted by $A\cup_{(f,g)}B$.

Define a function $\gamma_{(f,g)}:A\cup_{(f,g)}B\to \D$ by cases:
\begin{equation}\label{eq:vqonqwrf}
	\gamma_{(f,g)}(i(a)) = f(a) \qquad\qquad \gamma_{(f,g)}(j(b))=g(b).
\end{equation}
This is well defined by definition of $A\cup_{(f,g)}B$. Moreover, obviously
\begin{equation}\label{eq:wairenjb}
	\gamma_{(f,g)\cdot\pi} = \pi\circ \gamma_{(f,g)}.
\end{equation}
Let $\structC_{(f,g)}$ be the unique relational structure on the carrier $A\cup_{(f,g)}B$ that makes $\gamma_{(f,g)}$ an embedding into $\U$. By universality of $\U$, we have $\structC_{(f,g)}\in\KK$.
From~\eqref{eq:wairenjb} it is clear that
\begin{equation}\label{eq:vanojwvw}
	\structC_{(f,g)\cdot\pi} = \structC_{(f,g)}
\end{equation}
for any $\pi\in\GU$. Also, it is easy to see that $i:\structA\to\structC_{(f,g)}$ and $j:\structB\to\structC_{(f,g)}$ are embeddings.

In sum, embeddings $f:\structA\to\U$ and $g:\structB\to\U$ determine:
\begin{itemize}
\item a partial isomorphism $\rho_{(f,g)}$ between $\structA$ and $\structB$,
\item a relational structure $\structC_{(f,g)}$ on $A\cup_{(f,g)}B$,
\item an embedding $\gamma_{(f,g)}:\structC_{(f,g)}\to\U$;
\end{itemize}
moreover, by~\eqref{eq:oibijvnwv} and~\eqref{eq:vanojwvw}, $\rho_{(f,g)}$ and $\structC_{(f,g)}$ are invariant under the $\GU$-action on $(f,g)$. 
As a result, we obtain an equivariant function to a disjoint union:
\begin{equation}\label{eq:aowejnva}
	\fsem{\structA,1}\times\fsem{\structB,1} \longrightarrow \coprod_{\rho,\structC}\fsem{\structC,1}
\end{equation}
where $\rho$ ranges over partial isomorphisms between $\structA$ and $\structB$, and $\structC\in\KK$ over those relational structures on $A\cup_{\rho}B$ that make the inclusions $i,j$ in~\eqref{eq:voane} embeddings. 
In other words, $(\rho, \structC)$ ranges over the indexing set:
\begin{equation} \label{eq:indexingset}
I_{\structA, \structB} \ = \ \set{ (\rho_{(f,g)}, \structC_{(f,g)}) : f:\structA\to\U \text{ and } g:\structB\to\U}.
\end{equation}
It is not difficult to define an inverse to~\eqref{eq:aowejnva}: given $\rho$ and $\structC$, simply precompose embeddings $\gamma:\structC\to\U$ with $i:\structA\to\structC$ and $j:\structB\to\structC$. Routine calculation shows that both constructions are mutually inverse, therefore~\eqref{eq:aowejnva} is an isomorphism of nominal sets.
	
If the relational signature of $\KK$ is finite and the class $\KK$ has decidable membership, then the collection of all possible $\rho$ and $\structC$ is finite and can be effectively enumerated. As a result, we have obtained a way to compute representations of Cartesian products of the form $\fsem{\structA,1}\times\fsem{\structB,1}$.

We now adapt the above reasoning to the general case
\[
  \fsem{\structA,S}\times\fsem{\structB,T}
\]
for arbitrary $S\leq\autgrp(\structA)$ and $T\leq\autgrp(\structB)$. 

First, consider an action of the product group $S^{op}\times T$ on the set of partial isomorphisms between $\structA$ and $\structB$ defined by:
\[
	(\rho\cdot(\sigma,\tau))(a,b) \quad \iff \quad \rho(\sigma(a),\tau^{-1}(b));
\]
equivalently, with $\rho$ considered as a partial isomorphism {\em from} $\structA$ {\em to} $\structB$, this can be written as
\[
	\rho\cdot(\sigma,\tau) = \tau\circ\rho\circ \sigma.
\]
For any $\sigma\in S$ and $\tau\in T$, there is a bijection
\[
	m_{\sigma,\tau} : A\cup_{\rho}B \to A\cup_{\rho\cdot(\sigma,\tau)}B
\]
given by:
\begin{equation}\label{eq:dfbtnrn}
	m_{\sigma,\tau}(i(a)) = i(\sigma^{-1}(a)) \qquad \qquad 
	m_{\sigma,\tau}(j(b)) = j(\tau(b)).
\end{equation}
This is well defined; indeed, calculate:
\[
	i(a)=j(b) \iff \rho(a,b) \iff (\rho\cdot(\sigma,\tau))(\sigma^{-1}(a),\tau(b)) \iff i(\sigma^{-1}(a))=j(\tau(b)).
\]
For any relational structure $\structC$ on $A\cup_{\rho}B$, let $\structC\cdot(\sigma,\tau)$ be the unique structure on $ A\cup_{\rho\cdot(\sigma,\tau)}B$ that makes $m_{\sigma,\tau}$ into an isomorphism. 

It is easy to check that we thus obtain a group action of $S^{op}\times T$ on the indexing set~\eqref{eq:indexingset} of the disjoint union in~\eqref{eq:aowejnva}.
Pick a family of representatives $(\underline{\rho},\underline{\structC})$ for each orbit of this action.
For any representative, where $\underline{\structC}\in\KK$ is a structure on $A\cup_{\underline{\rho}}B$, let $S\Cup T\leq\autgrp(\underline{\structC})$ be the group of all those automorphisms of $\underline{\structC}$ that, roughly speaking, restrict to $S$ on $\structA$ and to $T$ on $\structB$.
Formally, 
\begin{equation}\label{eq:nwirbn}
S\Cup T \ \ = \ \ i^{-1} S i \ \cap  \ j^{-1} T j.
\end{equation}

The following theorem is a generalization of~\eqref{eq:aowejnva}.

\begin{thm}\label{thm:cartesian}
There is an equivariant isomorphism
\[
	\fsem{\structA,S}\times\fsem{\structB,T} \quad \cong \quad \coprod_{\underline\rho,\underline\structC}\fsem{\underline\structC,S\Cup T}
\]
where $\underline\rho,\underline\structC$ in the disjoint union range over the chosen representatives as above.
\end{thm}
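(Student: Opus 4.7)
The plan is to derive the result from the symmetry-free case~\eqref{eq:aowejnva} by passing to the quotient under the $(S^{op}\times T)$-action introduced in the discussion preceding the theorem. Concretely, by Definition~\ref{def:repr}, the set $\fsem{\structA,S}\times\fsem{\structB,T}$ is obtained from $\fsem{\structA,1}\times\fsem{\structB,1}$ by quotienting by the $(S^{op}\times T)$-action that acts on pairs of embeddings $(f,g)$ by precomposition on each factor (with signs arranged to match the action on partial isomorphisms given before the theorem). Since the isomorphism~\eqref{eq:aowejnva} is equivariant for the ambient $\GU$-action, it suffices to transport this $(S^{op}\times T)$-action across~\eqref{eq:aowejnva} and compute the resulting quotient of the disjoint union on the right.

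The first step is the transport computation. A direct calculation from~\eqref{eq:vnawpa} and~\eqref{eq:vqonqwrf} shows that applying $(\sigma,\tau)$ to $(f,g)$ replaces $\rho_{(f,g)}$ by $\rho_{(f,g)}\cdot(\sigma,\tau)$ and $\structC_{(f,g)}$ by $\structC_{(f,g)}\cdot(\sigma,\tau)$ via the reindexing bijection $m_{\sigma,\tau}$ of~\eqref{eq:dfbtnrn}; correspondingly, the embedding $\gamma_{(f,g)}$ is transformed by composition with $m_{\sigma,\tau}$. Thus, under~\eqref{eq:aowejnva}, the $(S^{op}\times T)$-action permutes the summands of $\coprod_{(\rho,\structC)\in I_{\structA,\structB}}\fsem{\structC,1}$ according to its action on the indexing set $I_{\structA,\structB}$, and within each orbit it acts on the given summand by precomposition with $m_{\sigma,\tau}$.

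The second step is to compute the quotient. Fix a family of orbit representatives $(\underline\rho,\underline\structC)$ as in the statement. The stabilizer of $(\underline\rho,\underline\structC)$ consists of those $(\sigma,\tau)$ for which $m_{\sigma,\tau}$ is an automorphism of $\underline\structC$; unwinding~\eqref{eq:dfbtnrn} identifies these automorphisms with precisely those elements of $\autgrp(\underline\structC)$ whose restrictions along $i$ and $j$ lie in $S$ and $T$ respectively---that is, with the group $S\Cup T$ of~\eqref{eq:nwirbn}. The quotient of the summand $\fsem{\underline\structC,1}$ by the resulting precomposition action of $S\Cup T$ is, by Definition~\ref{def:repr}, exactly $\fsem{\underline\structC,S\Cup T}$. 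Assembling one summand per orbit yields the claimed isomorphism, and $\GU$-equivariance is automatic from that of~\eqref{eq:aowejnva}.

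The principal difficulty is the bookkeeping of composition conventions: one has to verify that the action of $S$ and $T$ on embeddings coming from $\equiv_S$ and $\equiv_T$ (Definition~\ref{def:supsetrepr-def}) induces, via~\eqref{eq:vnawpa}, exactly the action on partial isomorphisms formalized before the theorem, including the asymmetric appearance of $\tau^{-1}$ which justifies the $S^{op}$ factor. Once this is settled, the identification of each stabilizer with $S\Cup T$ and of the quotient summand with $\fsem{\underline\structC,S\Cup T}$ are routine unwrappings of the definitions.
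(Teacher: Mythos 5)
Your proposal is correct, and it reaches the same destination through the same computational ingredients as the paper, but it packages them differently. The paper's proof constructs the isomorphism by hand: given $\absclass{S}{f}$ and $\absclass{T}{g}$ it picks $(\sigma,\tau)$ relating $(\rho_{(f,g)},\structC_{(f,g)})$ to the chosen representative, sets $\gamma=\gamma_{(f,g)}\circ m_{(\sigma,\tau)}$, and then verifies by explicit calculation that $\absclass{S\Cup T}{\gamma}$ is independent of the choice of $(\sigma,\tau)$ and of the representatives $f,g$, before exhibiting the inverse by precomposition with $i$ and $j$. You instead observe that $\fsem{\structA,S}\times\fsem{\structB,T}$ is the quotient of $\fsem{\structA,1}\times\fsem{\structB,1}$ by the $(S^{op}\times T)$-action, transport that action across the already-established isomorphism~\eqref{eq:aowejnva}, and invoke the general principle that the quotient of a disjoint union by a group permuting the summands is the disjoint union, over orbit representatives, of the summands quotiented by their stabilizers. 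This buys you something real: the paper's ad hoc well-definedness checks for the representative-dependent formula become instances of a standard fact about quotients, and the appearance of $S\Cup T$ is explained conceptually as a stabilizer rather than verified post hoc. The cost is that the two computational lemmas you rely on --- that the action transforms $(\rho_{(f,g)},\structC_{(f,g)},\gamma_{(f,g)})$ exactly via the reindexing maps $m_{\sigma,\tau}$ of~\eqref{eq:dfbtnrn}, and that the stabilizer of $(\underline\rho,\underline\structC)$ is carried onto $i^{-1}Si\cap j^{-1}Tj$ --- are precisely the identities the paper proves (e.g., $\gamma_{(f\circ\sigma,g\circ\tau)}=\gamma_{(f,g)}\circ m_{(\sigma^{-1},\tau)}$, with its inverse on one side), so the convention bookkeeping you flag as the principal difficulty is not avoided, only relocated; you should carry out that calculation explicitly, since the asymmetric inverses are exactly where a sign error would silently break the argument.
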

\begin{proof}
For a function from left to right, take any $\absclass{S}{f}\in\fsem{\structA,S}$ and $\absclass{T}{g}\in\fsem{\structA,T}$. The embeddings $f:\structA\to\U$ and $g:\structB\to\U$ determine a partial isomorphism $\rho_{(f,g)}$, a relational structure $\structC_{(f,g)}$ and an embedding $\gamma_{(f,g)}:\structC_{(f,g)}\to\U$ as before.

Let $(\underline\rho,\underline\structC)$ be the chosen representative of the $(S^{op}\times T)$-orbit of $(\rho_{(f,g)},\structC_{(f,g)})$. In particular, there exists some $\sigma\in S$ and $\tau\in T$ such that
\begin{equation}\label{eq:nvbrae}
	\rho_{(f,g)} =\underline\rho\cdot(\sigma,\tau) \qquad\qquad \structC_{(f,g)} = \underline\structC\cdot(\sigma,\tau).
\end{equation}

Define an embedding $\gamma:\underline\structC\to\U$ by:
\begin{equation}\label{eq:mymftg}
	\gamma = \gamma_{(f,g)}\circ m_{(\sigma,\tau)}.
\end{equation}

There may be many possible choices of $\sigma,\tau$ that
satisfy~\eqref{eq:nvbrae}, and they may yield different embeddings $\gamma$. However, all these embeddings are $\equiv_{S\Cup T}$-equivalent. To see this, assume
\begin{equation*}
	\underline\rho\cdot(\sigma,\tau) =\underline\rho\cdot(\sigma',\tau') \qquad\qquad 
	\underline\structC\cdot(\sigma,\tau) = \underline\structC\cdot(\sigma',\tau');
\end{equation*}
then it is easy to check 
\[
	m_{(\sigma,\tau)} = m_{(\sigma',\tau')}\circ m_{(\sigma'^{-1}\sigma,\tau\tau'^{-1})},
\]
and $m_{(\sigma'^{-1}\sigma,\tau\tau'^{-1})}$ is an automorphism of $\underline\structC$ that restricts to $\sigma'^{-1}\sigma\in S$ on $\structA$ and to $\tau\tau'^{-1}\in T$ on $\structB$. As a result,
\[
	m_{(\sigma,\tau)}\equiv_{S\Cup T}m_{(\sigma',\tau')}.
\]
Moreover, for $\gamma$ in~\eqref{eq:mymftg}, $[\gamma]_{S\Cup T}$  does not depend on the choice of representatives $f\in[f]_S$ and $g\in [g]_T$. To see this, notice that for any $\sigma\in S$ and $\tau\in T$:
\[
	\gamma_{(f\circ\sigma,g\circ\tau)} = \gamma_{(f,g)}\circ m_{(\sigma^{-1},\tau)}
\]
by~\eqref{eq:vqonqwrf} and~\eqref{eq:dfbtnrn}.

As a result, we obtain a function that maps the pair $([f]_S,[g]_T)$ to $\underline{\rho}$, $\underline{\structC}$ and $[\gamma]_{S\Cup T}$. Equivariance of this function is checked routinely. As before, its inverse is obtained by precomposing embeddings $h:\underline\structC\to\U$ with injections $i:\structA\to\structC$ and $j:\structB\to\structB$. Both constructions are well-defined and mutually inverse up to $\equiv_S$, $\equiv_T$ and $\equiv_{S\Cup T}$.
\end{proof}

\begin{exa}\label{ex:voenvw}
In the equality symmetry, where $\KK$ is the class of finite sets, there are no nontrivial relational structures, i.e., every structure is simply its carrier. Let
\[
	\structA = \{x,y\} \qquad\qquad \structB = \{z\}.
\]
By Definition~\ref{def:repr}, there is
\[
	\fsem{\structA,1} \cong \D^{(2)} \qquad\qquad \fsem{\structB,1} \cong \D
\]
(see Example~\ref{ex:gsets}).
There are three partial isomorphisms between $\structA$ and $\structB$:
\[
	\rho_1 = \{(x,z)\} \qquad\qquad \rho_2 = \{(y,z)\} \qquad\qquad \rho_3 = \emptyset,
\]
with the corresponding amalgamated sums $A\cup_{\rho_i}B$ having $2$, $2$ and $3$ elements, respectively. By~\eqref{eq:aowejnva}, there is an isomorphism
\[
	\D^{(2)}\times \D \cong \D^{(2)} + \D^{(2)} + \D^{(3)}
\]
(here and in the following, $+$ denotes disjoint union).
In elementary terms:
\[
	\{(c,d) \mid c\neq d\} \times \D = \{(c,d,e)\mid c\neq d=e\} + \{(c,d,e)\mid e=c\neq d\} + \{(c,d,e)\mid c\neq d\neq e\}.
\]
In general, the product $\D^{(n)}\times \D$ has $n+1$ orbits.

Now consider a local symmetry $S = \autgrp(\structA) = \{1,(x\ y)\}$. By Definition~\ref{def:repr}, there is
\[
	\fsem{\structA,S} \cong \choose{\D}{2}
\]  
(see Example~\ref{ex:gsets}). Partial isomorphisms $\rho_1$ and $\rho_2$ form an orbit under the action of $S^{op}\times 1$, therefore by Theorem~\ref{thm:cartesian}, the product of $\fsem{A,S}$ and $\fsem{B,1}$ has only two orbits:
\[
	\choose{\D}{2} \times \D \cong \fsem{\{x,y\},1} + \fsem{\{x,y,z\},S\Cup 1} \cong \D^{(2)} + \big\{(\{x,y\},z)\mid z\not\in\{ x,y\} \big\};
\]
here $S\Cup 1 = \{1,(x\ y)\}$.
\end{exa}

\begin{exa}\label{ex:voenvw2}
In the total order symmetry, where $\KK$ is the class of finite total orders, there are no nontrivial local symmetries $S$ in representations, since the only automorphism of a finite total order is the identity. Let
\[
	\structA = \{x<y\} \qquad\qquad \structB = \{z\}.
\]
By Definition~\ref{def:repr}, there is
\[
	\fsem{\structA,1} \cong \D^{(<2)} \qquad\qquad \fsem{\structB,1} \cong \D
\]
(see Example~\ref{ex:gsets}).
As in Example~\ref{ex:voenvw}, there are three partial isomorphisms between $\structA$ and $\structB$:
\[
	\rho_1 = \{(x,z)\} \qquad\qquad \rho_2 = \{(y,z)\} \qquad\qquad \rho_3 = \emptyset.
\]
However, in this case there are three different total orders on $A\cup_{\rho_3}B=\{x,y,z\}$ that embed $\structA$ and $\structB$:
\[
	\structC = \{z<x<y\}, \qquad\qquad \structC' = \{x<z<y\}, \qquad\qquad \structC'' = \{x<y<z\},
\]
each giving rise to a different orbit of the Cartesian product. As a result, there are five orbits:
\[
	\D^{(<2)}\times\D = \D^{(<2)} + \D^{(<2)} + \D^{(<3)} + \D^{(<3)} + \D^{(<3)}.
\]
In general, the product $\D^{(<n)}\times \D$ has $2n+1$ orbits.
\end{exa}

\begin{exa}\label{ex:voenvw3}
Consider the graph symmetry, where $\KK$ be the class of all finite undirected graphs (see Example~\ref{ex:orenve}).
By analogy to Example~\ref{ex:voenvw}, let
\[\begin{array}{ccc}
\xymatrix{x & y} &\qquad\qquad& \xymatrix{z} 
\\ \phantom{a} \\
\structA && \structB
\end{array}\]
be discrete graphs. There are three partial isomorphisms between $\structA$ and $\structB$:
\[
	\rho_1 = \{(x,z)\} \qquad\qquad \rho_2 = \{(y,z)\} \qquad\qquad \rho_3 = \emptyset,
\]
with the corresponding amalgamated sums $A\cup_{\rho_i}B$ having $2$, $2$ and $3$ elements, respectively. The sums corresponding to $\rho_1$ and $\rho_2$ have unique (discrete, isomorphic to $\structA$) graphs on them that embed $\structA$. The sum $A\cup_{\rho_3}B=\{x,y,z\}$ allows four graphs:
\[\begin{array}{ccccccc}
\xymatrix{z & y \\ x} &\qquad& \xymatrix{z\ar@{-}[d] & y \\ x} &\qquad& \xymatrix{z\ar@{-}[r] & y \\ x} &\qquad& \xymatrix{z\ar@{-}[d]\ar@{-}[r] & y \\ x}
\\ \phantom{a} \\
\structC && \structC' && \structC'' && \structC'''
\end{array}\]
As a result, the indexing set in~\eqref{eq:indexingset} has six elements altogether:
\begin{equation}\label{eq:voefve}
I_{\structA, \structB} = \{(\rho_1,\structA),(\rho_2,\structA),(\rho_3,\structC),(\rho_3,\structC'),(\rho_3,\structC''),(\rho_3,\structC''')\}
\end{equation}
hence the product $\fsem{\structA,1}\times\fsem{\structB,1}$ has six orbits:
\[
\fsem{\structA,1}\times\fsem{\structB,1} = \fsem{\structA,1} + \fsem{\structA,1} + \fsem{\structC,1} 
+ \fsem{\structC',1} + \fsem{\structC'',1} + \fsem{\structC''',1}.
\]

Now consider a local symmetry $S = \autgrp(\structA) = \{1,(x\ y)\}$. Under the action of the group $S^{op}\times 1$ on the three-element set of partial isomorphisms between $\structA$ and $\structB$, $\rho_1$ and $\rho_2$ fall in one orbit, and $\rho_3$ forms another by itself. As described above, this further determines an action of $S^{op}\times 1$ on the indexing set~\eqref{eq:voefve}. Here, $(\rho_1,\structA)$ and $(\rho_2,\structA)$ fall in one orbit, $(\rho_3,\structC')$ and $(\rho_3,\structC'')$ in another, and the remaining two elements form two singleton orbits. As the indexing set of representatives from Theorem~\ref{thm:cartesian} we may take:
\[
\{(\rho_1,\structA),(\rho_3,\structC),(\rho_3,\structC'),(\rho_3,\structC''')\}
\]
Easy calculation shows that the amalgamated groups $S\Cup 1$ from~\eqref{eq:nwirbn} are nontrivial on $\structC$ and $\structC'''$: in both cases, $S\Cup 1 = \{1,(x\ y)\}$.
As a result, by Theorem~\ref{thm:cartesian}, the product of $\fsem{\structA,S}$ and $\fsem{\structB,1}$ has the following representation:
\[
\fsem{\structA,S}\times\fsem{\structB,1} = \fsem{\structA,1} + \fsem{\structC,S\Cup 1} 
+ \fsem{\structC',1} + \fsem{\structC''',S\Cup 1}.
\]	
\end{exa}

\section{\Fra automata} \label{sec:Fraisseautomata}

A deterministic orbit-finite nominal $G$-automaton,   
understood as in Section~\ref{sec:nominalgautomata},
is a simple combination of a few orbit-finite nominal $G$-sets and equivariant functions between them, involving a simple Cartesian product. It is therefore natural that an effective representation of nominal sets, equivariant functions and Cartesian products extends to a similar representation of automata for \Fra symmetries.

The resulting notion is rather complex, and for mathematical reasoning about nominal automata, the more abstract definitions introduced in Sections~\ref{sec:gautomata} and~\ref{sec:nominalgautomata} seem more suitable. Even when finite representations are important, for example for the implementation of algorithms that manipulate automata, it seems more productive to formulate those algorithms in abstract terms, and then have a general-purpose programming language that can translate them to effective procedures, constructing finite representations of complex data structures (e.g., automata) implicitly. This approach was used in~\cite{BBKL12}, where nontrivial algorithms on nominal automata were formalized and implemented without spelling out an explicit finite representation of those automata.

Nevertheless, we wish to sketch a finite representation of nominal $G$-automata for two reasons. First, the representation resembles and generalizes Kaminski-Francez finite memory automata, which shows that the equivalence results of Section~\ref{sec:register-models} are not accidental; on the contrary, that the basic ingredients of finite memory automata, such as registers, appear naturally from our representation results applied to the abstract notion of a nominal automaton. Secondly, the concrete definition of \Fra{} automaton below, although complex, does not rely on notions such as support or orbit, and so they may conceivably appeal to those who wish to study automata on infinite alphabets without learning nominal sets. The concrete notion is also more in the spirit of~\cite{FK94} and~\cite{P99}, making the relation to some previous work more apparent.

Fix for the rest of this section a  class $\KK$ of structures that induces a well-behaved \Fra symmetry $(\DU,\GU)$. 
Our goal is to apply Theorems~\ref{thm:reprcatequiv} and~\ref{thm:cartesian} to 
develop a syntax (understood as a finite representation) for \GUDFA. 
At the risk of repeating some material from Section~\ref{sec:fraisse}, we unravel below the definition of 
a deterministic orbit finite nominal $\GU$-automaton. 

 For the sake of presentation, we only study deterministic automata, and restrict to the alphabet $\DU$.
 The general case, when the alphabet is an arbitrary orbit finite nominal $\GU$-set such as $(\DU)^2$ or $\DU \uplus \DU$, 
 may be dealt with in essentially the same way.

 The basic intuition is that
the class $\KK$ describes all possible ``memory shapes" of an automaton.

A  \Fra \emph{$\KK$-automaton} has a finite set $Q$ of control states. Each  state $q \in Q$ comes with a structure representation $(\structA_q, S_q)$.
The set of configurations in state $q$ is the nominal set $\sem{\structA_q, S_q}$.
We shall call elements of $\structA_q$ {\em registers} of state $q$, and the group $S_q$ is the {\em register symmetry}.
The  set of all configurations of an automaton is a disjoint union:
\begin{equation}\label{eq:configurations-U}
X =  \coprod_{q \in Q}	\sem{\structA_q, S_q} .  
\end{equation}
A configuration consists of a state $q \in Q$, together with a valuation $\structA_q \to \U$ 
that maps registers to data values, and preserves and reflects the structure of $\structA_q$,
with the proviso that valuations are considered equal if they differ only by a register  symmetry. 

The automaton has a set of accepting states, and an initial state. 
The structure of registers $\structA_q$ in the initial state must  be empty.

The last ingredient of the \fra automaton is a \emph{symbolic transition function} $s = \set{s_q}_{q \in Q}$
that is  used to represent an equivariant  transition function
\begin{equation}\label{eq:Fratransfunc}
\delta_s : X \times \DU \to X .
\end{equation}
The symbolic transition function is a representation of $\delta_s$ along the lines of Theorem~\ref{thm:cartesian} and Proposition~\ref{prop:funrepr}.
We define symbolic transition functions in terms of \emph{annotations}, which are simply an elementary view on the orbits of the product $X \times \DU$ as explained in Theorem~\ref{thm:cartesian}. An \emph{annotation} of 
a representation $(\structA, S)$ is a structure of one of two kinds:
either a conservative extension $\structA^* \in \KK$ of $\structA$ by one element, denoted $*$;
or the structure $\structA$ itself with additionally one distinguished element, that we denote by $*$ as well.
In either case, we identify two annotations if they are related by an automorphism $\sigma\in S$ such that $\sigma(*)=*$.
An annotation comes thus with its local symmetry, that is isomorphic either to the group $S$ itself, or to its subgroup
determined by the requirement $\sigma(*) = *$.
There are finitely many possible annotations for every $\structA$, as the relational signature is assumed to be finite.

Intuitively speaking, annotations describe the ways in which the newly read input data value ($*$) may compare to the data values in the  registers. In other words, annotations formalize the tests an automaton on the input letters.

Note that an annotation of a structure $\structA$ uniquely determines:
\begin{itemize}
\item a partial isomorphism $\rho$ between $\structA$ and a one-element structure $*$ ($\rho$ is empty if the annotation extends $\structA$ with $*$, otherwise it identifies $*$ with the distinguished element of $\structA$),
\item a relational structure on the amalgamated sum $A\cup_{\rho}\{*\}$.
\end{itemize}
In other words, by Theorem~\ref{thm:cartesian}, annotations of $\structA_q$ correspond to orbits of the Cartesian product $\fsem{\structA_q,S_q}\times\DU$. (See also Examples~\ref{ex:voenvw}-\ref{ex:voenvw3}.)

The domain of $s_q$ contains all possible annotations of $(\structA_q, S_q)$. 
For any annotation $\structA^*$, the value $s_q(\structA^*)$ is a state $p\in Q$ together with an embedding
 \begin{equation}\label{eq:symbtransfunc}
s_q(\structA^*) : \structA_p \to \structA^*
\end{equation}
that commutes with the local symmetries as prescribed by Proposition~\ref{prop:funrepr}.

To sum up:

\begin{defi}\label{def:Fraisseautomaton}
A \fra $\KK$-automaton consists of:
\begin{itemize}
	\item a finite set of control states $Q$;
	\item for each state $q \in Q$, a structure representation $(\structA_q, S_q)$ (see Definition~\ref{def:repr});
	\item an initial state $q_I \in Q$ with $\structA_{q_I}$ the empty structure;
	\item a set of accepting states $F \subseteq Q$;
	\item a symbolic transition function $s = \set{s_q}_{q \in Q}$ as above.
\end{itemize}
Elements of $\structA_q$ are called {\em registers} of $q$.
\end{defi}
These ingredients naturally induce a $\GU$-automaton, with a transition function~\eqref{eq:Fratransfunc} defined  as follows.
Suppose that the state in the current  configuration is  $q \in Q$ and the valuation is represented, up to register symmetry, by  $\eta : \structA_q \to \DU$. The automaton reads an  input letter  $d \in \DU$. 
Let $\eta^*$ extend $\eta$ by mapping $*$ to $d$, thus $\eta^*: \structA^* \to \DU$ is an embedding, for some annotation $\structA^* \in \KK$.
Apply $s_q$ to $\structA^*$, yielding $p \in Q$ and a function~\eqref{eq:symbtransfunc}.
The new state is $p$, and the new valuation is obtained by composing $s_q(\structA^*)$ with the extended valuation $\eta^*$, that is $\eta^* \circ s_q(\structA^*): \structA_p \to \DU$. The new valuation is an embedding, as a composition of embeddings,
and its equivalence class depends only on the equivalence class of $\eta$, thanks to the assumption that $s_q$ commutes with local symmetries.

By Theorem~\ref{thm:cartesian} and Proposition~\ref{prop:funrepr} one obtains:

\begin{thm}\label{thm:Fra-char}
For a well-behaved \Fra symmetry induced by a class $\KK$, every reachable orbit finite deterministic 
nominal $\GU$-automaton over the input alphabet $\DU$ is isomorphic to a \fra $\KK$-automaton.
\end{thm}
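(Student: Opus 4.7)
The plan is to translate the given automaton $\Aa=(Q,\DU,q_I,F,\delta)$ through the representation machinery of Section~\ref{sec:fraisse}, reading off each ingredient of Definition~\ref{def:Fraisseautomaton} and then verifying that the $\GU$-automaton induced by the resulting \fra $\KK$-automaton is isomorphic to $\Aa$. Note that by Proposition~\ref{prop:reachablenominal}, since $\Aa$ is reachable and $\DU$ is nominal, $Q$ is automatically nominal, so the representation results apply.

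First I would enumerate the (finitely many) orbits of $Q$ as a finite index set $Q^{\mathrm{ctl}}$, and for each $q\in Q^{\mathrm{ctl}}$ invoke Proposition~\ref{prop:reprFra}(2) to choose a structure representation $(\structA_q,S_q)$ together with an equivariant isomorphism $\phi_q:\sem{\structA_q,S_q}\to O_q$, where $O_q$ denotes the $q$-th orbit. The collection $(\structA_q,S_q)_{q\in Q^{\mathrm{ctl}}}$ supplies the control states and register shapes, and the coproduct $\coprod_q\phi_q$ identifies $Q$ with~\eqref{eq:configurations-U}. The initial state $q_I$ forms a singleton equivariant orbit, so its stabiliser is all of $\GU$ and its least support is empty; hence $\structA_{q_I}$ can be taken to be the empty structure as Definition~\ref{def:Fraisseautomaton} requires. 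The accepting set $F$ is equivariant, i.e.\ a union of orbits, and so corresponds to a subset of $Q^{\mathrm{ctl}}$.

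Next I would define the symbolic transition function one control state at a time. For fixed $q\in Q^{\mathrm{ctl}}$, the equivariant restriction $\delta:O_q\times\DU\to Q$ becomes, through $\phi_q$ and the identification $\DU\cong\sem{\{*\},1}$, an equivariant map $\sem{\structA_q,S_q}\times\sem{\{*\},1}\to Q$. Theorem~\ref{thm:cartesian} decomposes the domain as a disjoint union of orbits $\sem{\underline\structC,S_q\Cup 1}$ indexed precisely by the annotations of $(\structA_q,S_q)$: the two flavours of annotation in Definition~\ref{def:Fraisseautomaton}, one extending $\structA_q$ by a fresh $*$ and one identifying $*$ with an existing element, correspond respectively to an empty partial isomorphism and to a singleton one. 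On each such orbit the map into $Q$ must land in a single orbit $O_p$, and by Proposition~\ref{prop:funrepr} is represented by an embedding $\structA_p\to\underline\structC$ compatible with the local symmetries $S_p$ and $S_q\Cup 1$, modulo $\equiv_{S_q\Cup 1}$; this datum defines $s_q(\structA^*)=(p,\text{embedding})$.

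Finally I would verify that $\coprod_q\phi_q$ is an isomorphism of $\GU$-automata between $\Aa$ and the $\GU$-automaton induced by the \fra $\KK$-automaton just constructed. Preservation of the initial and accepting states is built into the construction. The substantive check is that the concrete transition rule of Section~\ref{sec:Fraisseautomata}---extend the current valuation $\eta$ to the annotation $\structA^*$ by sending $*$ to the input letter, then precompose with $s_q(\structA^*)$---recovers the original $\delta$ after transport along $\phi_q$ and $\phi_p$. This is exactly the content of the correspondences in Theorem~\ref{thm:cartesian} and Proposition~\ref{prop:funrepr} unrolled. The main obstacle is bookkeeping: one has to ensure that the choices of orbit representatives in Theorem~\ref{thm:cartesian}, of embeddings representing single-orbit equivariant maps in Proposition~\ref{prop:funrepr}, and of valuation representatives modulo the various $\equiv_{S_q}$'s are made coherently across all pairs $(q,\structA^*)$, so that the induced transition commutes with $\coprod_q\phi_q$ strictly and not merely up to residual local-symmetry ambiguity. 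Once that alignment is organised, the verification is a mechanical unfolding of definitions.
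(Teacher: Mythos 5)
Your proposal is correct and follows essentially the same route as the paper, which derives the theorem directly from Theorem~\ref{thm:cartesian} and Proposition~\ref{prop:funrepr} after unravelling the definition of a nominal $\GU$-automaton orbit by orbit (representations of the state orbits via Proposition~\ref{prop:reprFra}, annotations as orbits of $\sem{\structA_q,S_q}\times\DU$, and symbolic transitions as represented equivariant maps). Your additional remarks on the empty support of the initial state and on reachability via Proposition~\ref{prop:reachablenominal} are consistent with the paper's setup.
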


By Theorems~\ref{thm:Fra-char} and~\ref{thm:MNrev} one directly obtains:
\begin{cor}\label{cor:Fra-char}
For a well-behaved \Fra symmetry induced by a class $\KK$, 
the following conditions are equivalent for a $\GU$-language $L \subseteq {\DU}^*$:
\begin{enumerate}
\item[(1)] $L$ is recognized by a \GUDFA   
\item[(2)] $L$ is recognized by a 
\Fra $\KK$-automaton
\item[(3)] the syntactic quotient ${\DU}^*/{\equiv_L}$ is orbit finite.
\end{enumerate}
\end{cor}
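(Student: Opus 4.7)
The plan is to establish a cycle of implications by exploiting the two cited theorems: the nominal Myhill--Nerode theorem (Theorem~\ref{thm:MNrev}) gives the equivalence between (1) and (3) directly, while the \Fra representation theorem (Theorem~\ref{thm:Fra-char}) is used to pass between (1) and (2). Since $\DU$ is itself a single-orbit, and hence orbit finite, nominal $\GU$-set, Theorem~\ref{thm:MNrev} instantiated with $A = \DU$ immediately yields $(1)\Longleftrightarrow(3)$. So the task reduces to proving $(1)\Longleftrightarrow(2)$.

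For the implication $(2)\Longrightarrow(1)$, I would unfold Definition~\ref{def:Fraisseautomaton}: the \Fra $\KK$-automaton determines a $\GU$-set of configurations $X$ as in~\eqref{eq:configurations-U}, which is a finite disjoint union of single-orbit nominal sets and therefore orbit finite nominal. The symbolic transition function $s$ determines, via~\eqref{eq:Fratransfunc} and the construction immediately following Definition~\ref{def:Fraisseautomaton}, a well-defined equivariant function $\delta_s : X \times \DU \to X$; the initial configuration (the unique valuation of the empty structure $\structA_{q_I}$) is equivariant, and the set of accepting configurations is a union of orbits. Thus the \Fra $\KK$-automaton is literally a \GUDFA\ recognizing the same language, so $L$ satisfies~(1).

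For the implication $(1)\Longrightarrow(2)$, I would start from an arbitrary \GUDFA\ $\Aa$ recognizing $L$ and pass to its reachable sub-automaton $\Aa'$. Reachability is obtained by restricting the state space to the image of $w \mapsto \delta^*(q_I,w) : (\DU)^* \to Q$; by Lemma~\ref{lem:equiv-pres-supp} this image is nominal, and being an equivariant image of the orbit finite nominal set $Q$ it is also orbit finite. The automaton $\Aa'$ is a reachable orbit finite deterministic nominal $\GU$-automaton over the alphabet $\DU$ recognizing the same language $L$. Theorem~\ref{thm:Fra-char} now applies and yields a \Fra $\KK$-automaton isomorphic to $\Aa'$, which therefore recognizes $L$.

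The only mildly delicate point in this plan is the verification, in the $(2)\Longrightarrow(1)$ direction, that the configuration space~\eqref{eq:configurations-U} and the transition map~\eqref{eq:Fratransfunc} are genuinely orbit finite and equivariant, but this is exactly what the representation machinery of Section~\ref{sec:fraisse} was set up to guarantee: each $\fsem{\structA_q,S_q}$ is a single-orbit nominal set by Proposition~\ref{prop:reprFra}(1), and the construction of $\delta_s$ out of the data $s_q(\structA^*)$ is the elementary read-off of the equivariant function described by Theorem~\ref{thm:cartesian} combined with Proposition~\ref{prop:funrepr}. No genuine new argument is needed beyond assembling these pieces.
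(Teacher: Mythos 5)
Your proposal is correct and follows exactly the paper's route: the paper derives the corollary directly from Theorem~\ref{thm:MNrev} (for the equivalence of (1) and (3)) and Theorem~\ref{thm:Fra-char} (for the equivalence of (1) and (2)), which is precisely your decomposition. The extra details you supply --- passing to the reachable sub-automaton before invoking Theorem~\ref{thm:Fra-char}, and reading off a \GUDFA{} from a \Fra{} $\KK$-automaton via the construction following Definition~\ref{def:Fraisseautomaton} --- are exactly the steps the paper leaves implicit in its one-line justification.
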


\begin{exa}
For the equality symmetry, \Fra $\KK$-automata are very similar to finite memory automata studied in Section~\ref{sec:register-models}, with two differences:
\begin{itemize}
\item the number of registers varies from state to state (thus no need for undefined register values), 
\item symmetries are imposed on registers.
\end{itemize}
An even more similar model is that of history-dependent automata~\cite{P99}, where symmetries on local names were first introduced. For the equality symmetry, our \Fra $\KK$-automata are essentially a deterministic version of history-dependent automata. A connection of the latter with
finite memory automata has been tentatively made in~\cite{ciancia-tuosto}.

For the total order symmetry, a $\KK$-automaton has a totally ordered set of registers in each state, 
and valuations are monotonic.
These automata are capable of comparing data values with respect to data ordering.
It is easy to verify that \Fra $\KK$-automata (and hence also \GUDFA, by Thm~\ref{thm:Fra-char}) in this case are expressively equivalent to deterministic finite memory au\-to\-ma\-ta of~\cite{BLP10,FHL10} over totally ordered data, in the special case of a singleton alphabet.

For the graph symmetry, a $\KK$-automaton keeps a graph of registers in each state, and valuations are graph embeddings into the random graph. An automaton can test a newly read letter for edge connections with nodes stored in current registers. To our best knowledge, this kind of automaton has not been studied in the literature.
\end{exa}

\removed{
In this section, we study a totally ordered data domain. 
Formally, we fix the data values as the set of rational numbers, $\D = \Q$,
and consider the group $\Gmon$ of all bijections of $\Q$ that preserve
the natural ordering. The group contains, in particular, all linear functions
$x \mapsto a \cdot x + b$ for $a > 0$.
As far as the automaton model is concerned, nothing changes if we replace reals $\Q$ by, e.g., an open interval $(0,1)$,
real numbers or any other dense total order.

We now define a concrete automaton model that captures the notion of $\Gmon$-automata.

\paragrafik{Total-order register automata} An \emph{total-order register automaton} is like a register automaton, with an additional total order on the registers. That is, every set of register names comes with such an order.  Configurations are as in a register automaton, with the difference that valuations are required to be strictly monotone, which we write as 
\begin{equation}\label{eq:configurations-mono}
X =  \coprod_{q \in Q}	R_q \stackrel{mono} \to \D.
\end{equation}
As in the case of register automata, the transition function can be defined by a symbolic one. The new configuration depends on the old configuration, and the relative position, in the total order,  of the input letter as compared to the data values in the registers of the previous configuration. 
In particular, if $R_q = \set{r_1 \ldots r_k}$ then the domain of $s_q$ consists of the following $2k+1$ ``symbolic regions'':
\[
(-\infty, r_1), \ \set{r_1}, \ (r_1, r_2), \ \ldots, \ (r_{k-1}, r_k), \ \set{r_k}, \ (r_k,\infty) .
\]

There is an analogous result to Lemma~\ref{lem:symbolic-represent}, which says that every equivariant function can be represented in a  symbolic way.

In the case of $\Gmon$, we do not have any notion of register symmetry. This is not needed to get the following result:
\begin{theorem}\label{thm:permgrp-char}
Every reachable orbit finite $\Gmon$-automaton over input alphabet $\Q$ is isomorphic to a total-order register automaton.
\end{theorem}

The proof of this theorem is given in later sections.

\paragrafik{Variants}
An interesting variant arises if we restrict to only those monotonic permutations of reals
that preserve, say, $0$ (or, preserve a finite set of chosen values).
The induced automaton model can  test if the input number is positive, zero or negative. The automaton does not need to store $0$ explicitly  in its registers. In each state it is fixed which registers store negative (positive) values.

\subsection{Partial order automorphisms}
\label{sec:po}

Now we investigate a partially-ordered data domain.
Instead of considering a particular fixed order, we prefer to assume a \emph{universal} one.
It is well known (see for instance~\cite{F53}) that there exists a unique, up to isomorphism, countable partial order 
that contains any finite partial order as an (induced) suborder. We call it universal and denote by $(\D, \leq)$  in this section.
The group relevant for this case contains all bijections of $\D$ that are order-automorphisms, i.e.,
that preserve and reflect the order. Denote this group by $\Gpo$.

$\D$ is \emph{homogenous} in the following sense: if $C, D \subseteq \D$ are
two finite subsets and $f: C \to D$ is an isomorphism of suborders induced by $C$ and $D$,
then $f$ extends (not necessarily uniquely) to an order-automorphism of $\D$.

\begin{remark}
The cases of unordered and totally ordered data, investigated in Sections~\ref{sec:all-perms} and~\ref{sec:mono},
are in fact special cases, in the sense that the respective data domains are universal 
and homogenous, with the proviso that a class of finite discrete orders (i.e., plain finite sets) or the class of finite total orders, respectively,
is considered in place of the class of finite partial orders.
Consequently, the automaton model described below includes symmetry register automata and ordered register automata
as subclasses.

\end{remark}

\paragrafik{Partial-order register automata} A \emph{partial-order register automaton} is like a symmetry register automaton,
but additionally equipped with a partial order $\preceq_q$ on the set $R_q$ of registers in each state $q$.
Furthermore, the symmetry $S_q$ is required to contain only automorphisms of $(R_q, \preceq_q)$:
\[
S_q \leq \text{Aut}(R_q, \preceq_q)
\]
(cf.~(\ref{eq:sym}) in Section~\ref{sec:all-perms}).

The domain of a symbolic transition function $s_q$ is larger than in~(\ref{eq:transfunc}) as it must take into account
all possible order-tests of a newly read input letter against registers.
For convenience, we will use the following notation: an \emph{annotation} of partial order $(X, \leq)$ is either this order
with one distinguished element, denoted $*$, or any extension of $\leq$ to $X \cup \{*\}$.
Formally, the domain of $s_q$ contains all annotations of $(R_q, \preceq_q)$.
The function $s_q(R)$ is then any order-embedding
 \[
s_q(R) : R_p \to R, \text{ \ for some state } p,
\]
i.e., a function that preserves and reflects the order.

Now we explain how the symbolic transition function respects the symmetry $S$.
First observe that the automorphism group $S_q \leq \text{Aut}(R_q, \preceq_q)$ naturally extends
to any annotation of $(R_q, \preceq_q)$: if $* \in R_q$ then take only those automorphisms from $s_q$
that preserve $*$; otherwise extend all automorphisms from $s_q$ to preserve $*$.
\slcomm{tu mi pachnie jakimis minorami}
Denote this just described automorphism group by $S^*_q$.
The automaton is well defined if whenever the domain of $s_q(R)$ is $R_p$ then
each automorphism from $S^*_q$, restricted to the image set  $s_q(R)(R_p) \subseteq R$, is an image,
via $s_q(R)$, of some automorphism from $R_p$.

As in previous cases, we argue that the definition of a (well-defined) automaton induces naturally a $\Gpo$-automaton.
A valuation of registers $v : R_q \to \D$ in a state $q$ must be an order-embedding. 
We write:
\begin{equation}\label{eq:configurations-po}
X =  \coprod_{q \in Q}	R_q \stackrel{emb} \to \D.
\end{equation}
Note that equations~(\ref{eq:configurations}) and~(\ref{eq:configurations-mono}) are just the same 
as~(\ref{eq:configurations-po}), when all $R_q$ and $\D$ are a discrete or total order, respectively.
A configuration consists of a state $q$ and an equivalence class of valuations, similarly as in case of
symmetry register automata.
The transition function $\delta_s$ is induced analogously, due to well-definedness of the automaton.

\begin{theorem}\label{thm:permgrp-char}
Every reachable orbit finite $\Gpo$-automaton over input alphabet $\D$ is isomorphic to a partial-order register automaton.
\end{theorem}

} 


\bibliographystyle{plain}
\bibliography{bib}

\end{document}